\newtheorem{lemma}{Lemma}
\newtheorem{theorem}{Theorem}
\newtheorem{corollary}{Corollary}
\newtheorem{definition}{Definition}
\newtheorem{conjecture}{Conjecture}
\newtheorem{property}{Property}
\newtheorem{claim}{Claim}
\newcommand{\C}{\mathbb C}
\newcommand{\Cout}{\mathbb C_{\textsf{out}}}
\newcommand{\Cin}{\mathbb C_{\textsf{in}}}
\newcommand{\corrupted}{ y}
\newcommand{\alphabetout}{\Sigma_{\text{out}}}
\newcommand{\alphabetin}{\Sigma_{\text{in}}}
\newcommand{\ed}{\Delta_E}
\newcommand{\hamming}{\Delta_H}
\newcommand{\eps}{\varepsilon}
\newcommand{\poly}{\mathsf{poly}}
\newcommand{\polylog}{\mathsf{polylog}}
\newcommand{\lcs}{\mathsf{LCS}}
\newcommand{\enc}{\mathsf{Enc}}
\newcommand{\F}{\mathsf{F}}
\newcommand{\ent}{\mathsf{H}}
\newcommand{\authnote}[3]{\textcolor{#2}{{\sf (#1's Note: {\sl{#3}})}}}
\newcommand{\knote}{\authnote{Kuan}{cyan}}
\newcommand{\xnote}{\authnote{Xin}{blue}}
\begin{document} 

\title{Linear Insertion Deletion Codes in the High-Noise and High-Rate
Regimes}
\author{
Kuan Cheng \thanks{Department of Computer Science, Peking University, \texttt{ckkcdh@pku.edu.cn}. Supported by a start-up fund of Peking University.} 
\and Zhengzhong Jin \thanks{Massachusetts Institute of Technology, \texttt{zzjin@mit.edu}. Supported in part by DARPA under Agreement No. HR00112020023 and by an NSF grant CNS-2154149. Most of the work was done while the author was a PhD student at Johns Hopkins University, and supported by NSF CAREER Award CCF-1845349.}
\and Xin Li \thanks{Department of Computer Science, Johns Hopkins University, \texttt{lixints@cs.jhu.edu}. Supported by NSF CAREER Award CCF-1845349 and NSF Award CCF-2127575.}
\and Zhide Wei \thanks{Department of Computer Science, Peking University, \texttt{zhidewei@pku.edu.cn}.  Supported by a start-up fund of Peking University.}
\and Yu Zheng \thanks{Meta Platforms Inc, \texttt{hizzy1027@gmail.com}. Most of the work was done while the author was a PhD student at Johns Hopkins University, and partially supported by NSF Award CCF-2127575.}.
}
\date{}

\maketitle

\begin{abstract}
    This work continues the study of linear error correcting codes against adversarial insertion deletion errors (insdel errors). Previously, the work of Cheng, Guruswami, Haeupler, and Li \cite{CGHL21} showed the existence of asymptotically good linear insdel codes that can correct arbitrarily close to $1$ fraction of errors over some constant size alphabet, or achieve rate arbitrarily close to $1/2$ even over the binary alphabet. As shown in \cite{CGHL21}, these bounds are also the best possible. However, known explicit constructions in \cite{CGHL21}, and subsequent improved constructions by Con, Shpilka, and Tamo \cite{9770830} all fall short of meeting these bounds. Over any constant size alphabet, they can only achieve rate $< 1/8$ or correct $< 1/4$ fraction of errors; over the binary alphabet, they can only achieve rate $< 1/1216$ or correct $< 1/54$ fraction of errors. Apparently, previous techniques face inherent barriers to achieve rate better than $1/4$ or correct more than $1/2$ fraction of errors.

    In this work we give new constructions of such codes that meet these bounds, namely, asymptotically good linear insdel codes that can correct arbitrarily close to $1$ fraction of errors over some constant size alphabet, and binary asymptotically good linear insdel codes that can achieve rate arbitrarily close to $1/2$.\ All our constructions are efficiently encodable and decodable. Our constructions are based on a novel approach of code concatenation, which embeds the index information implicitly into codewords. This significantly differs from previous techniques and may be of independent interest. Finally, we also prove the existence of linear concatenated insdel codes with parameters that match random linear codes, and propose a conjecture about linear insdel codes. 
\end{abstract}
\section{Introduction}
Error correcting codes are fundamental objects in computer science and information theory. Starting from the seminal work of Shannon and Hamming, the study of error correcting codes has led to a deep understanding of how to ensure reliable communications in various noisy channels. Furthermore, error correcting codes have found rich applications in other seemingly unrelated areas such as complexity theory, learning theory, pseudorandomness and many more. Traditionally, the errors studied are either erasures (where a transmitted symbol is replaced by a `?') or symbol modifications (where a transmitted symbol is replaced by a different symbol), and they can be either random or adversarial. Through decades of effort, we now have an almost complete understanding of codes for such errors, and constructions with efficient encoding and decoding algorithms that match or are close to various known bounds. 

An important and more general type of errors, known as \emph{synchronization errors}, however, is much less understood. These errors include insertion and deletions (so we also call them insdel errors for short), which can cause the positions of received symbols to shift. On the other hand, they occur frequently in real world applications, including disk access, integrated circuits, communication networks and so on. They are also closely related to applications in computational biology and DNA-based storage systems \cite{Bornholt16, Olgica17}. Although the study of codes for such errors started around the same time as Shannon's works, progress has historically been slow due to the apparent difficulty of handling the loss of index information with such errors. For example, many basic questions, such as the capacity of the binary deletion channel with
deletion probability $p$ is still wide open, and the first explicit construction that has a constant rate and can correct a constant fraction of adversarial errors is not known until 1999 \cite{SZ99}.

From now on, we will focus exclusively on adversarial insdel errors. Over the past several years, with the development of new techniques such as \emph{synchronization strings} \cite{HaeuplerS17}, there has been a wave of new constructions of codes for these errors \cite{HaeuplerS17, HRS19LinearTimeInsdel, SimaB19, cheng2018deterministic, BGZ-journal, Haeupler19, SychnStringsInteractive,GW, BGH17, cheng_et_al:LIPIcs:2019:10613, HSS18ListInsdel, GHS-stoc20,hayashi2018list,wachter2017list,GW}. Some of them achieve excellent parameters, e.g., codes that approach the singleton bound over a large constant alphabet \cite{HaeuplerS17}, codes with almost optimal redundancy over the binary alphabet \cite{cheng2018deterministic, Haeupler19},  list-decodable codes over large alphabets that can correct more errors than the length of the codeword \cite{HSS18ListInsdel}, and list-decodable codes over any alphabet of positive rate for the information-theoretically largest possible combination of insertions and deletions \cite{GHS-stoc20,hayashi2018list,wachter2017list,GW}. However, none of the above constructions gives a \emph{linear} code, and the existence of asymptotically good linear codes for insdel errors over a constant size alphabet is not known until the work of Cheng, Guruswami, Haeupler, and Li \cite{CGHL21}.

The motivation of studying linear codes comes from several aspects. First, they have compact representations using either generator matrices or parity check matrices, which directly give efficient encoding and testing algorithms with running time $O(n^2)$. Second, such codes have simple structures, so they are often easier to analyze and allow one to use powerful techniques from linear algebra. Finally, linear codes have had great success in codes for erasures and symbol modifications, achieving some of the most well known constructions with (near) optimal parameters. Thus, one could ask if the same is true for insdel codes.

As is standard in the literature of error correcting codes, the two most important parameters of a linear insdel code are $\delta$, the fraction of insdel errors the code can correct; and $R$, the rate of the code, defined as the message length divided by the codeword length. In \cite{CGHL21}, the authors established several bounds regarding the tradeoff between these two parameters for linear insdel codes. First, they showed that any linear code correcting $\delta$ fraction of insdel errors must have rate at most $\frac{1}{2}(1 - \delta)$, regardless of the alphabet size. This is known as the \emph{half-singleton bound} and generalizes a previous result in \cite{AGFC07}, which shows that any linear code that can correct even a single deletion must have a rate of at most $1/2$. This bound shows a severe limitation of linear codes for insdel errors, as general codes can correct $\delta$ fraction of errors with $R$ approaching $1-\delta$. Taking into consideration the alphabet size $q$, this bound can be improved to $\frac{1}{2}(1 - \frac{q}{q-1}\delta) + o(1)$, which is known as the \emph{half-Plotkin bound}. On the other hand, the authors also showed that over the field $\F_q$, for any $\delta>0$ there exists a linear code family that can correct $\delta$ fraction of insdel errors, with rate $(1-\delta)/2-\ent(\delta) / \log_2 q$, where $\ent$ is the binary entropy function. In particular, this implies the existence of binary linear codes with rate $1/2 - \eps$ capable of correcting $\Omega(\eps \log^{-1} \frac{1}{\eps})$ fraction of insdel errors for any $\eps>0$; and linear insdel codes over $\F_q$ of rate $\frac{1}{2}(1 - \delta) - \eps $ capable of correcting any $\delta$-fraction of insdel errors, for a large enough $q =2^{ \Theta(\eps^{-1}) }$, which approaches the half-singleton bound. Hence, the rate can approach $1/2$ even over the binary alphabet, and the fraction of errors corrected can approach $1$ over a constant size alphabet, both of which are the best possible.

Going further, \cite{CGHL21} also constructed explicit asymptotically good linear insdel codes. However, the fraction of errors the code can correct and the rate of the code are both quite small. \cite{CGHL21} did not specify these constants, but a rough estimate shows that the code has $\delta< 1/400$ and $R < 2^{-80}$. Thus a natural question left in their work is to improve these parameters.

Recently, a subsequent work by Con, Shpilka, and Tamo \cite{9770830} made progress in this direction.\ For a field $\F_q$ with $q=\poly(1/\eps)$, they constructed explicit linear insdel codes that can correct $\delta$ fraction of errors with rate $R=(1-4 \delta)/8-\eps$. For the field $\F_2$ their explicit linear code can correct $\delta$ fraction of errors with rate $R=(1-54 \delta)/1216$. Hence, for a constant size alphabet their construction can achieve $\delta < 1/4$ with a positive $R$, or $R < 1/8$ with a positive $\delta$. For the binary alphabet, their construction can achieve $\delta < 1/54$ with a positive $R$, or $R < 1/1216$ with a positive $\delta$. One caveat is that their codes over the binary alphabet can only decode efficiently from deletions (although they can also decode from insertions inefficiently), while their codes over the large alphabet can decode efficiently from both deletions and insertions. In another work by the same authors \cite{9834672}, they also showed the existence of Reed-Solomon codes over a field of size $n^{O(k)}$ that have message length $k$, codeword length $n$, and can correct $n-2k+1$ insdel errors. This achieves the half-singleton bound. They complemented the existential result by providing a deterministic construction over a field of size $n^{k^{O(k)}}$, which runs in polynomial time for $k=O(\log n/\log \log n)$. Nevertheless, in this paper we only focus on the case of a constant alphabet size. 

In summary, all known explicit constructions over constant size alphabets fall short of getting rate close to $1/2$, or getting the fraction of errors correctable close to $1$. In fact, previous techniques seem to face inherent barriers to achieve rate better than $1/4$ or correct more than $1/2$ fraction of errors, which we will talk about in more details when we give an overview of our techniques.

\subsection{Our Results}
In this paper we further improve the fraction of errors $\delta$ and the rate $R$ that can be achieved by linear insdel codes with efficient encoding and decoding algorithms. In the case of high noise, we give explicit constructions of insdel codes with positive rate that can correct $\delta$ fraction of errors with $\delta$ arbitrarily close to $1$, over a constant size alphabet. In the case of high rate, we give explicit constructions of insdel codes that can achieve rate arbitrarily close to $1/2$ and correct a positive constant fraction of errors, over the binary alphabet.\footnote{It's also easy to generalize our constructions to larger alphabet size, but for clarity we omit the details in this version.} Specifically, we have the following theorems.

\begin{theorem}[High noise]
For any constant $\eps>0$ there exists an efficient construction of linear insdel codes over an alphabet of size $\poly(1/\eps)$, with rate $\Omega(\eps^2)$ that can correct $1-\eps$ fraction of insdel errors (possibly inefficiently).
\end{theorem}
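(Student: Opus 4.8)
The plan is to use code concatenation, but with inner codes that carry their own block index \emph{implicitly}, so that the overall code stays linear. Concretely, I would pick an explicit outer linear code $\Cout$ over an alphabet $\alphabetout$ (whose size is allowed to exceed the outer block length) with rate $\Theta(\eps)$ that corrects a $1-O(\eps)$ fraction of erasures -- a Reed--Solomon code does the job, since at the outer level we only need erasure (or Hamming) correction, which is classical. For the inner part I would use, for \emph{each} outer coordinate $i$, a \emph{different} linear encoding map $E_i\colon \alphabetout \to \alphabetin^{m}$ over $\F_q$ with $q=\poly(1/\eps)$, drawn from a single family $\{E_i\}$ designed so that: (a) each $E_i$ is a linear insdel code correcting a $1-O(\eps)$ fraction of insertions and deletions within one block; and (b) the images are mutually ``insdel-distinguishable'' -- from any string within a $1-O(\eps)$ fraction of insdels of some $E_i(v)$ with $v\neq 0$ one can recover both the index $i$ and the symbol $v$. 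The final code sends a message through $\Cout$ to $(c_1,\dots,c_n)$ and then to $(E_1(c_1),\dots,E_n(c_n))$; since every $E_i$ is linear this composite map is linear, and the positional information now lives in the \emph{choice} of encoding map rather than in an appended index string. This is exactly the feature that lets us sidestep the (inherently non-linear) synchronization strings that capped all previous constructions below rate $1/4$.

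Decoding (which is permitted to be inefficient) proceeds by sliding a window of width about $m$ across the received word $\corrupted$; at each window position I run the inner decoder from (b), and whenever it succeeds I record its output $(i,v)$ as a guess for outer coordinate $i$. Coordinates that get no consistent guess, or conflicting guesses, are declared erasures. Because the inner code tolerates a $1-O(\eps)$ fraction of insdels per block, a block is lost only if it (together with spillover from its neighbours) absorbs more than a $1-O(\eps)$ fraction of its own length in insdels; a counting argument against the global budget $(1-\eps)N$ then shows at most a $1-\Theta(\eps)$ fraction of outer coordinates are lost, so at least a $\Theta(\eps)$ fraction of outer symbols are recovered correctly -- precisely the erasure fraction $\Cout$ tolerates -- and the outer decoder finishes.

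For the parameters: since each $E_i$ is a \emph{linear} code correcting a $1-O(\eps)$ fraction of insdels, the half-singleton (half-Plotkin) bound caps its rate at $\Theta(\eps)$, and a Gilbert--Varshamov-type count over an alphabet of size $\poly(1/\eps)$ shows rate $\Theta(\eps)$ is attainable even with the extra property (b); the outer code also has rate $\Theta(\eps)$; hence the concatenated code has rate $\Theta(\eps^2)$ over the alphabet $\alphabetin$ of size $\poly(1/\eps)$, as claimed. The gap between $\Theta(\eps^2)$ and the half-singleton-optimal $\Theta(\eps)$ is simply the cost of paying a factor $\Theta(\eps)$ at each of the two concatenation layers; the point here is to obtain \emph{any} positive rate while correcting a $1-\eps$ fraction of errors.

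The main obstacle is item (b): proving that a family $\{E_i\}$ of \emph{linear} inner encodings exists with simultaneously good per-block insdel distance and mutual index-distinguishability, \emph{and} that it admits an efficient construction. A naive search is hopeless -- even a single inner code has length $\Theta(\log N)$, so brute force over linear codes of that length is only quasi-polynomial -- so the family must be endowed with enough algebraic structure (for instance, built from one short exhaustively-searched insdel code together with a position-dependent but linearity-preserving transformation) that both the existence/distance analysis and the construction go through in polynomial time. A secondary difficulty is the synchronization analysis in decoding: because block boundaries are erased, one has to bound carefully how many outer coordinates a given insdel budget can destroy once boundary spillover between adjacent blocks is accounted for, and argue that the window-sliding decoder genuinely catches every sufficiently intact block (possibly via list-decoding at the inner level and resolving spurious candidates using the outer code's error-correction).
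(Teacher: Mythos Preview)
Your high-level architecture matches the paper exactly: Reed--Solomon outer code of rate $\Theta(\eps)$, a \emph{different} linear inner code $E_i$ at every outer coordinate so that position information is carried implicitly, inner alphabet of size $\poly(1/\eps)$, and overall rate $\Theta(\eps^2)$. Your property (b), restricted to $v\neq 0$, is precisely what the paper formalizes as Property~1: for any $x\in\Cin^i$, $y\in\Cin^j$ with $x\neq y$ (and not both zero when $i\neq j$), one has $\lcs(x,y)\le \gamma n'$.

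The genuine gap is the construction of $\{E_i\}$. You flag this as the main obstacle and gesture at ``algebraic structure'' or a single short code twisted by position-dependent linear maps, but neither suggestion leads anywhere concrete, and your complexity accounting is slightly off: a \emph{single} inner code of block length $\Theta(\log n)$ can indeed be brute-forced in quasi-polynomial time, but you need $n$ of them with a \emph{joint} pairwise-distinguishability property, so the naive search space has size $2^{\Theta(n\log^2 n)}$. The paper's solution is to sample all $n$ generator matrices from one $\tau$-biased distribution over $\{0,1\}^{\Theta(n\log^2 n)}$ with $\tau=1/\poly(n)$. The crucial observation making this work is \emph{locality}: Property~1 only constrains pairs of inner codewords, and each such pair depends on $O(\log^2 n)$ bits of the seed. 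An XOR lemma shows the restriction of the $\tau$-biased source to any such pair is $1/\poly(n)$-close to uniform, the union bound over $\poly(n)$ pairs goes through, and one exhaustively searches the $\poly(n)$-size seed space. No algebraic structure on the $E_i$ is used or needed.

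A secondary point: for this theorem (inefficient decoding permitted) the paper does not argue via a decoder at all. It proves directly that any two concatenated codewords $C_1,C_2$ satisfy $\lcs(C_1,C_2)<6\gamma N$, by partitioning an LCS into segments according to the inner blocks of $C_1$ and invoking Property~1 on each segment; the only ``bad'' blocks are those where both sides see the zero inner codeword or the same inner codeword from the same $\Cin^i$, and the outer distance caps each of these at $\gamma n$ occurrences. Your sliding-window decoding sketch is morally fine but strictly harder to make rigorous here (you would need to control false positives from the window decoder and handle zero blocks explicitly), whereas the direct LCS bound is a two-paragraph computation once Property~1 is in hand.
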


With efficient decoding, the rate becomes slightly worse.

\begin{theorem}[High noise]
    For any constant $\eps >0$, there is a family of linear codes with rate $\Omega(\eps^4)$ and alphabet size $\poly(1/\eps)$, that can be encoded in polynomial time and decoded from up to $1-\eps$ fraction of insdel errors in polynomial time.
\end{theorem}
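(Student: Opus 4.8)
The plan is to build the code by concatenation and to make the decoder efficient by two moves: keep the inner block length a constant (depending only on $\eps$) so that inner encoding and decoding are exhaustive-search operations that nonetheless run in $O_\eps(1)$ time; and embed into the codeword an \emph{implicit index} so that the decoder can recover the inner-block structure of the received word in polynomial time without breaking linearity. Concretely, take the outer code $\Cout$ to be a standard efficient $\F_q$-linear code over $q=\poly(1/\eps)$ of rate $\Omega(\eps)$ --- e.g.\ a Reed--Solomon code --- chosen so that it corrects a $1-\Omega(\eps)$ fraction of \emph{erasures} plus a small constant fraction of symbol errors; the whole point of the implicit index is to convert the insdel corruption that the outer symbols undergo into erasures-plus-few-errors, which a plain linear code handles. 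Take the inner code $\Cin$ to be an $\F_{q'}$-linear insdel code of constant block length $b=O_\eps(1)$, rate $\Omega(\eps)$, that decodes from a $1-\eps/2$ fraction of insdel errors; such a code exists by the existential bound quoted above, $R\ge (1-\delta)/2-\ent(\delta)/\log_2 q'$ with $\delta=1-\eps/2$ and $q'=\poly(1/\eps)$, and since $b$ is constant it can be found by brute force and decoded by minimum-insdel-distance decoding, both in $O_\eps(1)$ time. An encoding is $x\mapsto(\Cin(c_1),\dots,\Cin(c_n))$ where $(c_1,\dots,c_n)=\Cout(x)$ and $\Cin$ carries one outer symbol; the composition is $\F_q$-linear as long as $\Cin$ acts $\F_q$-linearly. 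This parallels the inefficient construction, where the inner codes and the block boundaries of the received word were instead found by brute force over all alignments.

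The implicit index is the crux, and here is how I would handle it. One cannot append a position counter or a synchronization string to each inner block, because the zero message must encode to the all-zero string; and an absolute index would cost $\Theta(\log n)$ symbols, which does not fit into a constant-length block. Instead I would fix a synchronization string over an alphabet of size $\poly(1/\eps)$ indexed by the outer positions, and realize its $i$-th symbol $\sigma_i$ inside the $i$-th inner block not by addition but by a position-dependent \emph{invertible linear transformation} $A_{\sigma_i}$ applied to $\Cin(c_i)$, the family $\{A_\sigma\}$ being designed so that for almost all nonzero inputs the blocks produced by different sync symbols lie far apart in insdel distance. This keeps the whole map linear (since $c_i\mapsto A_{\sigma_i}\Cin(c_i)$ is linear and $A_{\sigma_i}\cdot 0=0$), while letting the decoder, after a coarse segmentation of the received string, read off a reliable guess of $\sigma_i$ --- hence of the outer index $i$ --- from each sufficiently intact block; blocks that were essentially destroyed (or happen to be $0$) are simply flagged as erased. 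A synchronization-string reconstruction step (an approximate longest-common-subsequence / matching computation) then assigns outer indices to the surviving blocks so that all but an $O(\eps)$ fraction of outer positions are either correctly recovered or correctly erased, and a final call to the efficient outer decoder reconstructs $x$. Every step is polynomial time: $O_\eps(1)$ per inner block, polynomial for the matching, polynomial for Reed--Solomon decoding.

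The hard part will be the joint design and analysis in the previous paragraph: exhibiting a family $\{A_\sigma\}$ that is simultaneously $\F_q$-linear, robustly identifiable from a block that has absorbed up to a $1-\Theta(\eps)$ fraction of adversarial insdel errors, and economical in rate; and then carrying out the adversarial accounting that bounds, over all ways the $(1-\eps)$-fraction insdel budget can be split among the $n$ inner blocks and within the index-carrying coordinates, the number of outer positions that are neither correctly decoded nor correctly erased (so that the outer erasure-plus-error decoder still succeeds). I expect the redundancy spent on making each block carry an identifiable sync symbol, together with the fact that the efficient decoder forces looser error tolerances on the inner and outer pieces than a brute-force decoder would, to account for the gap between the $\Omega(\eps^2)$ rate of the inefficient construction and the $\Omega(\eps^4)$ rate claimed here; pinning down the exact exponent is a matter of optimizing the parameters of all three components once the above design is in place.
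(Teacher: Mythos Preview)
Your proposal takes a different route from the paper and has a real gap at the segmentation step.

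The paper does not use a synchronization string. It takes $n$ \emph{distinct} linear inner codes $\Cin^1,\dots,\Cin^n$ of block length $n'=\Theta(\log n)$ (not $O_\eps(1)$), found by derandomizing random linear codes with a small-biased sample space, with the property that any two nonzero codewords from different inner codes---or different codewords from the same inner code---have LCS at most $\gamma n'$. The outer-position index is thus carried by \emph{which} inner code was used; there is no separate sync symbol to extract or match. Decoding is a single dynamic program: over all cuts of the received word $y$ into $n$ consecutive (possibly empty) pieces $y_1\circ\cdots\circ y_n$, minimize $\sum_i \min_{c\in\Cin^i}\ed(y_i,c)$; then list-decode the resulting outer word with Guruswami--Sudan and pick from the list the codeword closest to $y$ in edit distance. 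The rate drop from $\Omega(\eps^2)$ to $\Omega(\eps^4)$ is not due to extra index-carrying redundancy as you guess; it is because the list-decoding radius of a rate-$\gamma$ Reed--Solomon code is only $1-\sqrt{\gamma}$, and the analysis of the DP (Lemma~\ref{lem:errorbound}) forces $\sqrt{\gamma}=O(\eps)$, hence $\gamma=\Theta(\eps^2)$.

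Your plan instead revives the Con--Shpilka--Tamo idea of embedding a sync symbol $\sigma_i$ linearly into block $i$, but with the zero buffers removed. The gap is the unexplained ``coarse segmentation''. In that prior work the buffers are precisely what make segmentation possible, and they are also what cap the decoding radius below $1/2$; once you remove them you must say how the decoder cuts a string that has absorbed a $1-\eps$ fraction of adversarial insdels into blocks \emph{before} it can read any $\sigma_i$. Local boundary cues do not survive this regime, and with constant block length a single edit shifts every downstream boundary. You would end up needing a global DP over partitions anyway---at which point the sync string is redundant, because the DP can simply score piece $i$ against the position-$i$ inner code directly. That is essentially the paper's decoder, but it cannot run with your constant-length blocks: to have $n$ pairwise edit-far inner codes over a $\poly(1/\eps)$-size alphabet one needs block length $\Omega(\log n)$, which is why the paper's inner codes are logarithmic and why it needs the small-biased derandomization rather than brute force.
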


\begin{theorem}[High rate]
For any constant $\eps > 0$, there is a family of binary linear codes with rate $1/2 - \eps$, that can be encoded in polynomial time and decoded from $\Omega(\eps^3 \log^{-1}\frac{1}{\eps})$ fraction of insdel errors in polynomial time. 
\end{theorem}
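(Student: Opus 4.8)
The plan is to obtain the code by a single layer of code concatenation, reusing the implicit-indexing idea from the high-noise constructions above but tuning the two components toward rate $1/2$. I would fix an auxiliary constant $\eps_0=\Theta(\eps)$, take as the outer code $\Cout$ an explicit $\F_2$-linear code over a large but constant alphabet $\F_{2^b}$ with $b=\Theta(\eps_0^{-1}\log\tfrac1{\eps_0})$, of rate $1-O(\eps_0)$, that can be encoded and uniquely decoded from an $\Omega(\eps_0^2)$ fraction of symbol errors and erasures in near-linear time --- a Sipser--Spielman expander code over $\F_{2^b}$ (or any explicit near-MDS such code) has exactly this shape --- and take as the inner code $\Cin$ a binary linear insdel code of message length $b$ and block length $m=b/(\tfrac12-\eps_0)=\Theta(\eps_0^{-1}\log\tfrac1{\eps_0})$, hence rate $\tfrac12-\eps_0$, correcting an $\Omega(\eps_0\log^{-1}\tfrac1{\eps_0})$ fraction of insertions and deletions. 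The existence of such an inner code is precisely the binary existential statement recalled in the introduction, and since $m$ depends only on $\eps$, a generator matrix for it can be found by brute force over the constantly many binary $b\times m$ matrices, certifying the insdel distance by computing longest common subsequences over all pairs of codewords. Encoding a message by applying $\enc_{\mathrm{in}}$ block-by-block to $\enc_{\mathrm{out}}$ of the message gives an $\F_2$-linear code of rate $(1-O(\eps_0))(\tfrac12-\eps_0)=\tfrac12-O(\eps_0)$, so choosing $\eps_0$ a small enough constant multiple of $\eps$ yields rate $\tfrac12-\eps$.

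The only real departure from a textbook concatenation is that, after insertions and deletions, the decoder does not know where each length-$m$ inner block starts. Attaching an explicit synchronization-string symbol to every block as a separate layer would cost a constant factor in rate over the binary alphabet --- the barrier noted earlier --- so instead, following the high-noise constructions, I would fold a constant amount of positional information into each inner block; since $b$ (hence $m$) is as large as $\Theta(\eps^{-1}\log\tfrac1\eps)$, this costs only an $O((\log\tfrac1\eps)/m)=O(\eps)$ fraction of each block, and the concatenation must be arranged so that this information survives, and is recoverable by the decoder, without destroying the $\F_2$-linearity of the composed code. This implicit embedding of the index is where the technical content concentrates.

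Decoding would run in three stages. First, a sliding-window pass over the received word applies the inner insdel decoder to candidate blocks, interleaved with a Haeupler--Shahrasbi-style synchronization-string re-alignment, producing for almost every inner block either its index together with a candidate outer symbol, or an erasure. Second, one argues that a $\gamma$ fraction of insertions and deletions on a word of length $mN$ spoils at most $O(\gamma/\delta_{\mathrm{in}})\cdot N$ inner blocks (a block is spoiled only if it sustains more than $\delta_{\mathrm{in}}m$ insdel errors or is caught in a synchronization shift, where $\delta_{\mathrm{in}}=\Omega(\eps\log^{-1}\tfrac1\eps)$), so the recovered word of $N$ outer symbols has at most an $O(\gamma/\delta_{\mathrm{in}})$ fraction of errors-plus-erasures. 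Third, running the $\Cout$ decoder recovers the outer codeword, and hence the message, provided $O(\gamma/\delta_{\mathrm{in}})<\delta_{\mathrm{out}}=\Omega(\eps^2)$, i.e. provided $\gamma=O(\eps^2\delta_{\mathrm{in}})=\Omega(\eps^3\log^{-1}\tfrac1\eps)$. All three stages run in polynomial time.

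I expect the main obstacle to be the implicit synchronization itself: designing a mechanism that simultaneously (a) adds only an $O(\eps)$-fraction overhead per inner block, (b) permits re-synchronization even though insertions and deletions occur \emph{inside} blocks --- so the inner insdel decoder and the global synchronization decoder cannot be run in sequence but must be combined --- and (c) keeps the composed code genuinely $\F_2$-linear rather than an affine coset of one. These three constraints are exactly what force the extra factor $\eps$ (the overhead pushes the block length to $\Theta(\eps^{-1}\log\tfrac1\eps)$, and the outer expander code only corrects an $\Omega(\eps^2)$ fraction) and the factor $\log^{-1}\tfrac1\eps$ (inherited from the binary linear inner insdel code); with the framework from the high-noise theorems in hand, the remaining arguments should be routine.
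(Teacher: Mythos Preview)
Your proposal has a genuine gap at exactly the point you flag as ``the main obstacle'': you never actually specify a mechanism for embedding synchronization information that keeps the composed code $\F_2$-linear. You propose a \emph{single} constant-size inner code $\Cin$ (found by brute force over $b\times m$ matrices with $m=\Theta(\eps^{-1}\log\tfrac1\eps)$) and then say you will ``fold a constant amount of positional information into each inner block'' via a Haeupler--Shahrasbi-style layer. But synchronization strings are fixed strings, and as the introduction discusses, every known way of attaching them linearly (buffers of $0$'s as in \cite{CGHL21}, or the $(x,a\cdot x)$ trick plus buffers as in \cite{9770830}) caps the rate well below $1/2$. Your final paragraph identifies constraints (a)--(c) correctly but does not propose a construction meeting them; the sentence ``with the framework from the high-noise theorems in hand, the remaining arguments should be routine'' is where the proof is missing.

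The paper's route is structurally different from yours. It does \emph{not} use one inner code plus a separate synchronization layer; instead it takes the outer code to be Reed--Solomon over alphabet of size $n$ and uses $n$ \emph{distinct} linear inner codes $\Cin^1,\ldots,\Cin^n$, each of block length $n'=\Theta(\log n)$ (not a constant), so that the choice of inner code \emph{is} the position information and linearity is automatic. The entire difficulty then becomes a local combinatorial property of these inner codes (Property~\ref{prop:propertymain-s}): for any $t=\Theta(\eps^{-2}\log\tfrac1\eps)$ consecutive inner codewords $w$ and any $t{+}1$ consecutive inner codewords $u$, if enough blocks of $w$ (or $u$) are ``unique'' relative to the other, then $w$ is far in edit distance from every substring of $u$. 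Achieving this with inner rate $1/2-\Theta(\eps)$ requires the $2s{+}1$-codeword union-bound argument of Section~\ref{sec:high rate 2} (the $s=1$ case only reaches rate $1/3$), established first for random linear inner codes and then derandomized via an $\eps$-biased space. Decoding is a dynamic program that finds a maximum ``segment matching'' against all $\bigcirc_{j=i}^{i+t-1}\Cin^j$, followed by Reed--Solomon list decoding. None of this is visible from a single constant-size inner code, and a constant-size alphabet for the outer code cannot support $n$ genuinely distinguishable inner codes.
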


Our constructions are based on code concatenation. We complement our explicit constructions by showing that there exist linear concatenated codes that match the parameters of random linear codes. These constructions can be considered in a sense ``semi-explicit" since the outer code is explicit, and we only need to find explicit inner codes.

\begin{theorem}\label{thm:existence}
For any field $\F_{q_0}$ and any constant $\delta>0$, there exists a family of linear concatenated code over $\F_{q_0}$ where the outer code is a Reed-Solomon code, such that the code has rate $\frac{1}{2}(1- \delta)- \ent(\delta)/\log q_0-o(1)$ and can correct $\delta$ fraction of insdel errors, where $\ent()$ is the binary entropy function. 
\end{theorem}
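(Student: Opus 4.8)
The plan is to use the probabilistic method over the choice of inner codes, combining a Reed–Solomon outer code with independently (or carefully) chosen random linear inner codes, one for each outer coordinate, and to argue that with positive probability the concatenated code has the claimed insdel distance. The key structural fact to exploit is that insdel distance between concatenated codewords can be lower bounded via a ``common subsequence'' / matching argument at the block level: if two concatenated codewords $c,c'$ have a long common subsequence, then for all but a few blocks, the subsequence must align a substantial portion of the $i$-th inner block of $c$ with some block of $c'$. First I would set up the matching machinery: fix a longest common subsequence of $c$ and $c'$, let it induce a monotone partial matching on the $N$ outer coordinates, and classify inner blocks as ``good'' (matched to a single block, with most symbols preserved) or ``bad.'' Since the outer RS code has relative distance $1-R_{\mathrm{out}}$, at least $(1-R_{\mathrm{out}})N$ outer coordinates differ; for a good block at such a coordinate, the inner code's own insdel distance forces many insdel operations, and summing these over good blocks gives the bound.

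The quantitative heart is a counting argument showing that a random linear inner code of the right rate has, with overwhelming probability, large insdel distance — specifically relative insdel distance close to $1-R_{\mathrm{in}} - 2\ent(\delta_{\mathrm{in}})/\log q_0$ type guarantees (the factor of two being the linear-code penalty, cf. the half-Singleton phenomenon). I would prove this by a union bound: the number of pairs of distinct nonzero messages is at most $q_0^{2k_{\mathrm{in}}}$, and for a fixed pair the probability that the two inner codewords are within insdel distance $d_{\mathrm{in}}$ is at most (number of length-pairs of common-subsequence length $\geq n_{\mathrm{in}}-d_{\mathrm{in}}$) $\times\, q_0^{-(n_{\mathrm{in}}-2k_{\mathrm{in}}-\text{something})}$, using that for a random linear code the codeword of a fixed nonzero message is uniform and that fixing a subsequence pins down enough coordinates. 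Balancing the entropy term $\binom{n_{\mathrm{in}}}{d_{\mathrm{in}}}^2 \approx 2^{2n_{\mathrm{in}}\ent(\delta_{\mathrm{in}})}$ against the $q_0^{-(n_{\mathrm{in}}-2k_{\mathrm{in}})}$ savings yields the rate $\tfrac12(1-\delta)-\ent(\delta)/\log q_0 - o(1)$ after optimizing $\delta_{\mathrm{in}}$, $R_{\mathrm{in}}$, $R_{\mathrm{out}}$ with $R_{\mathrm{out}}\to 1$, $\delta_{\mathrm{in}}\to\delta$. A subtlety is that we want a single inner code (or a uniformly bounded family) rather than $N$ different ones — I would either argue that a random linear code works for a single block and then use the same code in every block (the block-level matching argument only needs each individual inner code to be good), or average over a random choice and extract one good code; since the outer alphabet and inner block length are constants while $N\to\infty$, this is cheap.

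The main obstacle I anticipate is the interaction between block boundaries and the adversary: a longest common subsequence of the concatenated strings need not respect block boundaries at all, so a single inner block of $c$ can be matched against a window spanning two consecutive blocks of $c'$, and conversely the adversary can ``pay once'' to shift all subsequent block boundaries. Handling this requires the standard synchronization-string-free accounting: charge each block that is split across a boundary (or receives content from two source blocks) to the insdel budget, show the number of such ``broken'' blocks is at most $O(d/n_{\mathrm{in}})$ where $d$ is the total number of insdel errors, and then on the remaining $(1-o(1))$-fraction of cleanly-matched blocks at outer-distance coordinates, apply the inner code guarantee. Making the matching/charging argument tight enough that the lost fraction is $o(1)$ rather than a constant — so that the final rate genuinely matches the Gilbert–Varshamov-type bound for linear insdel codes with no constant-factor loss — is where the care goes; I would likely reuse a lemma of the type already developed in \cite{CGHL21} for analyzing insdel distance of concatenated codes, if such a lemma appears earlier in the paper, and otherwise prove the block-matching decomposition from scratch as the first technical lemma.
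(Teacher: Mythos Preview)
Your proposal has a genuine gap: the parameter regime you aim for ($R_{\mathrm{out}}\to 1$, a single inner code with insdel distance $\delta_{\mathrm{in}}\to\delta$) cannot reach the target rate. If the outer code has relative Hamming distance $\delta_{\mathrm{out}}=1-R_{\mathrm{out}}$, two concatenated codewords can agree on $(1-\delta_{\mathrm{out}})n$ inner blocks exactly, and on each of the remaining $\delta_{\mathrm{out}}n$ blocks the LCS is at least $(1-\delta_{\mathrm{in}})n'$. Hence the overall LCS is at least $(1-\delta_{\mathrm{out}}\delta_{\mathrm{in}})N$, so the relative insdel distance is at most $2\delta_{\mathrm{out}}\delta_{\mathrm{in}}$. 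To correct a $\delta$ fraction of insdel errors you need $\delta_{\mathrm{out}}\delta_{\mathrm{in}}\gtrsim\delta$, and maximizing $(1-\delta_{\mathrm{out}})\cdot R_{\mathrm{in}}$ under this product constraint never gets you close to $\tfrac12(1-\delta)-\ent(\delta)/\log q_0$; sending $R_{\mathrm{out}}\to 1$ forces $\delta_{\mathrm{in}}\to 1$ and kills the inner rate. Your block-matching/charging outline, even if carried out losslessly, runs into this product barrier, and the suggestion to collapse to a single inner code is exactly the wrong move.

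The paper's proof is in a different regime and uses a different mechanism. The outer code is an $[n,(1-\gamma)n/2,(1+\gamma)n/2]_q$ Reed--Solomon code with $q=\Theta(n)$ (so outer rate \emph{below} $1/2$), every inner code has rate~$1$ (a random $\F_{q_0}$-linear bijection $\F_q\to\F_{q_0}^{n'}$), and the $n$ inner codes are independent. There is no block-level distance transfer at all; instead the argument is a direct Thommesen-style union bound over pairs of concatenated codewords and over all monotone matchings of size $\ell=(1-\delta)N$. The RS weight enumerator bound $\#\{w\text{-weight codewords}\}\le\binom{n}{w}q^{w-d+1}$ controls the number of possible $C_1-C_2$; given the difference, one counts $C_1$ separately, tracking the number $t$ of ``unique zero'' outer symbols (which pin down $t$ coordinates of $C_1$) and the number $s$ of blocks where a same-position match lands on identical symbols. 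After stripping at most $sn'+tn'/2$ matches from a putative matching, every remaining match equates two symbols that are independent uniform in $\F_{q_0}$ (this is where the independence of the $n$ inner codes is essential), giving probability $q_0^{-(\ell-sn'-tn'/2)}$. The telescoping in the exponent collapses to $q_0^{(k-d+1+\delta n)n'}$, and setting $\gamma=\delta+2\ent(\delta)/\log q_0+o(1)$ makes the whole sum $o(1)$. The rate $\tfrac12(1-\gamma)$ then equals the claimed bound. The point is that the ``half'' in the rate comes from the outer code, not the inner one, and the tight entropy accounting is possible only because the analysis is global rather than block-by-block.
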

We emphasize that the inner codes here may be different for different positions.
So if one wants to use brute force to search for a sequence of proper inner codes, then this may take time at least $2^{n \log^2 n}$ where $n$ is the length of the outer codewords.

This theorem implies the following corollaries.

\begin{corollary}
For any constant $\delta > 0$, there exists a family of binary linear concatenated code where the outer code is a Reed-Solomon code, such that the code has rate $\frac{1}{2}(1- \delta) $ and can correct $\Omega(\delta\log^{-1}\frac{1}{\delta})$ fraction of insdel errors.
\end{corollary}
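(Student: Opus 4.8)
The plan is to obtain this as a direct consequence of Theorem~\ref{thm:existence} with $q_0 = 2$, but applied with an error fraction that is deliberately \emph{smaller} than the target fraction. Plugging $q_0 = 2$ and error fraction $\delta$ into Theorem~\ref{thm:existence} gives rate $\frac12(1-\delta) - \ent(\delta) - o(1)$, and the subtracted term $\ent(\delta) = \Theta(\delta\log\frac1\delta)$ is far too large---it wipes out the leading $\frac12(1-\delta)$ we are after. The remedy is to invoke the theorem with a much smaller error parameter $\delta^\star = \delta^\star(\delta)$ chosen so that the entropy overhead $\ent(\delta^\star)$ becomes negligible compared with $\delta$, while the resulting code still corrects a $\delta^\star = \Omega(\delta\log^{-1}\frac1\delta)$ fraction of insdel errors.

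Concretely, I would set $\delta^\star := c\,\delta / \log_2\frac1\delta$ for a sufficiently small absolute constant $c>0$, and apply Theorem~\ref{thm:existence} over $\F_2$ with error fraction $\delta^\star$, yielding a family of binary linear concatenated codes with Reed--Solomon outer code that correct a $\delta^\star$ fraction of insdel errors and have rate $\frac12(1-\delta^\star) - \ent(\delta^\star) - o(1)$. It then suffices to check that $\delta^\star + 2\ent(\delta^\star) \le \delta/2$, so that this rate is at least $\frac12(1-\delta) + \delta/4 - o(1) \ge \frac12(1-\delta)$ for all sufficiently large block lengths. Using the elementary estimate $\ent(x) \le x\log_2\frac1x + O(x)$ for small $x$, together with $\log_2\frac1{\delta^\star} = \log_2\frac1\delta + \log_2\frac{\log_2(1/\delta)}{c} \le 2\log_2\frac1\delta$ once $\delta$ is small, one gets $\delta^\star + 2\ent(\delta^\star) \le O(\delta^\star \log_2\frac1{\delta^\star}) \le O(c)\cdot\delta \le \delta/2$ for $c$ small enough; and the bound $\Omega(\delta\log^{-1}\frac1\delta)$ is only claimed for small $\delta$, so we may assume $\delta$ is below any fixed threshold.

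Finally, to land on the exact rate $\frac12(1-\delta)$ rather than $\frac12(1-\delta)-o(1)$, I would trim the Reed--Solomon outer code to slightly lower degree (equivalently, pass to a linear subcode): this preserves the concatenated-with-RS-outer structure, only decreases the rate toward $\frac12(1-\delta)$, and can only increase the correctable fraction since the inner codes and the decoding analysis are untouched. Throughout, the correctable fraction stays $\delta^\star = \Omega(\delta\log^{-1}\frac1\delta)$. I do not anticipate any genuine obstacle here, since Theorem~\ref{thm:existence} does all the real work; the only point requiring care is inverting the binary entropy function correctly, so that the correctable fraction comes out as $\Omega(\delta/\log\frac1\delta)$ and not something weaker such as $\Omega(\delta/\log^2\frac1\delta)$.
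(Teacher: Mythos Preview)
Your proposal is correct and is precisely the intended derivation. The paper does not give an explicit proof of this corollary; it simply states that it ``follows'' from Theorem~\ref{thm:existence}, and your argument---applying the theorem over $\F_2$ with the rescaled parameter $\delta^\star = c\,\delta/\log(1/\delta)$ so that the entropy penalty $\ent(\delta^\star)$ drops to $O(c\delta)$---is exactly how one unpacks that implication. The entropy estimate and the choice of $c$ are handled correctly, and your trimming step to hit rate exactly $\frac{1}{2}(1-\delta)$ (rather than $\frac{1}{2}(1-\delta)-o(1)$) is a clean finish, though in this literature the $o(1)$ slack is often left implicit.
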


\begin{corollary}
For any constants $\delta, \eps>0$ there exists a family of linear concatenated code over an alphabet of size $q =2^{ \Theta(\eps^{-1}) }$ where the outer code is a Reed-Solomon code, such that the code has rate $\frac{1}{2}(1- \delta)-\epsilon$ and can correct $\delta$ fraction of insdel errors.
\end{corollary}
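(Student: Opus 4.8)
The plan is to obtain the corollary as a direct instantiation of \Cref{thm:existence}: apply that theorem over the field $\F_{q_0}$ with $q_0$ a sufficiently large power of two, and absorb the correction term $\ent(\delta)/\log q_0$ together with the $o(1)$ term into $\eps$. Indeed, \Cref{thm:existence} provides, for every large enough outer (Reed--Solomon) block length, a linear concatenated code over $\F_{q_0}$ with a Reed--Solomon outer code that corrects a $\delta$ fraction of insdel errors and has rate $\tfrac{1}{2}(1-\delta) - \ent(\delta)/\log q_0 - o(1)$. So it suffices to choose $q_0 = q_0(\eps,\delta)$ making $\ent(\delta)/\log q_0$ strictly smaller than $\eps$ and then, once $q_0$ is fixed, to keep only the family members whose length is large enough that the $o(1)$ term drops below $\eps - \ent(\delta)/\log q_0$.

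For the quantitative choice I would use $\ent(\delta)\le 1$, so that $q_0 = 2^{\lceil 2/\eps\rceil}$ already gives $\ent(\delta)/\log q_0 \le \eps/2$; the tail of the resulting family then has rate at least $\tfrac{1}{2}(1-\delta)-\eps$ and corrects a $\delta$ fraction of insdel errors, with the outer code still Reed--Solomon. Since $\log_2 q_0 = \lceil 2/\eps\rceil = \Theta(1/\eps)$, the alphabet size satisfies $q = q_0 = 2^{\Theta(\eps^{-1})}$, as required; the dependence on $\delta$ is hidden inside the constant, since $\delta$ enters only through $\ent(\delta)$.

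If one insists on rate exactly $\tfrac{1}{2}(1-\delta)-\eps$ rather than merely at least that, I would additionally shrink the outer code: replacing the dimension-$k$ Reed--Solomon outer code by its subcode of polynomials of degree $<k'$ for a suitable $k'\le k$ again yields a concatenated code with a Reed--Solomon outer code, which is a linear subcode of the original and hence corrects at least as many insdel errors, with rate scaled down by the factor $k'/k$; as the length grows, these achievable rates form an arbitrarily fine net below the maximum, so the target can be matched exactly. I do not expect a genuine obstacle here---this is essentially a bookkeeping argument layered on top of \Cref{thm:existence}, and the only points worth double-checking are that the chosen $q_0$ depends only on $\eps$ and $\delta$ and that the $o(1)$ in \Cref{thm:existence} converges uniformly once $q_0$ is fixed, both of which are immediate.
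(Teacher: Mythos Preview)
Your proposal is correct and matches the paper's (implicit) approach: the paper states this corollary without proof, simply as an immediate consequence of \Cref{thm:existence}, and your instantiation with $q_0 = 2^{\lceil 2/\eps\rceil}$ together with the bound $\ent(\delta)\le 1$ is exactly the intended one-line derivation.
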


Finally, we study the question of whether binary linear insdel codes can achieve $\delta$ arbitrarily close to $1/2$ with a positive rate $R$. Notice that even for general binary codes, it is well known that the maximum fraction of deletions that any non-trivial binary code of size $\ge 3$ can correct is below $1/2$ since any $3$ different $n$-bit binary strings must contain two strings with the same majority bit, and thus their longest common subsequence is at least $n/2$. For binary linear codes this can also be seen from the half-Plokin bound. A recent work by Guruswami, He, and Li \cite{9954058} in fact already provided a negative answer to this question even for general binary codes. In particular, they showed that there exists an absolute constant $\alpha>0$ such that any binary code $\C \subseteq \{0, 1\}^n$ with $|\C| \geq 2^{\polylog n}$ must have two strings whose longest common subsequence has length at least $(1/2+\alpha)n$. Thus $\C$ cannot correct more than $1/2-\alpha$ fraction of insdel errors. Since linear codes are more restricted, one may expect that a stronger result can be proved for binary linear codes. Specifically, we have the following conjecture:

\begin{conjecture}
There exists an absolute constant $\alpha>0$ such that any linear subspace $\C \subseteq \F_2^n$ with dimension $\geq 3$ must have two strings (vectors) whose longest common subsequence has length at least $(1/2+\alpha)n$.
\end{conjecture}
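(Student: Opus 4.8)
The plan is to reduce the conjecture to a rigidity statement about two‑dimensional binary subcodes and then close the argument with the incidence structure of the Fano plane. Since any two vectors of a subspace $\C'\subseteq\C$ are also vectors of $\C$, I may assume $\dim\C=3$; identify its seven nonzero vectors with the seven points of the Fano plane, so that the seven two‑dimensional subspaces $\{0,u,v,u+v\}$ are exactly the seven lines. Throughout I would use the elementary estimates $\lcs(0,c)=n-\weight(c)$; $\lcs(c,c')\ge n-\weight(c+c')$ (align $c$ and $c'$ and count agreements, using $\weight(c+c')=\hamming(c,c')$); and $\lcs(c,c')\ge\min(\weight(c),\weight(c'))$ as well as $\lcs(c,c')\ge\min(n-\weight(c),n-\weight(c'))$ (extract a maximal run of $1$'s, resp.\ of $0$'s). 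In particular, if every pair of distinct codewords had LCS $<(1/2+\alpha)n$ with $\alpha$ small, then every nonzero codeword would have weight $>(1/2-\alpha)n$, and no two nonzero codewords could both have weight $\ge(1/2+\alpha)n$.

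The key step I would want is a \emph{two‑dimensional rigidity lemma}: there is an absolute constant $\beta_0>0$ such that if a two‑dimensional subcode $\{0,u,v,u+v\}\subseteq\F_2^n$ has all three pairwise LCS among $u,v,u+v$ below $(1/2+\beta_0)n$, then one of $u,v,u+v$ has weight $>\frac{9}{10}n$. To prove it I would first use the elementary bounds above to reduce to the regime where, after renaming, $u$ and $v$ are each of weight $(1/2\pm o(1))n$; writing $a,b,c,d$ for the sizes of the coordinate classes on which $(u,v)=(1,1),(1,0),(0,1),(0,0)$, one gets $b\approx c$, $a\approx d=:t$, and $\weight(u+v)=b+c\approx n-2t$, so the conclusion amounts to forcing $t$ close to $0$. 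The obstruction to $t$ being large is that $u$ and $v$ agree verbatim on the $\approx 2t$ coordinates of the symmetric classes $(1,1)$ and $(0,0)$ (a canonical common subsequence), while $u$ and $u+v$ agree on classes $(1,0)\cup(0,0)$ and $v$ and $u+v$ on $(0,1)\cup(0,0)$; one would argue that no single ordering of the $n$ coordinates can keep all three of $\lcs(u,v),\lcs(u,u+v),\lcs(v,u+v)$ near $n/2$ once $t=\Omega(n)$ — intuitively, whenever the two disagreement classes of some pair are both of size $\Omega(n)$, that pair admits a common subsequence of length $(1/2+\Omega(1))n$ obtained by splicing matches harvested from those classes into the canonical agreement subsequence. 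Making ``no ordering works for all three pairs at once'' rigorous is the step I expect to be the main obstacle: it is a purely combinatorial question about longest common subsequences under adversarial interleaving, and for $t=0$ the bound $n/2$ really is attained (the example $\{0,1^{n/2}0^{n/2},0^{n/2}1^{n/2},1^n\}$), so one must show that a constant amount of mass in the classes $(1,1)$ and $(0,0)$ unavoidably inflates one of the three pairwise LCS by a constant fraction regardless of the permutation. I do not see a one‑line reason for this, which is presumably why the statement appears here only as a conjecture. (One cannot shortcut via the theorem of Guruswami, He, and Li \cite{9954058}, which needs $|\C|\ge 2^{\polylog n}$ codewords while here $|\C|=8$, and amplifying a low‑dimensional code with small pairwise LCS into one with many codewords seems unlikely to be possible.)

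Granting the lemma, the conjecture follows quickly. Suppose toward a contradiction that every pair of distinct codewords of $\C$ has LCS $<(1/2+\beta_0)n$, with $\beta_0<1/10$. Then each of the seven Fano lines is a two‑dimensional subcode satisfying the lemma's hypothesis, hence contains a codeword of weight $>\frac{9}{10}n$. If two \emph{distinct} codewords $h\ne h'$ both had weight $>\frac{9}{10}n$, then $\weight(h+h')=\hamming(h,h')\le (n-\weight(h))+(n-\weight(h'))<\frac{1}{5}n$, so $\lcs(0,h+h')=n-\weight(h+h')>\frac{4}{5}n>(1/2+\beta_0)n$, contradicting the hypothesis since $h+h'\ne 0$ is a codeword of $\C$. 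Hence at most one codeword $h$ has weight $>\frac{9}{10}n$; but $h$ lies on only $3$ of the $7$ lines, so each of the remaining $\ge 4$ lines contains no weight‑$>\frac{9}{10}n$ codeword, contradicting the lemma applied to that line. Therefore some pair of codewords of $\C$ has LCS $\ge(1/2+\beta_0)n$, which is the conjecture with $\alpha=\beta_0$.
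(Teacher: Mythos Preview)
The statement is a conjecture; the paper does not prove it, only the weaker bound $(1/2+\Omega(1/\log n))n$. Your proposal is honest about its gap and correctly pins the difficulty on the two-dimensional rigidity lemma.

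Your Fano-plane reduction in the final paragraph is correct and clean: if every line of the Fano plane carried a codeword of weight $>\tfrac{9}{10}n$, then either two such heavy codewords coexist (forcing $\lcs(0,h+h')$ large) or a single heavy codeword covers only three of the seven lines. But this reduction lands you on exactly the statement the paper already isolates when proving its weaker theorem. There the paper first uses elementary weight/LCS inequalities (essentially your first paragraph) to pass to three codewords $x,y,z=x+y$ whose column multiplicities over $\{(0,0,0),(0,1,1),(1,0,1),(1,1,0)\}$ are each $n/4\pm O(n/\log n)$; this is precisely your balanced regime $a\approx b\approx c\approx d\approx n/4$. The paper then shows the maximum pairwise LCS among $x,y,z$ exceeds $n/2+\Omega(n/\log n)$ via a recursive ``level structure'' decomposition of the column string, and the $1/\log n$ loss is intrinsic to that argument (the gain is $\max_i a_i$ over $O(\log n)$ nested levels whose inner lengths sum to $\Theta(n)$). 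Your rigidity lemma is nothing other than this last step with $\Omega(n/\log n)$ upgraded to $\Omega(n)$. So your proposal and the paper's route converge on the same bottleneck; the Fano packaging is elegant but does not buy anything the paper's direct reduction does not already give.

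One further caution: your lemma is a statement about \emph{two}-dimensional codes, while the conjecture is about three-dimensional ones. It is conceivable that some two-dimensional code with $a=b=c=d=n/4$ and all three pairwise LCS equal to $(1/2+o(1))n$ exists (so your lemma is false), yet no such code can be extended by a third independent vector while keeping all $\binom{8}{2}$ pairwise LCS small (so the conjecture is still true). In that scenario your reduction would be a dead end rather than merely an open step, so it is not clear the lemma is the right intermediate target.
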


However, we are not able to prove this conjecture. Instead, we can prove a weaker result.

\begin{theorem}
There exists an absolute constant $\alpha>0$ such that any linear subspace $\C \subseteq \F_2^n$ with dimension $\geq 3$ must have two strings (vectors) whose longest common subsequence has length at least $(\frac{1}{2}+\frac{\alpha}{\log n})n$.
\end{theorem}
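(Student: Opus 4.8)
The plan is to pass to a three-dimensional subcode, use Plotkin-type weight bounds to reduce to a ``balanced'' configuration, and then build two codewords with large LCS by splicing an identity matching on the coordinates where they agree together with re-routed matchings on the coordinates where they disagree.

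\textbf{Reductions.} Any pair of codewords of a subspace $\C_3\subseteq\C$ is also a pair of codewords of $\C$, so I would fix a $3$-dimensional $\C_3$. Next I would delete every coordinate on which all of $\C_3$ vanishes: such coordinates contribute a symbol to every common subsequence, so the bound for the shortened code implies it for the original (the length gained dominates the loss from $\log n'\le\log n$, using $\alpha$ small). Now each column of a generator matrix of $\C_3$ is a nonzero functional on $\F_2^3$, so at every coordinate exactly $4$ of the $7$ nonzero codewords carry a $1$; hence $\sum_{0\ne c\in\C_3}w(c)=4n$. If some nonzero $c$ has $w(c)\le(\tfrac12-\tfrac{\alpha}{\log n})n$, then $\mathrm{LCS}(0,c)=n-w(c)$ already suffices; so assume not. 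If two nonzero codewords both have weight $\ge(\tfrac12+\tfrac{\alpha}{\log n})n$, matching $1$'s gives an LCS of at least that length; so at most one does. Hence at least $6$ of the $7$ nonzero codewords are \emph{balanced} (weight within $\tfrac{\alpha n}{\log n}$ of $\tfrac n2$), and since the $7$ nonzero vectors of $\F_2^3$ carry the incidence structure of the Fano plane (each point on $3$ lines), at least $7-3=4$ of its lines are entirely balanced. Fix one such line $\{u,v,w\}$ with $u+v+w=0$.

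\textbf{Type partition and splicing bound.} From $u+v+w=0$, each coordinate lies in one of four classes determined by $(u_i,v_i,w_i)\in\{(0,0,0),(1,1,0),(1,0,1),(0,1,1)\}$; writing $t_0,t_w,t_v,t_u$ for their sizes and using $w(u)=t_w+t_v$ etc., balance forces all four to equal $\tfrac n4\pm O(\tfrac{\alpha n}{\log n})$. For a pair, say $(u,v)$, the agreement set has size $t_0+t_w=n-w(w)\ge(\tfrac12-\tfrac{\alpha}{\log n})n$, the disagreement set has size $\approx\tfrac n2$, and on each maximal disagreement interval $I$ one has $v|_I=\overline{u|_I}$; since disagreement intervals are separated by agreement coordinates these matchings concatenate, giving
\[
\mathrm{LCS}(u,v)\ \ge\ (t_0+t_w)\ +\ \sum_{I}\mathrm{LCS}\bigl(u|_I,\overline{u|_I}\bigr),
\]
together with the analogues for $(u,w)$ and $(v,w)$. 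A more flexible matching is also available: a run of $1$'s of $u$ inside the disagreement set may be matched to a \emph{later} run of $1$'s of $v$, at the price of abandoning the identity matching on the few agreement coordinates that become straddled, and symmetrically for runs of $0$'s; this yields further lower bounds of the shape $(t_0+t_w)-(\text{straddled length})+\sum_k\min(\text{re-routed run sizes})$.

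\textbf{The crux.} It remains to show one of these bounds exceeds $(\tfrac12+\tfrac{\alpha}{\log n})n$ — equivalently, that the total ``bonus'' over the baseline $\approx\tfrac n2$ is $\Omega(n/\log n)$ for some pair — and this is where I expect the real work, and the $\log n$, to enter. If for every pair the plain splicing bonus $\sum_I\mathrm{LCS}(u|_I,\overline{u|_I})$ is $o(n/\log n)$, then every maximal disagreement run is nearly monochromatic, which forces the type sequence to be essentially a concatenation of monochromatic blocks (of $u$-, $v$-, $w$-coordinates) separated by short all-agreement blocks; in that regime the re-routing matchings — e.g.\ matching each block of one codeword to the next compatible block of another — create a genuine gain. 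Bucketing blocks into the $\Theta(\log n)$ dyadic length scales and restricting the re-routing to the dominant scale, which carries an $\Omega(1/\log n)$ fraction of the mass, should produce a common subsequence of length $\tfrac n2+\Omega(n/\log n)$; taking $\alpha$ below the implied absolute constant finishes the proof. The main obstacle is making this uniform over \emph{all} type sequences — in particular interpolating between the splicing bonus and the re-routing bonus for configurations that are neither cleanly monochromatic nor dyadically balanced; everything before it is bookkeeping.
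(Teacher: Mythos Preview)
Your reductions to a balanced three-dimensional configuration are correct and essentially match the paper's Lemma~\ref{lem:toolthmprelem}, though your Fano-plane bookkeeping is arguably cleaner than the paper's explicit $7\times n$ column-counting table. Your splicing bound is also correct and is a useful observation.

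However, your crux --- extracting the $\Omega(n/\log n)$ bonus from the balanced type sequence --- is where the paper does something substantially different from your sketch, and where your proposal has a genuine gap. The paper does \emph{not} pass through an intermediate ``the sequence is approximately monochromatic blocks'' step. Instead it defines a recursive \emph{bracket structure} on the type string $L$ over $\{t_u,t_v,t_w\}$: a level-$1$ bracket encloses a substring of the form $\alpha^k\beta^k$; a level-$\ell$ bracket is one that reduces to $\alpha^k\beta^k$ after stripping out lower-level brackets. A greedy parse shows $L$ has level $O(n)$ with $O(n)$ brackets, and a compression step shows it has \emph{enhanced} level $O(\log n)$. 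The key lemma (Lemma~\ref{lem:addtional value}) then says that the ``additional value'' (the total LCS surplus summed over the three pairs) of any bracket is at least a quarter of its inner length, \emph{and} at least the sum of the additional values of its immediate sub-brackets. Since inner lengths summed over the $O(\log n)$ levels give $|L|$, pigeonhole over levels yields additional value $\Omega(|L|/\log n)=\Omega(n/\log n)$.

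Your dyadic-bucketing-of-block-lengths heuristic is in the same pigeonhole spirit but is not the same decomposition, and the step you yourself flag as the obstacle --- why re-routing at the dominant length scale beats the identity matches it displaces --- is not addressed. Two concrete issues: first, the $t_0$-blocks carry $\approx n/4$ of the mass, so ``separated by short all-agreement blocks'' is not accurate, and re-routing a $t_u$-block to a distant $t_v$-block may sacrifice arbitrarily many identity matches in between; second, even after reducing to monochromatic blocks, the gain from re-routing depends on the \emph{order} of block types, not just their lengths, and dyadic bucketing by length says nothing about order. The paper's bracket structure sidesteps both issues by organizing the type string hierarchically by \emph{pattern} (nested $\alpha^k\beta^k$ shape) rather than by block length, which is exactly what makes the additivity over sub-brackets go through. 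Your sketch would need a comparable structural lemma to close the gap.
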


\subsection{Overview of the Techniques}
There have been only two previous works on explicit constructions of asymptotically good linear insdel codes over fields of constant size, i.e., \cite{CGHL21} and \cite{9770830}. The apparent difficulty of constructing such codes comes from the following aspects: First, many of the previous constructions of (non-linear) insdel codes are based on adding index information to the codewords, either in the form of direct encoding of indices, or more sophisticated objects such as synchronization strings. Since all of these result in fixed strings, adding such information in any naive way will lead to non-linear codes. Indeed, both \cite{CGHL21} and \cite{9770830} have to find alternative ways to ``embed" synchronization strings into a linear code. Specifically, \cite{CGHL21} uses what is called a \emph{synchronization sequence}, which is a sequence of $0$'s added in between each pair of adjacent symbols in a codeword. This preserves the linearity if the original code is linear. \cite{9770830}, on the other hand, embeds the synchronization string by combining a codeword symbol $x$ and a synchronization string symbol $a$ into a pair $(x, a \cdot x)$, where $\cdot$ is the multiplication over the corresponding field $\F_q$. This also preserves the linearity over $\F_q$, but now the symbols from the synchronization strings are mixed with symbols from the codeword, and it is not easy to tell them apart. Note that for decoding, one needs to first use the synchronization string to recover the positions of the codeword symbols. To solve this problem, \cite{9770830} also needs to add buffers of $0$'s between adjacent pairs, where the length of a buffer is at least as long as the pair $(x, a \cdot x)$. 

It can be seen that the added $0$'s in the above two approaches form an inherent barrier to achieving high rate or high fraction of correctable errors. In \cite{CGHL21}, a constant number of $0$'s are added in between each pair of adjacent symbols in a codeword, which already decreases the rate and the possible decoding radius to a small constant. In \cite{9770830}, the operation of converting a codeword symbol $x$ and a synchronization string symbol $a$ into a pair $(x, a \cdot x)$ already decreases the rate of the code to below $1/2$, while adding $0$'s as buffers decreases the rate even more to below $1/4$. Similarly, add $0$'s as buffers also decreases the possible decoding radius to below $1/2$. For binary codes, \cite{9770830} needs to use another layer of code concatenation, which further decreases the rate and decoding radius.

The key idea in all our constructions is to eliminate the use of $0$'s as buffers or synchronization sequences. Instead, we embed synchronization information directly into the codewords. To achieve this, we also use code concatenation, where for the outer code we choose a suitable Reed-Solomon code. On the other hand, the key difference between our constructions and standard concatenated codes is that we choose a \emph{different} inner code for \emph{every position} of the outer code. This way, we can make sure that the inner codewords corresponding to outer codeword symbols at different positions are far enough from each other, and thus we can roughly tell them apart by just looking at the received codeword. By using linear inner codes for all positions, this preserves the linearity of the code, and at the same time eliminates the use of $0$'s. On a high level, this is why our constructions can achieve either high rate (arbitrarily close to $1/2$) or high fraction of correctable errors (arbitrarily close to $1$). We now discuss our techniques in more details for the two cases.

\subparagraph{Constructions for high error.} Note that to correct $1-\eps$ fraction of insdel errors, a linear code must have alphbet size at least $1/\eps$ by the half-Plotkin bound. Here we use an alphabet of size $\poly(1/\eps)$. With an appropriately chosen parameter $\gamma=\Omega(\eps)$, after picking an outer Reed-Solomon code with codeword length $n$, rate $\gamma$ and relative distance $1-\gamma$, our strategy is to design $n$ different inner codes $\Cin^1, \cdots, \Cin^n$. The goal is to ensure that codewords in different inner codes have large edit distance, or equivalently, the length of their longest common subsequence (LCS for short) is at most $\gamma n'$ where $n'=O(\log n)$ is the block length of the inner code. However, since all these codes are linear, $0$ is a codeword of each inner code, and two $0$'s (even from different inner codes) are guaranteed to have $0$ edit distance. We design the inner codes to ensure this is the only bad case. 

More specifically, we ensure that for any two inner codewords $x, y$, unless they are both $0$ or they correspond to the same message in one inner code $\Cin^i$, their edit distance is large. We show that if we pick $n$ random linear codes for $\Cin^1, \cdots, \Cin^n$, then this property holds with high probability. Furthermore, we can derandomize this by using a small biased sample space to generate the $n$ generator matrices of $\Cin^1, \cdots, \Cin^n$. Roughly, this is because the property we want is \emph{local} --- it only looks at any two inner codewords  $x, y$. By using a small biased sample space, we can show that (roughly) under the above conditions, any non-trivial parity of the bits (we treat a symbol in the alphabet of size $\poly(1/\eps)$ as a binary string of length $O(\log(1/\eps))$) of $(x, y)$ has a small bias. Hence a standard XOR lemma implies the joint distribution of $(x, y)$ is close to uniform. Since $n'=O(\log n)$, we only need to look at $\poly(n)$ such pairs of $(x, y)$. Thus it suffices to choose the error in the small biased sample space to be $1/\poly(n)$. This gives us a sample space of size $\poly(n)$ and we can exhaustively search for a good construction. This gives us $n$ different inner codes with rate $\Omega(\gamma)$.

Using these inner codes, it is now relatively straightforward to argue about the parameters of the concatenated code. The rate is $\Omega(\gamma^2)=\Omega(\eps^2)$. To argue about the distance, we consider the LCS between any two different codewords $C_1, C_2$, and divide it sequentially into blocks according to the inner codewords of $C_1$. Each block now covers a substring of $C_2$. Intuitively, by the property of our inner codes, each block contains only a small number of matches compared to the total size of this block in $C_1$ and the substring covered in $C_2$, unless it is a $0$ inner codeword in $C_1$ and is matched to another $0$ inner codeword in $C_2$, or it is a match between the same inner codeword in a single inner code $\Cin^i$. However our outer code guarantees that the latter cannot happen too many times (i.e., at most $O(\gamma n)$ times). Therefore the LCS has length at most $O(\gamma nn')$. By choosing $\gamma$ appropriately, the code can correct $1-\eps$ fraction of insdel errors.

We present a simple polynomial time decoding algorithm. Given any received string $y$, we consider the partition of $y$ into $n$ substrings $y_1, \cdots, y_n$ such that $y =y_1 \circ y_2 \circ \cdots \circ y_n$, where each $y_i$ can be the empty string. For each $y_i$, we find the closest codeword $x_i \in \Cin^i$ in edit distance and record their edit distance $\Delta_i$. We then minimize $\Delta=\sum_{i \in [n]} \Delta_i$, by using a simple dynamic programming. We show that as long as there are not too many errors, by using the optimal partition returned from the dynamic programming, one can correctly recover a small fraction of the outer codewords. Intuitively, this is because if the partition results in too many errors in the recovered outer codewords, then again by the property of our inner codes, the quantity $\Delta$ will be very large, unless there are a lot of errors. We then use a list decoding algorithm for the Reed-Solomon code to get a list of candidate codewords, and search the list to find the correct codeword, which is the one closest to $y$ in edit distance.\ For technical reasons, this decreases the rate of the code to $\Omega(\eps^4)$.

\subparagraph{Constructions with high rate.}
In section~\ref{sec:high rate 1} and section~\ref{sec:high rate 2} we show our constructions with high rate and polynomial time encoding and decoding.
Section~\ref{sec:high rate 1} exhibits a warm-up construction achieving rate $1/3-\gamma$,
while section~\ref{sec:high rate 2} improves it to $1/2-\gamma$. These constructions are significantly more involved due to additional issues arised in the analysis.

The constructions in both sections inherit the structure of the general construction.
The outer code is a Reed-Solomon code with block length $n$, alphabet size $n$, relative distance $\delta$ and rate $(1-\delta)$, where $\delta=\gamma/2$. To achieve a high rate, we will design the inner codes to have a large rate, ideally close to $1/2$. At the same time, we also need to ensure that the code can correct a positive constant fraction of errors, thus we want to make sure that the LCS between any two different codewords is not too large.

As before, we will design the inner codes such that ideally, codewords from different inner codes are far away from each other (or equivalently, have small LCS). However, there are additional issues in the analysis of the LCS. First, the $0$ codewords from different inner codes are always the same. This is inevitable since we are dealing with linear codes. Second, in a matching between two different codewords $C_1, C_2$, some inner codeword of $C_1$ may be matched to a substring of the concatenation of two adjacent inner codewords of $C_2$. Thus it is not enough to just ensure that codewords from different inner codes are far away from each other. We note that this issue also occurs in our constructions for high noise. However, there we designed the inner codes to have small rate but large distance, so the LCS between different inner codewords is quite small. When some inner codeword of $C_1$ is matched to a substring of the concatenation of two adjacent codewords of $C_2$, the size of the matching in this part at most doubles the size of the LCS between two different inner codewords, and is affordable in that case. Here however, since we are trying to achieve a high rate, the distance between two different inner codewords becomes quite small, and the LCS becomes relatively large (e.g., larger than $1/2$ fraction).\ Hence, we cannot afford to double this size.

On a high level, we resolve the second issue by strengthening our local property of inner codes, while our analysis will show that the first issue can also be resolved as a consequence. We begin by discussing the local property we need in Section~\ref{sec:high rate 1} to achieve rate $1/3-\gamma$. The distinct binary inner codes $\Cin^1, \Cin^2, \ldots, \Cin^n$ are constructed to have block length $n'$, message length $k'=(1/3-\gamma/2)n'$, with the following property: for every $i, j\in [n]$, for every codeword $w$ in $\Cin^i$, for every two codewords $u, v$ from two adjacent inner codes $\Cin^j, \Cin^{j+1}$, unless $w = u$ or $w=v$, the distance between $w$ and any substring of  $u\circ v$ is at least $d'=  \Omega(n')$.
We first explain why this property implies a good decoding radius and then explain how to construct these inner codes.

We show the decoding radius by directly providing the following decoding algorithm.
On an input $y$ which is a corrupted version of a codeword $z$, the algorithm first finds a string $\tilde{z} \in \{0, 1\}^{nn'}$ which has a maximum block matching with $y$.
A block matching is defined to be a set of matches where each match, denoted as $(i, [\alpha, \beta])$, consists of a non-zero inner codeword  $u \in \Cin^i$ and a substring $y_{[\alpha, \beta]}$, such that their edit distance is at most $d'/2$. Furthermore, the matching is monotone in the sense that the substrings of $y$ involved in the matching do not overlap and the matches cannot cross. 
We call $u$  a candidate string for the $i$-th block. We give a simple dynamic programming to find a maximum   block matching together with a corresponding sequence of candidates.
To construct $\tilde{z}$, we first fill these candidates to their corresponding blocks and then set all the other blocks to be $0$.

Now we show that as long as there are at most $\rho nn'$ errors for some small constant $\rho > 0$, $\tilde{z}$ agrees with $z$ in most of the blocks (inner codewords). To show this, divide $z$ into blocks $z^1 \circ z^2 \circ \cdots \circ z^n$ such that each $z^i$ corresponds to an inner codeword. Similarly, divide $y$ into blocks $y^1 \circ y^2 \circ \cdots \circ y^n$ such that each $y^i$ is the corrupted version of $z^i$.
Notice that there can be at most $ \frac{\rho nn'}{ d'/2} =  (c \gamma) n $ blocks with at least $d'/2$ errors, for some constant $c=c(\rho)$.
So the maximum block matching has size at least $ \hat{n} - c\gamma n$ where $\hat{n}$ is the number of non-zero blocks in $z$.
Now consider a maximum block matching and the sequence of candidates returned by the algorithm.
We show that there are at most $c \gamma n$ candidates that are not equal to the corresponding blocks of $z$, by using the local property.
As we fill all the other blocks to be $0$, this also implies there are at most $c \gamma n$ zero-blocks being incorrectly recovered.
Hence the algorithm correctly recovers $1-O(c\gamma)$ fraction of blocks in $z$.
By taking $c$ (and thus also $\rho$) to be a small enough constant, one can use the list-decoding algorithm of Reed-Solomon codes to recover $z$.

Next we explain how to construct the inner codes. We start by considering a random construction, that is, all the inner codes are independent random linear codes.
We show the local property holds with high probability.
Consider arbitrary codewords $w \in \Cin^i\setminus \{0\}, u \in\Cin^j, v\in \Cin^{j+1}$ for some $i,j\in [n]$, where $w \neq u$ and $w\neq v$.  
Here the inequality means the two codewords are either from different inner codes or they correspond to different messages in one inner code. 
Suppose there is a substring $w'$ of $u\circ v$, which has distance $<d'$ to $w$.
So the LCS between $w$ and $w'$ should be $\ell \ge \frac{|w|+|w'| - d'}{2}$.
Notice that $\ell \le |w| \le n'$.
Consider any monotone alignment between $w$ and $w'$.
Because $w \neq u, w\neq v$ and the inner codes are all independent and generated randomly, by a similar argument as in \cite{CGHL21}, the event that the alignment is indeed a matching of bits happens with probability at most $2^{-\ell}$.
We then apply a union bound over all possible alignments of size $\ell$ and all possible codewords $w, u, v$. A key observation is that the number of all possible codewords $w, u, v$ is $2^{3k'}$ since we have three different codewords here. However, we have $\ell \leq n'$. Therefore for the union bound to work, we have to set $k' < n'/3$. This is the reason that we can only achieve rate close to $1/3$ with this construction. 

Next, we derandomize the construction by replacing the uniform randomness used with an $\eps$-biased distribution. Here, as before, we crucially use the fact that our property for the inner codes is local: the only place where we use randomness is when we bound the probability that an alignment is a valid matching, and it only involves three codewords. Since $n' = O(\log n)$, by using a standard XOR Lemma and taking $\eps = 1/\poly(n)$, we can argue that when restricted to any three codewords, the $\eps$-biased distribution is $1/\poly(n)$ close to the uniform distribution in statistical distance. This is enough for the union bound since there are at most $\poly(n)$ such triples $w, u, v$.  

Since we only need $O(\log n)$ random bits to generate the above $\eps$-biased distribution, one can exhaustively search for a good construction that satisfies our local property. This also takes polynomial time since one only needs to check every triple of inner codewords.

In section~\ref{sec:high rate 2}, we add new ideas to bypass the rate $1/3$ barrier in the above construction, by giving a new local property of the inner codes.
Recall that the reason we need to choose $k' < n'/3$ in the above construction is that the alignment we consider in the local property consists of matches that involve three different codewords, which results in a $2^{3k'}$ term in the union bound, but the alignment has size at most $n'$. In the new local property, we generalize this by considering alignments that involve $2s+1$ different codewords for some integer $s$. In a simplified version, consider any two different codewords $C_1, C_2$ of the concatenated codes and an LCS between them, we analyze any $s$ consecutive inner codewords in $C_1$, and how they can be matched to a substring in $C_2$. Note that the $s$ consecutive inner codewords cannot be matched to a substring with length much larger than $sn'$, or there are already many unmatched bits in $C_2$. So we can imagine a new local property like the following: let $w$ be the concatenation of any $s$ adjacent inner codewords, and $u$ be the concatenation of any $s+1$ adjacent inner codewords. As long as the codewords in $w$ and $u$ are sufficiently different, the distance between $w$ and any substring of $u$ is at least $\Omega(n')$. The idea is that an alignment between $w$ and $u$ can have size up to $sn'$, while the union bound gives a $2^{(2s+1)k'}$ term. Thus we can potentially achieve $k' < \frac{s}{2s+1}n'$, and if $s$ is large enough, the rate is close to $1/2$. Note that the construction in Section~\ref{sec:high rate 1} corresponds to the case of $s=1$.

However, it is not straightforward to make this idea work. The main issue is that unlike the simple case of $s=1$, when we consider $s$ consecutive inner codewords for $s>1$, there can be multiple $0$ codewords in them, which can potentially be matched to the $0$ codewords in $u$. Furthermore, there can be inner codewords in $w$ and $u$ that correspond to the same message in a single inner code. These issues will increase the probability that the alignment is a valid matching and can cause the union bound to fail. To fix this, we require the ``unique" blocks in $w$ to be dense. Specifically, we define a unique block of $w$ (or $u$) to be a non-zero inner codeword such that either no block of $u$ (or $w$) is in the same inner code with it, or any block of $u$ (or $w$) in the same inner code with it corresponds to a different message.  Now we define the following new local property: 

For every $w$ which is a sequence of $t = O(\frac{\log \frac{1}{\gamma}}{\gamma^2})$ consecutive inner codewords, every $u$ which is a sequence of $t+1$ consecutive inner codewords, and every $w'$ which is a substring of $u$, the distance between $w$ and $w'$ is at least $d' = \Omega(\gamma n')$, as long as the number of unique blocks in $w$ or $u$ is at least $s = \Omega(\gamma t)$. By the distance property of the outer code, for any two different concatenated codewords, in at least one of them, the fraction of such $t$ consecutive inner codewords with at least $s$ unique blocks is a constant.

Using this new property, we can design a similar decoding algorithm as that of Section~\ref{sec:high rate 1}, and  with a similar analysis, achieve   decoding radius $ \Omega(\gamma^3 n/\log \frac{1}{\gamma}) $.

We defer these details to the technical part, and mainly explain here how to construct the inner codes with the new property and why this indeed gives a rate of $1/2-\gamma$.

Similar as before, we start with a construction where all inner codes are independent random linear codes, and later derandomize it with an $\eps$-biased space. As long as the parameters $s, t$ are constants, it is easy to see that the derandomization step still works. Therefore, now we only focus on the random construction and argue that the new local property holds with high probability. For this we use a delicate combinatorial and probabilistic argument.

Suppose the property is not satisfied with some concatenated codewords $C_1, C_2$. Then there exists a $ w'$ such that the edit distance between $w, w'$ is less than $d'$, which implies the LCS between $w$ and $w'$ is   $\ell > (|w| + |w'| - d')/2 $.
Consider an arbitrary monotone alignment $M$ between $w$ and $w'$ of size $\ell$.
We have two cases.
The first case is that there is a pair of indices $(i, j)$ in $M$ such that $|i-j| \ge d' $.
This implies that there cannot be any pair of indices $(i', j')$ in $M$ such that $i' = j'$, for otherwise there are already at least $d'$ bits in $C_1$ or $C_2$ that are not matched.
Let $\hat{t}$ be the larger number of non-zero blocks in $w$ and $w'$. Note that $\hat{t} \geq s$.
Since all inner codes are independent and random, and every pair of indices $(i, j)$ in $M$ has $i\neq j$, the probability that $M$ is a matching is at most $2^{-((\hat{t}-1) n' - O(d'))}$ ($w'$ can have length as small as $(t+1)n-n-d'$). Now if we apply the union bound, the main term is actually the total number of possible tuples of the non-zero inner codewords. Since there are at most $2\hat{t}$ non zero blocks in $w$ and $u$, this number is at most $2^{2\hat{t}k'}$. Thus as long as $s$ is a large enough integer, the rate of the code can approach $1/2$.  

The second case is that every pair of indices $(i, j )$ in $M$ has $|i-j| < d'$.
In this case we focus on the unique blocks. Let $s'$ be the larger number of unique blocks in $w$ and $u$, and for simplicity assume $u$ has more unique blocks. We delete all matches where the endpoint in $u$ is not in a unique block, or the endpoint in $w$ falls out of the block at the same position as the block in $u$ which contains the endpoint in $u$. Thus we attain a trimmed alignment $M'$.
Under the assumed condition in this case, we don't lose too many matches. Indeed the number of matches left is at least $ \ell' = s'(n'-d') -n'-d'$. We now upper bound the probability that there exists such an $M'$ which is a valid matching. Since this event is implied by the original event, this also provides an upper bound of the original event. 

The probability that any $M'$ is a valid matching is $2^{\ell'}$, by our definition of unique blocks. Now in the union bound, the main term turns out to be the total number of possible tuples of the inner codewords corresponding to the $s'$ unique blocks in $u$ and the other $s'$ blocks at the same positions in $w$, which is roughly $2^{(2s')k'}$. Notice that $s' \geq s$. Thus in this case, as long as $s$ is large enough, the rate of the code can also approach $1/2$.  

\subparagraph{The existence of linear concatenated codes matching random linear codes.} In section~\ref{section:existence} we show the existence of linear concatenated insdel codes that match the parameters of random linear codes. This is similar in spirit to Thommesen's work \cite{1056765}, which shows the existence of binary linear concatenated codes with Reed-Solomon outer codes that asymptotically meet the Gilbert-Varshamov bound. In particular, we also take a Reed-Solomon code as the outer code, and use an independent random linear inner code for every symbol of the outer codeword. Interestingly, here we take the outer code to be a $[n, k=(1-\gamma)n/2, d=(1+\gamma) n/2]_q$ Reed-Solomon code with $q=\Theta(n)$, i.e., the rate of the outer code is less than $1/2$. On the other hand, we take all inner codes to have rate $1$. Using a careful probabilistic counting argument together with an estimate of the number of Reed-Solomon codewords with a specific weight (as done in \cite{1056765}), we can prove the existence of linear concatenated insdel codes with parameters as in Theorem~\ref{thm:existence}. 

The choice of the parameters of the outer code is different from our explicit constructions, suggesting that maybe different constructions based on these parameters can lead to better explicit linear insdel codes.
\paragraph{Organization of the rest of the paper.} After giving some notation and previous works used in Section~\ref{sec:prelim}, we present our main results in the subsequent sections. Specifically, in Section~\ref{sec:higherror} we give our codes for high noise. In Section~\ref{sec:high rate 1} and Section~\ref{sec:high rate 2} we give our codes with high rate. In Section~\ref{section:existence} we prove the existence of linear concatenated insdel codes that match the parameters of random linear codes. In section~\ref{sec:lowerbound} we prove the weaker version of our conjecture. Finally we conclude with some open problems in section~\ref{sec:open}.
\section{Preliminaries}\label{sec:prelim}
\textbf{Notation.} Let $\Sigma$ be an alphabet. For a string $x\in \Sigma^*$,
\begin{enumerate}
\item $|x|$ denotes the length of the string.
\item $x[i,j]$ denotes the substring of $x$ from position $i$ to position $j$ (both endpoints included).
\item $x[i]$ denotes the $i$-th symbol of $x$.

\item $x\circ x'$ denotes the concatenation of $x$ and some other string $x'\in \Sigma^*$.
\item For a string $s$ which is a concatenation of  shorter strings $s_1, s_2, \ldots, s_t$,   the $i$-th block of $s$ refers to $s_i$.
\end{enumerate}
%
%
\begin{definition}[Edit distance and Longest Common Subsequence] For any two strings $x, y\in\Sigma^n$, the edit distance $\ed(x, y)$ is the minimum number of edit operations (insertions and deletions) required to transform $x$ into $y$.\footnote{The standard definition of edit distance also allows substitution, but for simplicity we only consider insertions and deletions here, as a substitution can be replaced by a deletion followed by an insertion.} A longest common subsequence of $x$ and $y$ is a longest pair of subsequences of $x$ and $y$ that are equal as strings. We use $\lcs(x, y)$ to denote the length of a longest common subsequence between $x$ and $y$.
\end{definition}
Note that $\ed(x, y) = |x|+|y|-2\cdot \lcs(x, y)$. We use $\hamming(x, y)$ to denote the Hamming distance between two strings $x$ and $y$.

\begin{definition}
An $(n, m ,d)$-code $C$ is an error-correcting code (for Hamming errors) with codeword length $n$, message length $m$, such that the Hamming distance between every pair of codewords in $C$ is at least $d$.
\end{definition}
%
%
\begin{definition}
Fix an alphabet $\Sigma$, an error-correcting code $C\subseteq \Sigma^n$ for edit errors with message length $m$ and codeword length $n$ consists of an encoding function $\text{Enc}:\Sigma^m\rightarrow \Sigma^n$ and a decoding function $\text{Dec}:\Sigma^*\rightarrow \Sigma^m$. The code can correct $k$ edit errors if for every $y$, s. t. $\ed(y,\text{Enc}(x))\leq k$, we have $\text{Dec}(y) = x$. The rate of the code is defined as $\frac{m}{n}$. 
\end{definition}

We say $C$ is a linear code if the alphabet $\Sigma$ is a finite field $\F_q$ and the encoding function $\text{Enc}:\F_q^m\rightarrow \F_q^n$ is a $\F_q$-linear map.

We use the following list decoding algorithm for Reed-Solomon codes due to Guruswami and Sudan \cite{GuruswamiS99}.

\begin{theorem}\label{thm:listdecoding}
Given a family of Reed-Solomon codes of message rate $\gamma$, an error rate of $\eps=1-\sqrt{\gamma}$ can be list-decoded in polynomial time.
\end{theorem}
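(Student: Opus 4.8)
The final statement is the Guruswami--Sudan list-decoding theorem for Reed--Solomon codes, so the plan is to present their interpolation-plus-root-finding algorithm and check that it decodes up to agreement $t > \sqrt{\gamma}\,n = \sqrt{nk}$, where $k = \gamma n$ is the dimension and $\alpha_1,\dots,\alpha_n \in \F_q$ are the evaluation points. Given a received word $(y_1,\dots,y_n)$, the first step is the \emph{interpolation step}: fix a multiplicity parameter $r$ and a weighted-degree bound $D$, and look for a nonzero bivariate polynomial $Q(X,Y) = \sum_{a + (k-1)b < D} c_{a,b} X^a Y^b$ that vanishes to order $r$ at every point $(\alpha_i, y_i)$, in the sense that all Hasse derivatives of $Q$ of order $< r$ vanish there. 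This is a homogeneous linear system in the $c_{a,b}$: the number of unknowns is the number of monomials of $(1,k-1)$-weighted degree below $D$, roughly $D^2/(2(k-1))$, and the number of equations is $n\binom{r+1}{2} \approx nr^2/2$. Choosing $D$ slightly above $r\sqrt{n(k-1)}$ makes the unknowns outnumber the equations, so a nonzero $Q$ exists and is found by Gaussian elimination in polynomial time.

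The second step is the \emph{root-finding step}. For any polynomial $f$ with $\deg f < k$ that agrees with the received word in a set $S$ of $t$ coordinates, set $g(X) := Q(X, f(X))$. The weighted-degree bound gives $\deg g < D$. On the other hand, for each $i \in S$ we have $f(\alpha_i) = y_i$, and since $Q$ vanishes to order $r$ at $(\alpha_i, y_i)$, expanding $Q$ around $(\alpha_i, f(\alpha_i))$ and using that $X-\alpha_i$ divides $f(X) - f(\alpha_i)$ shows $(X-\alpha_i)^r \mid g(X)$. Hence $g$ has at least $rt$ roots with multiplicity; if $rt \ge D$ then $g \equiv 0$, i.e. $(Y - f(X)) \mid Q(X,Y)$. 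Thus every such $f$ shows up as a $Y$-root of $Q$, and we recover all of them by factoring $Q$ over $\F_q$ (equivalently, finding the roots of $Q$ viewed as a univariate polynomial over $\F_q(X)$ and clearing denominators), which is polynomial time. The output list has size at most $\deg_Y Q \le D/(k-1) = O(r\sqrt{n/k})$, which is polynomial, and both steps together run in time $\poly(n, \log q)$.

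It remains to choose parameters so that $rt \ge D$ is forced exactly when $t > \sqrt{nk}$. With $D$ just above $r\sqrt{n(k-1)}$, the condition $rt \ge D$ reads $t \gtrsim \sqrt{n(k-1)}$, and the integrality slack in $r$, $D$ and in the monomial count contributes only $O(\sqrt{n/k})$ additively to the required agreement for fixed $r$; taking $r$ to be a large enough constant (depending on how close to $1-\sqrt{\gamma}$ one wants to get), or $r = \Theta(n)$ to hit the clean bound, absorbs it, since the code family has constant rate $\gamma$. The main obstacle is precisely this parameter optimization: one must verify that ``$\#\text{unknowns} > \#\text{equations}$'' and ``$rt \ge D$'' can hold simultaneously whenever $t/n > \sqrt{\gamma}$ — this is exactly where the multiplicity parameter $r$ buys the improvement over the $r=1$ Sudan algorithm (which only reaches agreement $\approx 2\sqrt{nk}$). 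A secondary point requiring care is the vanishing-to-order-$r$ implication in the root-finding step, which needs the correct notion of multiplicity (Hasse derivatives rather than ordinary partial derivatives) so that the argument is valid in positive characteristic.
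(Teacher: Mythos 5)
This statement is not proved in the paper at all: it is imported verbatim as the Guruswami--Sudan list-decoding theorem (cited as \cite{GuruswamiS99}), and your write-up is a correct rendition of exactly that source's argument --- interpolation with multiplicity $r$ under a $(1,k-1)$-weighted degree bound, the Hasse-derivative divisibility step, and the parameter trade-off $rt\ge D$ versus $D^2\gtrsim r(r+1)n(k-1)$ giving agreement approaching $\sqrt{nk}=\sqrt{\gamma}\,n$. So the proposal is correct and follows essentially the same (indeed, the canonical) approach as the cited proof; the only nit is that the interpolation bound should be stated as $D\gtrsim\sqrt{r(r+1)n(k-1)}$ rather than ``slightly above $r\sqrt{n(k-1)}$'', which your final parameter discussion already absorbs.
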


We use $U_n$ to denote the uniform distribution on $\{0,1\}^n$.

\begin{definition}
An $\eps$-biased distribution $X$ over $\{0, 1\}^n$ is such that for any $S \subseteq [n]$, $|\Pr\left[ \bigoplus_{i\in S} X_i = 1\right] - 1/2| \le \eps$. 
A function $g: \{0, 1\}^s \rightarrow \{0, 1\}^n$ is an $\eps$-biased generator if $g(U_s)$ is an $\eps$-biased distribution.
\end{definition}
The following $\eps$-biased generator is used.
\begin{theorem}[\cite{alon1992simple}]\label{thm:epsbiased}
For every $n\in\mathbb{N}$, every $\eps \in (0,1)$, there exists an explicit $\eps$-biased generator $\{0, 1\}^s \xrightarrow{} \{0, 1\}^n $ with $s = O( \log n +   \log( 1/\eps) )$. 
\end{theorem}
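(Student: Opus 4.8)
The statement is the classical small-bias generator of Alon, Goldreich, Håstad and Peralta, so the plan is to reproduce its \emph{powering} construction. Set $m = \lceil \log_2(n/\eps)\rceil$, so that $2^m \ge n/\eps$ and $m = O(\log n + \log(1/\eps))$. Fix once and for all an explicit irreducible polynomial of degree $m$ over $\F_2$; such a polynomial can be found by brute force, enumerating the $2^m = \poly(n/\eps)$ monic polynomials of degree $m$ and testing irreducibility (each test runs in $\poly(m)$ time), so this step is constructive. Identify $\{0,1\}^m$ with the field $\F_{2^m}$ via this polynomial, and also fix the standard $\F_2$-basis of $\F_{2^m}$ so that field elements can be viewed as vectors in $\F_2^m$. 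The generator $g : \{0,1\}^{2m} \to \{0,1\}^n$ reads its seed as a pair $(x,y) \in \F_{2^m}\times \F_{2^m}$ and outputs the string whose $i$-th coordinate, for $i = 0,1,\dots,n-1$, is the $\F_2$-inner product $\langle x^i, y\rangle$, where $x^i$ is the $i$-th power of $x$ in $\F_{2^m}$. All arithmetic is in $\F_{2^m}$ and computable in $\poly(m) = \polylog(n/\eps)$ time, so $g$ is explicit, and its seed length is $s = 2m = O(\log n + \log(1/\eps))$ as required.

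To prove the bias bound, fix a nonempty $S \subseteq \{0,\dots,n-1\}$. By bilinearity of the inner product,
\[
\bigoplus_{i\in S} g(x,y)_i \;=\; \Bigl\langle \sum_{i\in S} x^i,\; y \Bigr\rangle \;=\; \langle p_S(x), y\rangle,
\]
where $p_S(T) = \sum_{i\in S} T^i \in \F_{2^m}[T]$ is a nonzero polynomial of degree at most $n-1$. Condition on $x$. Whenever $p_S(x) \ne 0$, the map $y \mapsto \langle p_S(x), y\rangle$ is a nonzero $\F_2$-linear functional, hence takes the values $0$ and $1$ each with probability exactly $1/2$ over a uniform $y$; so conditioned on $p_S(x)\ne 0$ the parity is perfectly unbiased. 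Therefore
\[
\Bigl| \Pr_{x,y}\Bigl[\bigoplus_{i\in S} g(x,y)_i = 1\Bigr] - \tfrac12 \Bigr| \;\le\; \Pr_{x}\bigl[p_S(x) = 0\bigr] \;\le\; \frac{n-1}{2^m} \;\le\; \eps ,
\]
using that a nonzero polynomial of degree at most $n-1$ over a field has at most $n-1$ roots and $x$ is uniform over the $2^m$ elements of $\F_{2^m}$. This is exactly the defining property of an $\eps$-biased distribution, so $g$ is an $\eps$-biased generator.

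The argument is entirely elementary and I do not expect a genuine obstacle; the only points requiring care are (i) making $\F_{2^m}$ truly explicit, which is handled by the affordable brute-force search for an irreducible polynomial, and (ii) choosing $m$ so that simultaneously $2^m \ge n/\eps$ (needed for the bias estimate) and $2m = O(\log n + \log(1/\eps))$ (needed for the seed length), both satisfied by $m = \lceil \log_2(n/\eps)\rceil$. For completeness one could instead give the quadratic-character construction: pick a prime $p$ with $(n/\eps)^2 \le p \le 2(n/\eps)^2$ (found by deterministic primality testing together with Bertrand's postulate), use seed $x \in \F_p$, and let the $i$-th output bit be $\tfrac12\bigl(1 - \chi(x+i)\bigr)$ for the Legendre symbol $\chi$; there the bias bound $\le (n-1)/\sqrt{p} \le \eps$ follows from the Weil character-sum estimate $\bigl|\sum_{x\in\F_p}\prod_{i\in S}\chi(x+i)\bigr| \le (|S|-1)\sqrt{p}$. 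I would present the powering construction as the main proof, since it avoids invoking Weil's theorem while giving the same $O(\log n + \log(1/\eps))$ seed length.
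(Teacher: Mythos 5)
Your proof is correct: the paper does not prove \cref{thm:epsbiased} but imports it from \cite{alon1992simple}, and your powering construction (output bit $i$ equal to $\langle x^i,y\rangle$ over $\F_{2^m}$ with $2^m\ge n/\eps$, bias bounded by the probability that the nonzero sparse polynomial $p_S$ vanishes at $x$) is exactly one of the constructions in that cited source, with the seed length $2m=O(\log n+\log(1/\eps))$ as required. The only point worth noting is that your brute-force search for an irreducible polynomial costs $\poly(n/\eps)$ time rather than $\poly(\log(n/\eps))$, which is still well within the notion of explicitness the paper needs, since it anyway enumerates the entire $\poly(n)$-size sample space when derandomizing.
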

We also need the following XOR lemma.
\begin{lemma}[XOR Lemma]
\label{lem:XORlem}
The statistical distance between an $\eps$-biased distribution and a uniform distribution, both over $\{0, 1\}^n$,  is at most $ \eps  \sqrt{2^n}$.
\end{lemma}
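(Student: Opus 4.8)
The plan is to use Fourier analysis over $\F_2^n$ together with the Cauchy--Schwarz inequality. Let $X$ be the $\eps$-biased distribution and write $g(z) = \Pr[X = z] - 2^{-n}$ for the signed difference between the probability mass functions of $X$ and $U_n$. The statistical distance we want to bound is exactly $\frac12 \sum_{z \in \{0,1\}^n} |g(z)| = \frac12 \|g\|_1$. First I would apply Cauchy--Schwarz in the form $\|g\|_1 = \sum_z |g(z)|\cdot 1 \le \sqrt{2^n}\,\|g\|_2$, which reduces the task to bounding the $\ell_2$ norm $\|g\|_2^2 = \sum_z g(z)^2$.

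For the $\ell_2$ bound I would pass to the Fourier side. Writing $\chi_S(z) = (-1)^{\sum_{i \in S} z_i}$ for $S \subseteq [n]$ and $\hat g(S) = 2^{-n}\sum_z g(z)\chi_S(z)$, Parseval's identity gives $\sum_z g(z)^2 = 2^n \sum_{S \subseteq [n]} \hat g(S)^2$. Next I would compute the coefficients: since $\sum_z \Pr[X=z]\chi_S(z) = \E[\chi_S(X)]$ and $\sum_z 2^{-n}\chi_S(z)$ equals $1$ when $S = \emptyset$ and $0$ otherwise, we get $\hat g(\emptyset) = 0$ and, for $S \neq \emptyset$, $\hat g(S) = 2^{-n}\E[\chi_S(X)]$. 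The $\eps$-biased hypothesis says $\bigl|\Pr[\bigoplus_{i\in S}X_i = 1] - \tfrac12\bigr| \le \eps$, and since $\E[\chi_S(X)] = 1 - 2\Pr[\bigoplus_{i\in S}X_i = 1]$, this is exactly the statement $|\E[\chi_S(X)]| \le 2\eps$. Hence $|\hat g(S)| \le 2\eps/2^n$ for every nonzero $S$.

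Putting the pieces together: there are fewer than $2^n$ nonzero sets $S$, so $\sum_S \hat g(S)^2 \le 2^n \cdot (2\eps/2^n)^2 = 4\eps^2/2^n$, giving $\|g\|_2^2 = 2^n \sum_S \hat g(S)^2 \le 4\eps^2$ and thus $\|g\|_2 \le 2\eps$. Feeding this back into the Cauchy--Schwarz bound yields $\|g\|_1 \le 2\eps\sqrt{2^n}$, so the statistical distance $\frac12\|g\|_1$ is at most $\eps\sqrt{2^n}$, as claimed.

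This is a standard computation, so I do not expect a genuine obstacle; the only place I would be careful is the bookkeeping of the Fourier normalization and the factor of $2$ relating the ``bias'' as defined in the paper (a deviation of the parity probability from $\tfrac12$) to the Fourier coefficient $\E[\chi_S(X)]$. That factor of $2$ is precisely what cancels against the $\tfrac12$ in the definition of statistical distance, producing the clean bound $\eps\sqrt{2^n}$ rather than $2\eps\sqrt{2^n}$; getting this right is the one detail worth double-checking.
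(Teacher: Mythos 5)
Your proof is correct: the Cauchy--Schwarz reduction to the $\ell_2$ norm, Parseval, and the observation that the paper's bias condition gives $|\E[\chi_S(X)]|\le 2\eps$ for every nonempty $S$ (with the factor of $2$ cancelling the $\tfrac12$ in the statistical distance) is exactly the standard Vazirani-style argument, and the constants work out to the stated bound $\eps\sqrt{2^n}$. The paper states this lemma as a known tool without proof, so there is nothing different to compare against; your write-up supplies precisely the argument being invoked.
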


\section{General Construction of Our Codes}
All our codes follow the general strategy of code concatenation, which we describe below.

The outer code $\Cout$ with encoding function $:\enc_{out}: \alphabetout^k \rightarrow \alphabetout^n$  is an  $[n,k,d]$ Reed Solomon Code for Hamming errors.
We then use $n$ different inner codes $\Cin^1, \dots, \Cin^n$, such that for any $i \in [n]$, $\Cin^i$ is a linear code $\enc_{in}^i:\alphabetout \rightarrow \alphabetin^{n'}$, where $n'$ is the block length of the inner code. In this paper $\alphabetin$ always has constant size and we let $n' = \Theta(\log n)$. For different applications, we will need the inner codes to have slightly different properties.

Our final code $\C$ works naturally by first encoding the message using the outer code, then encoding each symbol of the outer code using the inner codes. This gives a codeword over $\alphabetin$ with length $N=n\cdot n'$. If the outer code and all the inner codes are linear, the concatenated code is also linear.

\section{Constructions For High Noise}\label{sec:higherror}
In this section we give our linear codes that can correct $1-\eps$ fraction of insdel errors, for any constant $\eps>0$. Our codes can still achieve a constant rate. 

\paragraph*{The construction.} Following our general construction, here we take $\Cout$ to be an $[n,k,d]_n$ Reed-Solomon code with alphabet size $|\alphabetout| = n$, $k =\gamma n$ and $d=(1-\gamma) n$ for some constant $\gamma>0$ to be chosen later. We construct $n$ different inner codes $\Cin^1, \dots, \Cin^n$ with alphabet size $|\alphabetin| = \poly(1/\gamma)$, message length $k'=\Theta(\log n)$, and codeword length $n'=\Theta(\log n)$, with the following property.

\begin{property}\label{prop:property1}
For any two codewords $x \in \Cin^i, y \in \Cin^j$, if either of the following two conditions holds:
\begin{enumerate}
    \item $i \neq j$, and $x \neq 0^{n'}$ or $y \neq 0^{n'}$.
    \item $i = j$ and $x \neq y$.
\end{enumerate}
Then we have $\lcs(x, y) \leq \gamma n'$.
\end{property}

\begin{lemma}\label{lem:property1}
There exists an efficient construction of $n$ inner codes $\Cin^1, \dots, \Cin^n$, where each $\Cin^i$ has alphabet size $|\alphabetin| = \poly(1/\gamma)$ and rate $\Omega(\gamma)$.
\end{lemma}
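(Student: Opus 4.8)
The plan is to construct the $n$ inner codes $\Cin^1,\dots,\Cin^n$ by choosing their generator matrices from a single $\eps$-biased sample space, and to argue via a union bound that with positive probability (indeed, probability close to $1$) the resulting codes satisfy Property~\ref{prop:property1}. We set the inner message length $k'=\Theta(\log n)$ and block length $n'=\Theta(\log n)$ with $k'/n'=\Omega(\gamma)$, so that each $\Cin^i$ has rate $\Omega(\gamma)$; the alphabet $\alphabetin$ of size $\poly(1/\gamma)$ is identified with $\{0,1\}^{b}$ for $b=O(\log(1/\gamma))$, so an inner codeword is a binary string of length $bn'=\Theta(\log n)$ and a generator matrix of $\Cin^i$ is specified by $O(\log n \cdot \log(1/\gamma))=O(\log n)$ bits. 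Concatenating the descriptions of all $n$ matrices needs $O(n\log n)$ bits, which we will draw from the $\eps$-biased generator of Theorem~\ref{thm:epsbiased} with $\eps=1/\poly(n)$; this costs only $s=O(\log n)$ truly random bits, so a good seed can be found by brute-force search over $\poly(n)$ seeds, giving the claimed efficient (deterministic) construction.

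The heart of the argument is to show the failure probability is small when the matrices are truly uniform, and that the property is ``local'' enough to be preserved under the $\eps$-biased distribution. First I would fix a pair of codewords $x\in\Cin^i$, $y\in\Cin^j$ falling under case 1 or case 2 of Property~\ref{prop:property1}, and a candidate common subsequence of length $\ell=\lceil\gamma n'\rceil$, specified by a monotone alignment between $\ell$ coordinates of $x$ and $\ell$ coordinates of $y$. For truly uniform generator matrices, the key claim is that under the stated non-degeneracy conditions, the joint distribution of the relevant bits of $(x,y)$ — equivalently, of the $2\ell$ bits read off by the alignment — is uniform, so the alignment is a genuine bitwise match with probability $2^{-\ell}$ (here one uses that $x\ne 0$ or $y\ne0$ in case 1, and $x\ne y$ in case 2, to ensure the corresponding linear functionals of the seed are linearly independent, exactly as in the analogous argument of \cite{CGHL21}). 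Union-bounding over the at most $\binom{n'}{\ell}^2\le 2^{2n'}$ alignments, the at most $n^2$ choices of $(i,j)$, and the at most $2^{2k'}$ pairs of messages, the total failure probability in the uniform case is at most $2^{2n'+2k'}\cdot n^2\cdot 2^{-\ell}$; since $\ell=\gamma n'$ and $n'=\Theta(\log n)$, choosing the hidden constant in $n'=\Theta(\log n)$ large enough relative to $1/\gamma$ (and $k'\le n'/3$, say) makes this $o(1)$.

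To move from uniform to $\eps$-biased: for each fixed tuple $(i,j,\text{messages},\text{alignment})$, the event ``the alignment is a valid bitwise match'' depends on only $2\ell=O(\log n)$ bits of the generated string; by the XOR Lemma (Lemma~\ref{lem:XORlem}), the $\eps$-biased distribution restricted to these $2\ell$ coordinates is within $\eps\sqrt{2^{2\ell}}=\eps\cdot 2^{O(\log n)}=\eps\cdot\poly(n)$ of uniform in statistical distance, so taking $\eps=1/\poly(n)$ small enough, the probability of each bad event changes by at most $1/\poly(n)$, and the union bound over $\poly(n)$ bad events still gives total failure probability $o(1)$. Hence some seed yields inner codes with Property~\ref{prop:property1}; checking a given seed only requires examining all $\poly(n)$ triples (pair of codes, pair of messages) and computing an LCS of two strings of length $\Theta(\log n)$, which is $\poly(n)$ time, so exhaustive search over the $\poly(n)$ seeds is efficient. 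The main obstacle I anticipate is the linear-algebra bookkeeping in the uniform case: one must carefully verify that the non-degeneracy conditions of Property~\ref{prop:property1} (in particular distinguishing case $i=j$ with $x\ne y$ from $i\ne j$) translate into linear independence of the functionals defining the $2\ell$ read-off bits over the random seed, so that the $2^{-\ell}$ bound genuinely holds; handling the $i=j$ case is slightly delicate because $x$ and $y$ then share the same generator matrix, and one needs $x-y\ne 0$ together with the structure of the alignment to conclude independence.
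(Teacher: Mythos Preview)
Your overall strategy---random linear inner codes followed by derandomization via an $\eps$-biased sample space, then brute-force search over $\poly(n)$ seeds---is exactly the paper's. However, your union bound in the uniform case is broken, and the break is not a bookkeeping slip but a failure to exploit the large alphabet.

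You bound the probability that a fixed monotone alignment of length $\ell=\gamma n'$ is an actual match by $2^{-\ell}$, then union-bound crudely via $\binom{n'}{\ell}^2\le 2^{2n'}$ to get total failure probability at most $2^{2n'+2k'}\cdot n^2\cdot 2^{-\gamma n'}$. For this to be $o(1)$ you would need $\gamma n' > 2n'+2k'+2\log n$, hence $\gamma>2$, which is impossible. Enlarging the hidden constant in $n'=\Theta(\log n)$ does not help: the $2n'$ term and the $\gamma n'$ term scale identically with $n'$, so the exponent $(2-\gamma)n'$ only grows.

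The missing ingredient is the alphabet size. Property~\ref{prop:property1} is stated over $\alphabetin$ with $|\alphabetin|=q=\poly(1/\gamma)$, so a single \emph{symbol} match has probability $q^{-1}$, not $2^{-1}$, and the correct bound for a fixed alignment of $\ell$ symbols is $q^{-\ell}$. Combined with the tighter estimate $\binom{n'}{\gamma n'}\le(e/\gamma)^{\gamma n'}$, the paper takes $q=(e/\gamma)^3$ so that $\binom{n'}{\gamma n'}^2 q^{-\gamma n'}\le(e/\gamma)^{-\gamma n'}$; this decay is what beats the $n^2\cdot q^{2\log n}$ index/message factor once $n'=\Theta(\log n/\gamma)$. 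Your write-up identifies $\alphabetin$ with $\{0,1\}^b$ but then tracks only one bit per matched symbol (``the $2\ell$ bits read off by the alignment''), discarding the factor of $b=\log q$ in the exponent that makes the whole computation go through.

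A secondary point on the derandomization: you assert that under the $\eps$-biased seed the relevant bits of $(x,y)$ are close to uniform by Lemma~\ref{lem:XORlem}. This is not automatic when $i=j$: the bits of $x$ and $y$ are then linear in the \emph{same} generator matrix, and for certain message pairs (specifically $y_0=p\cdot x_0$ for a scalar $p\in\F_q$) some non-trivial parity of the bits of $(x,y)$ is identically zero as a function of the seed, so the XOR Lemma gives nothing. The paper isolates this degenerate case and handles it separately, showing that $x$ alone is close to uniform and then redoing the combinatorial bound for $y=p\cdot x$. You flag the $i=j$ case as ``delicate,'' but the obstacle is not the one you name (the condition $x\ne y$ by itself is insufficient); you would need this extra case analysis.
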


\begin{proof}
We first show that if we pick $n$ independent random linear inner codes $\Cin^1, \dots, \Cin^n$ over an alphabet size $|\alphabetin| = \poly(1/\gamma)$, then they satisfy \cref{prop:property1} with high probability. We then show how to derandomize the construction using a small biased sample space.

Fix a field $\F_q$. For each $\Cin^i$ we independently pick $\log n$ uniformly random vectors in $\F_q^{n'}$ with $n'=\Theta(\log n/\gamma)$ as the basis for $\Cin^i$, or equivalently, the rows in the generating matrix of $\Cin^i$. We bound the probability that there exist two codewords $x \in \Cin^i, y \in \Cin^j$ that satisfy the conditions of \cref{lem:property1} but $\lcs(x, y) > \gamma n'$.

We have the following claim.

\begin{claim}
\label{claim:pairwise-ind}
Consider any fixed common subsequence between $x$ and $y$ of length $t$, where the corresponding indices in $x$ are $\{s_1, \cdots, s_t\}$ and the corresponding indices in $y$ are $\{r_1, \cdots, r_t\}$. Then 
\[\Pr[\forall k \in [t], x_{s_k}=y_{r_k}] \leq q^{-t}.\]
\end{claim}

To prove the claim we have two cases.
\begin{description}
\item [Case 1:] $i \neq j$, and $x \neq 0^{n'}$ or $y \neq 0^{n'}$. This is the easy case. Since $i \neq j$, and all the entries in the generating matrices of $\Cin^i$ and $\Cin^j$ are chosen independently uniformly from $\F_q$, we know that the events $x_{s_k}=y_{r_k}$ are all independent, even if $x = 0^{n'}$ or $y = 0^{n'}$. Furthermore, the probability of each such event is $1/q$. Hence the claim follows.
\item [Case 2:] $i = j$. In this case, the events $x_{s_k}=y_{r_k}$ are not necessarily all independent. However, the claim still follows from the following claim in \cite{CGHL21}, which deals exactly with this situation.

\begin{claim}\label{clm:samecode}[Claim 4.2 of \cite{CGHL21}]
Let $G$ be a random generating matrix for a linear code over $\F_q$. For any two different messages $x^i, x^j$ and codewords $C^i=x^i G, C^j=x^j G$, consider any fixed common subsequence between $C^i$ and $C^j$ of length $t$, where the corresponding indices in $C^i$ are $\{s_1, \cdots, s_t\}$ and the corresponding indices in $C^j$ are $\{r_1, \cdots, r_t\}$. Then 
\[\Pr[\forall k \in [t], C^i_{s_k}=C^j_{r_k}] \leq q^{-t}.\]
\end{claim}
\end{description}

Now by a standard union bound, and noticing that the total number of possible cases where two strings of length $n'$ have a common subsequence of length $\gamma n'$ is at most $\binom{n'}{\gamma n'}^2$, we have
\[\Pr[\text{\cref{prop:property1} does not hold}] \leq  n^2 q^{2 \log n} \binom{n'}{\gamma n'}^2 q^{-\gamma n'}\leq  \left (\frac{e}{\gamma} \right )^{2\gamma n'} n^2 q^{2 \log n-\gamma n'}.\]
Therefore, one can set $q=(\frac{e}{\gamma})^3$ and $n'=\Theta(\log n/\gamma)$ so that the above probability is $q^{-\Omega(\log n)}=1/\poly(n)$.

Next we show how to derandomize the above construction using a small biased space. Without loss of generality we assume the field we use is $\F_q$ with $q=2^{\ell}$. Thus, by choosing an arbitrary basis $b_1, \cdots, b_{\ell}$ in $\F_q$ we can identify the field with the vector space $\F_2^{\ell}$, such that any $a \in \F_q$ can be expressed as $a =\sum_{i \in [\ell]} a_i b_i$, where $\forall i, a_i \in \F_2$. In this way, the generating matrix of each $\Cin^i$ can be viewed as consisting of $\ell n' \log n=\Theta(\ell \log^2 n )$ bits.

We pick a $\tau$-biased sample space with $n \ell n' \log n $ bits for some $\tau=1/\poly(n)$ to be chosen later. Note that by Theorem~\ref{thm:epsbiased} this can be generated by $O(\log n)$ uniform random bits.

Given $\ell$ bits $a_1, \cdots, a_{\ell}$ which defines the field element $a =\sum_{i \in [\ell]} a_i b_i$, and any $p \in \F_q$, consider the operation $p \cdot a$ and the corresponding coefficient in the basis $b_1$. It's not hard to see that this is a $\F_2$-linear function (i.e., a parity) of $a_1, \cdots, a_{\ell}$. Call this parity $L_p (a_1, \cdots, a_{\ell})$. We have the following claim.

\begin{claim}\label{clm:parity}
$L_p (a_1, \cdots, a_{\ell}) \equiv 0$ if and only if $p=0$.
\end{claim}

\begin{proof}[Proof of the claim.]
The ``if" part is trivially true. For the other part, note that if $p \neq 0$ then $p b_1, \cdots, p b_{\ell}$ must also be linearly independent and thus form a basis of $\F_q$. Therefore, some $p b_i$ must have a non-zero coefficient in $b_1$ and thus $L_p (a_1, \cdots, a_{\ell})$ has a term $a_i$ in the parity, therefore it cannot be the $0$ function.
\end{proof}

Note that there are altogether $2^{\ell}$ different parity functions involving $a_1, \cdots, a_{\ell}$, and $q=2^{\ell}$ elements in $\F_q$. Thus the previous claim also immediately implies the following claim.

\begin{claim}\label{clm:parity2}
Any parity function involving $a_1, \cdots, a_{\ell}$ is equivalent to $L_p (a_1, \cdots, a_{\ell})$ for some $p \in \F_q$.
\end{claim}

Now consider the two codewords $x \in \Cin^i, y \in \Cin^j$. Let $x_0$ and $y_0$ be the corresponding messages for $x$ and $y$ respectively. We now have the following claim.

\begin{claim}
Unless $i=j$ and $y_0 = p \cdot x_0$ or $x_0 = p \cdot y_0$ for some $p \in \F_q$, under the $\tau$-biased sample space, the joint distribution of $(x, y)$ is $q^{n'}\tau$-close to the uniform distribution over $\F_q^{2n'}$.
\end{claim}

\begin{proof}[Proof of the claim.]
Let $x=(x_1, \cdots, x_{n'}) \in \F_q^{n'}=\F_2^{\ell n'}$ and $y=(y_1, \cdots, y_{n'}) \in \F_q^{n'}=\F_2^{\ell n'}$. Consider any non-trivial parity of the $2 \ell n'$ bits, which by \cref{clm:parity2} corresponds to the coefficient of $b_1$ under some function $\sum_{k \in [n']} (p^x_k x_k+p^y_k y_k)$, where $\forall k, p^x_k, p^y_k \in \F_q$, and they are not all $0$.

If $i \neq j$, then $\sum_{k \in [n']} p^x_k x_k$ and $\sum_{k \in [n']} p^y_k y_k$ use different bits in the $\tau$-biased sample space. Since $x, y$ are not both $0^{n'}$, the resulted parity is a non-trivial parity of the bits in the sample space, which by definition has bias at most $\tau$.

Otherwise we have $i=j$. Let $G$ be the generating matrix for $\Cin^i$, thus $x=x_0 G$ and $y=y_0 G$. For any $k \in [n']$, let $G_k$ be the $k$'th column of $G$. We have

\[\sum_{k \in [n']} (p^x_k x_k+p^y_k y_k)=\sum_{k \in [n']} (p^x_k x_0 G_k+p^y_k y_0 G_k)= \sum_{k \in [n']} (p^x_k x_0+p^y_k y_0) G_k.\]
Notice that each entry in each $G_k$ is independently uniformly chosen from $\F_q=\F_2^{\ell}$. Thus by \cref{clm:parity} if the coefficient of the above sum in $b_1$ is the trivial parity $0$, then we must have $\forall k \in [n'], p^x_k x_0+p^y_k y_0=0$. This implies that either $y_0 = p \cdot x_0$ or $x_0 = p \cdot y_0$ for some $p \in \F_q$. 

Otherwise, the parity is a non-trivial parity of the bits in the sample space, which by definition has bias at most $\tau$. Now, by Lemma~\ref{lem:XORlem}, the joint distribution of $(x, y)$ is $q^{n'}\tau$-close to the uniform distribution over $\F_q^{2n'}$.
\end{proof}

Back to the proof of our lemma. If the conditions of the above claim hold, then the joint distribution of $(x, y)$ is $q^{n'}\tau$-close to the uniform distribution. Hence, the probability that there exists any common subsequence of length $\gamma n'$ between $x$ and $y$ is at most $\binom{n'}{\gamma n'}^2 q^{-\gamma n'}+q^{n'}\tau$.

On the other hand, if the conditions of the above claim do not hold, then without loss of generality assume that $y_0 = p \cdot x_0$ for some $p \in \F_q$. Hence $p \neq 1$. In this case, notice that we also have $y = p \cdot x$, and thus the probability that there exists any common subsequence of length $\gamma n'$ between $x$ and $y$ is completely determined by the random variables in $x$. Note that any non-trivial parity of the bits in $x$ is also a non-trivial parity of the bits of the $\tau$-biased sample space, which has bias at most $\tau$. By Lemma~\ref{lem:XORlem}, the distribution of $x$ is $q^{n'/2}\tau$-close to being uniform on $\F_q^{n'}$.

We have the following claim.

\begin{claim}
Let $x$ be a uniformly random vector in $\F_q^{n'}$, and $y = p \cdot x$. Then 
\[\Pr[\exists \text{a common subsequence of length } t \text{ between } x \text{ and } y] \leq  \binom{n'}{t}^2 q^{-t}.\]
\end{claim}

\begin{proof}[Proof of the claim.]
Consider any fixed common subsequence of length $t$ between $x$ and $y$. Assume where the corresponding indices in $x$ are $\{s_1, \cdots, s_t\}$ and the corresponding indices in $y$ are $\{r_1, \cdots, r_t\}$, such that $s_1 < s_2 < \cdots < s_t$ and $r_1 < r_2 < \cdots < r_t$. For any $k \in [t]$, let $m_k= \max(s_k, r_k)$. Notice that $m_1 < m_2 < \cdots < m_t$. Define $E_k$ to be the event $x_{s_k}=y_{r_k}$.

For each $k \in [t]$, if $s_k=r_k$, then 
\[\Pr[E_k] = \Pr[x_{s_k}=p \cdot x_{s_k}] =\Pr[x_{s_k}=0]=\frac{1}{q}. \]

Furthermore, since $s_k=r_k=m_k$ is larger than all $\{s_{k'}, r_{k'}, k' < k\}$, the event $E_k$ is independent of all $\{E_{k'}, k' < k\}$. Thus 
\[\Pr[E_k | \{E_{k'}, k' < k\}] =\frac{1}{q}.\]

Otherwise, $s_k \neq r_k$ and without loss of generality assume $s_k > r_k$. This means $s_k=m_k$ and is larger than all $\{s_{k'}, r_{k'}, k' < k\}$. We can now first fix all $\{x_{s_{k'}}, y_{r_{k'}}, k' < k\}$ and $y_{r_k}$, and conditioned on this fixing $x_{s_k}$ is still uniform over $\F_q$. Thus

\[\Pr[E_k] = \Pr[x_{s_k}=p \cdot x_{r_k}] =\frac{1}{q}. \]

Note that any such fixing also fixes the outcomes of all $\{E_{k'}, k' < k\}$. Hence we also have

\[\Pr[E_k | \{E_{k'}, k' < k\}] =\frac{1}{q}.\]
Therefore, the above equation holds in all cases, and for all $k$. This gives 
\[\Pr[\bigcap_{k \in [t]} E_k] \leq q^{-t},\]
and the claim follows from a union bound.
\end{proof}
Since $x$ is $q^{n'/2}\tau$-close to being uniform on $\F_q^{n'}$, we have that the probability that there exists any common subsequence of length $\gamma n'$ between $x$ and $y$ is at most $\binom{n'}{\gamma n'}^2 q^{-\gamma n'}+q^{n'/2}\tau$ in this case.

To summarize, using the $\tau$-biased sample space we always have that the probability that there exists any common subsequence of length $\gamma n'$ between $x$ and $y$ is at most $\binom{n'}{\gamma n'}^2 q^{-\gamma n'}+q^{n'}\tau$. By another union bound, we have

\begin{align*}
\Pr[\text{\cref{prop:property1} does not hold}] & \leq  n^2 q^{2 \log n} \left (\binom{n'}{\gamma n'}^2 q^{-\gamma n'}+q^{n'}\tau \right) \\ &\leq  \left (\frac{e}{\gamma} \right )^{2\gamma n'} n^2 q^{2 \log n-\gamma n'}+n^2 q^{n'+2 \log n} \tau.    
\end{align*}

Therefore, one can still set $q=(\frac{e}{\gamma})^3$, $n'=\Theta(\log n/\gamma)$, and $\tau=q^{-\Omega(\log n/\gamma)}=1/\poly(n)$ so that the above probability is $q^{-\Omega(\log n)}=1/\poly(n)$.

Once we know this, we can exhaustively search the $\tau$-biased sample space and find a sample point which gives us a construction that satisfies \cref{prop:property1}. Since we only have $\poly(n)$ sample points and checking each sample point takes polynomial time, altogether this takes polynomial time.
\end{proof}

Note that our concatenated code $\C$ now has rate $\Omega(\gamma^2)$. Further, \cref{prop:property1} implies the following property:
\begin{property}\label{prop:property1a}
\begin{enumerate}
    \item $\forall i \neq j$, we have $\Cin^i \bigcap \Cin^j = \{0^{n'}\}$.
    \item For any $i, j \in [n]$ and any two codewords $x \in \Cin^i, y \in \Cin^j$, if $x \neq y$ then $\lcs(x, y) \leq \gamma n'$.
\end{enumerate}
\end{property}

\paragraph*{Notation.} Let $z$ be any substring of a codeword from the concatenated code $\C$, and assume $z$ is a substring of $z_{j} \circ z_{j+1} \circ \cdots \circ z_{j+\ell}$, where $\forall t$, $z_{j+t}$ is a codeword in $\Cin^{j+t}$. We say the codewords $\{z_{j+t}, t=0, \cdots, \ell\}$ contribute to the string $z$.

We now show that \cref{prop:property1} and Property~\ref{prop:property1a} give us the following lemma.

\begin{lemma}\label{lem:property2}

Let $x$ be a codeword from the code $\Cin^i$. Let $z$ be any substring of a codeword from the concatenated code $\C$, and $\{z_{j+t}, t=0, \cdots, \ell\}$ are the inner codewords contributing to $z$. If $\forall t$, $z_{j+t} \neq x$, 
then we have $\lcs(x, z) < 2\gamma (|x|+|z|)$.
\end{lemma}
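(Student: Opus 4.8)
The plan is to bound the length of a longest common subsequence between $x$ and $z$ by charging the matched positions of $z$ against the inner codewords $z_{j}, z_{j+1}, \ldots, z_{j+\ell}$ that contribute to $z$, and then invoking Property~\ref{prop:property1a} block by block. Fix a longest common subsequence of $x$ and $z$, realized by a monotone alignment, and for each $t \in \{0,\ldots,\ell\}$ let $M_t$ denote the set of matches whose $z$-endpoint lies inside the copy of $z_{j+t}$ sitting inside $z$. (A minor technical point: the first and last of these blocks may only appear as partial substrings in $z$, which only helps us, since it can only shrink $|M_t|$.) The matches in $M_t$ all have their $z$-side inside $z_{j+t}$, and because the alignment is monotone, their $x$-side lies in a contiguous window $x[a_t, b_t]$ of $x$; moreover these windows are themselves ordered and non-overlapping, so $\sum_t (b_t - a_t + 1) \le |x|$. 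Thus $|M_t|$ is at most the length of a common subsequence between the substring $x[a_t,b_t]$ of $x$ and the (sub)string of $z_{j+t}$ appearing in $z$.

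The next step is the case analysis on each block $z_{j+t}$. If $z_{j+t} = 0^{n'}$, then since $x \ne z_{j+t}$ we have $x \ne 0^{n'}$, hence $x$ is a nonzero codeword of $\Cin^i$; by Property~\ref{prop:property1a}(1), $\Cin^i \cap \Cin^{j+t}$ contains only $0^{n'}$, so $x \notin \Cin^{j+t}$, and in particular $x \neq z_{j+t}$ holds as codewords of possibly different codes — so Property~\ref{prop:property1a}(2) applies and gives $\lcs(x, z_{j+t}) \le \gamma n'$. If $z_{j+t} \neq 0^{n'}$, then directly $x \ne z_{j+t}$ as strings, and Property~\ref{prop:property1a}(2) again gives $\lcs(x, z_{j+t}) \le \gamma n'$. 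Either way, $\lcs$ of $x$ with the whole block $z_{j+t}$ is at most $\gamma n'$; since $|M_t|$ is at most the $\lcs$ of a substring of $x$ with a substring of $z_{j+t}$, certainly $|M_t| \le \gamma n'$. But we also have the trivial bound $|M_t| \le b_t - a_t + 1$, i.e.\ the size of the $x$-window. Taking the smaller of these two estimates is what makes the sum work.

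Finally, summing over $t$ and using $\min(\gamma n', b_t-a_t+1) \le \tfrac12(\gamma n' + (b_t-a_t+1))$, we get
\[
\lcs(x,z) \;=\; \sum_{t=0}^{\ell} |M_t| \;\le\; \sum_{t=0}^{\ell} \min\!\big(\gamma n',\, b_t - a_t + 1\big) \;\le\; \frac{1}{2}\Big( (\ell+1)\gamma n' + \sum_{t=0}^{\ell}(b_t - a_t + 1)\Big) \;\le\; \frac{1}{2}\big( |z| \cdot \gamma + |x| \big),
\]
where the last step uses $(\ell+1) n' \le |z| + 2n' $... — here I should be slightly more careful: the blocks contributing to $z$ number $\ell+1$, and all but possibly the two end blocks are fully contained in $z$, so $(\ell-1)n' \le |z|$, giving $(\ell+1)\gamma n' \le \gamma|z| + 2\gamma n' \le \gamma|z| + 2\gamma|x|$ (since $n' \ge |x|$ would be false in general — rather $|x| = n'$ as $x \in \Cin^i$ has block length $n'$, so $2\gamma n' = 2\gamma|x|$). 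Hence $\lcs(x,z) \le \tfrac12(\gamma|z| + \gamma|x| + 2\gamma|x|) \le 2\gamma(|x|+|z|)$, as claimed. The main obstacle I anticipate is handling the two partial end-blocks cleanly and making sure the $\min$-splitting argument is set up so that both the "few matches per block" bound (from Property~\ref{prop:property1a}) and the "bounded total $x$-window" bound are simultaneously available; the rest is bookkeeping.
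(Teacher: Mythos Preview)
Your argument has a genuine arithmetic slip that breaks the proof. After the $\min$-splitting step you obtain
\[
\lcs(x,z)\;\le\;\tfrac12\Big((\ell+1)\gamma n' + \sum_{t}(b_t-a_t+1)\Big),
\]
and you correctly bound $(\ell+1)\gamma n' \le \gamma|z|+2\gamma|x|$ and $\sum_t(b_t-a_t+1)\le |x|$. But in the last line you substitute $\gamma|x|$ for the second term instead of $|x|$. With the correct substitution the bound reads
\[
\lcs(x,z)\;\le\;\tfrac12\big(\gamma|z|+2\gamma|x|+|x|\big),
\]
which for small $\gamma$ is roughly $|x|/2$ and is \emph{not} dominated by $2\gamma(|x|+|z|)$. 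So the inequality $\min(a,b)\le (a+b)/2$ is far too lossy here: it trades a bound of order $\gamma n'$ per block for one that carries a term of order $|x|$ with no $\gamma$ in front.

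The fix is simply to drop the $x$-window bound altogether. You already established $|M_t|\le \lcs(x,z_{j+t})\le \gamma n'$ for every $t$; summing this directly gives $\lcs(x,z)\le (\ell+1)\gamma n'$, and your own estimate $(\ell+1)n'\le |z|+2n'=|z|+2|x|$ then yields $\lcs(x,z)\le \gamma(|z|+2|x|)<2\gamma(|x|+|z|)$. This is exactly the paper's argument: a straight block count with no averaging trick. The $x$-window bookkeeping and the $\min$-splitting are not needed and, as written, actively hurt you.
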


\begin{proof}
By Property~\ref{prop:property1a}, the longest common subsequence between $x$ and any $z_{j+t}$ has length at most $\gamma n'$. If $\ell=1$, then we have 
\[\lcs(x, z) \leq \gamma n' < 2 \gamma (|x|+|z|).\]

Otherwise we have $\ell \geq 2$. Notice that $|z| > (\ell-2)n'$. Thus we have 
\[\lcs(x, z) \leq \ell \gamma n' \leq 2 \gamma (\ell-1)n' < 2 \gamma(|x|+|z|).\]
\end{proof}

We can now prove the following lemma.

\begin{lemma}\label{lem:distance}
For any two different codewords $C_1, C_2 \in \C$, we have $\ed(C_1, C_2) > 2(1-6 \gamma)N$.
\end{lemma}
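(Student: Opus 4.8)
The plan is to bound $\lcs(C_1,C_2)$ from above by $(6\gamma)N$ plus lower-order terms, and then convert this into the claimed edit distance bound via the identity $\ed(C_1,C_2)=|C_1|+|C_2|-2\lcs(C_1,C_2)=2N-2\lcs(C_1,C_2)$. So it suffices to show $\lcs(C_1,C_2)<6\gamma N = 6\gamma n n'$.

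First I would fix an optimal common subsequence (equivalently, a monotone matching $M$) between $C_1$ and $C_2$, and break $C_1$ into its $n$ inner-codeword blocks $z_1,\dots,z_n$ where $z_i\in\Cin^i$. This partitions the matched positions of $C_1$ into $n$ groups; group $i$ consists of the matches whose $C_1$-endpoint lies in block $z_i$. Because $M$ is monotone, the $C_2$-endpoints of group $i$ all lie in some contiguous substring $w_i$ of $C_2$, and these substrings $w_1,\dots,w_n$ appear in order along $C_2$ (possibly overlapping at shared boundary blocks, but each internal block of $C_2$ is covered by at most two of the $w_i$). Then $\lcs(C_1,C_2)\le\sum_{i\in[n]}\lcs(z_i,w_i)$. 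Now I would split the index set $[n]$ into "good" and "bad" blocks: call block $i$ bad if the inner codeword $z_i$ equals one of the inner codewords of $C_2$ contributing to $w_i$, and good otherwise. For a good block, Lemma~\ref{lem:property2} applies directly (with $x=z_i$, $z=w_i$), giving $\lcs(z_i,w_i)<2\gamma(|z_i|+|w_i|)=2\gamma(n'+|w_i|)$. Summing over good blocks and using $\sum_i|w_i|\le 2N$ (since each position of $C_2$ is counted at most twice) plus $\sum_i n' = N$, the total contribution of good blocks is at most $2\gamma(N+2N)=6\gamma N$.

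The remaining step is to argue that bad blocks contribute essentially nothing beyond what's already counted, i.e.\ that there are few of them and they add only a lower-order term. Here is where the outer code's distance comes in. If $z_i$ is bad and $z_i\neq 0^{n'}$, then $z_i$ matches a non-zero inner codeword of $C_2$ at position $i$ or $i\pm1$ in the same inner code $\Cin^i$ or $\Cin^{i\pm1}$; but Property~\ref{prop:property1a}(1) says $\Cin^i\cap\Cin^j=\{0^{n'}\}$ for $i\neq j$, so a non-zero match forces the matching partner to be in $\Cin^i$ at position $i$ of $C_2$, i.e.\ the outer symbols of $C_1$ and $C_2$ agree at position $i$. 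By the Reed--Solomon distance $d=(1-\gamma)n$, the outer codewords of $C_1,C_2$ agree in at most $k-1 < \gamma n$ positions, so there are at most $\gamma n$ non-zero bad blocks, each contributing at most $n'$ to the LCS, hence at most $\gamma n n' = \gamma N$ total — a lower-order term that can be absorbed. For a bad block with $z_i=0^{n'}$: the issue is that a zero block of $C_1$ can align with a zero block of $C_2$, which is "allowed" and genuinely contributes. But such a match is simply the $\lcs(0^{n'},w_i)$ term, and I would re-examine whether Lemma~\ref{lem:property2} (which requires $z_{j+t}\neq x$ for all contributing blocks) covers it; it does not when $w_i$ itself contains a $0$ block. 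So I would handle zero blocks separately: a zero block $z_i=0^{n'}$ of $C_1$ can contribute its full $n'$ to the LCS only when $w_i$ also contains a zero block, and I'd bound the total such contribution by $n'$ times (number of zero blocks in $C_1$) — but that could be up to $N$, which is too much.

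The main obstacle, therefore, is precisely the zero-block bookkeeping: one must not double-count, and must be careful that zero blocks of $C_1$ matched to zero blocks of $C_2$ don't blow up the bound. The clean fix is to not split $C_1$'s blocks against fixed substrings $w_i$, but instead to run the argument so that each matched position of $C_2$ is charged once: reindex the sum as $\lcs(C_1,C_2)\le\sum_i\lcs(z_i,w_i)$ but note that if $z_i=0^{n'}$ and $w_i$ contains no zero block of $C_2$ then Lemma~\ref{lem:property2} still gives the $2\gamma(|z_i|+|w_i|)$ bound, while if $z_i=0^{n'}$ and the only way to get many matches is via a zero block of $C_2$, that zero block of $C_2$ is at position $i$ (by monotonicity and the block structure), so again the outer symbols agree — contradicting RS distance except on $<\gamma n$ positions. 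Thus every "exceptional" block (non-zero bad, or zero matched to a zero) forces agreement of outer symbols at that coordinate, so there are $<\gamma n$ of them, each contributing $\le n'$, total $<\gamma N$; all other blocks obey $\lcs(z_i,w_i)<2\gamma(n'+|w_i|)$, summing to $<6\gamma N$. Adding these, $\lcs(C_1,C_2)<7\gamma N$ — so to land exactly at $6\gamma$ I would tighten the constant (e.g.\ use $\sum|w_i|\le N + (\text{overlap})$ more carefully, or absorb the $\gamma N$ into the $2\gamma$ estimate since good blocks with $|w_i|$ summing below $N$ leave slack), and conclude $\ed(C_1,C_2)=2N-2\lcs(C_1,C_2)>2(1-6\gamma)N$.
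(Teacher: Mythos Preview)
Your approach is the same as the paper's, but there are two concrete errors that prevent the argument from closing.

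\textbf{First}, the substrings $w_i$ are pairwise disjoint, so $\sum_i |w_i|\le N$, not $2N$. By monotonicity of the matching, if the last match in group $i$ lands at position $b_i$ of $C_2$ and the first match in group $i{+}1$ lands at $a_{i+1}$, then $b_i<a_{i+1}$ (a position of $C_2$ is matched at most once). Hence the minimal intervals $w_i=[a_i,b_i]$ do not overlap at all. What can overlap between consecutive $w_i$'s is the set of \emph{contributing blocks} of $C_2$ in the sense of Lemma~\ref{lem:property2}, but that plays no role in the sum $\sum|w_i|$. With the correct bound $\sum_i|w_i|\le N$, the good blocks contribute at most $2\gamma\sum_{\text{good}}(|z_i|+|w_i|)\le 2\gamma(N+N)=4\gamma N$, leaving room for the $2\gamma N$ from bad blocks.

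\textbf{Second}, your handling of zero blocks contains a false step. The claim that ``the zero block of $C_2$ is at position $i$ (by monotonicity and the block structure)'' is simply wrong: monotonicity is an order constraint, not a position constraint, and a zero block at position $i$ of $C_1$ can perfectly well be matched to a zero block at position $j\ne i$ of $C_2$. So you cannot conclude the outer symbols agree at $i$. The correct (and much simpler) observation is that without loss of generality $C_1\ne 0^N$, so its outer Reed--Solomon codeword is nonzero and therefore has weight at least $d=(1-\gamma)n+1$, i.e.\ at most $\gamma n-1<\gamma n$ zero coordinates. Thus there are fewer than $\gamma n$ indices $i$ with $z_i=0^{n'}$, regardless of what happens in $C_2$. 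Combined with the $<\gamma n$ non-zero bad indices (where the outer symbols genuinely agree), the total number of bad indices is $t<2\gamma n$, contributing at most $tn'\le 2\gamma N$.

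Putting the two corrections together yields $\lcs(C_1,C_2)\le 4\gamma N+2\gamma N=6\gamma N$ (with strict inequality once you note $\sum_{\text{good}}|z_i|\le (n-t)n'<N$ when $t>0$), and hence $\ed(C_1,C_2)>2(1-6\gamma)N$. No ``tightening'' hand-wave is needed.
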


\begin{proof}
We upper bound $\lcs(C_1, C_2)$ as follows. Consider a particular longest common subsequence and divide it sequentially according to the $n$ inner codewords in $C_1$. Let the codewords in $C_1$ be $x_1, \cdots, x_n$ and the corresponding substrings in $C_2$ under the LCS be $z_1, \cdots, z_n$.

By Lemma~\ref{lem:property2}, for any $i \in [n]$, we must have $\lcs(x_i, z_i) \leq 2 \gamma(|x_i|+|z_i|)$, unless some inner codeword in $z_i$ is equal to $x_i$. This could happen either because $x_i=0^{n'}$ or because $z_i$ contains part of $x_i$ from exactly the $i$'th inner code. In the latter two cases, we call such an index $i$ \emph{bad}. Note that for a bad $i$ we have $\lcs(x_i, z_i) \leq n'$, and there are at most $\gamma n$ such bad indices for either case, by our choice of the outer code. Let $t$ be the number of bad indices, thus $t \leq 2 \gamma n$. Therefore, 

\begin{align*}
\lcs(x, z) &= \sum_{i \text{ is not bad}} \lcs(x_i, z_i)+\sum_{i \text{ is bad}} \lcs(x_i, z_i) \\ &\leq 2 \gamma \sum_{i \text{ is not bad}}(|x_i|+|z_i|)+t n'  \\ & < 2 \gamma (2 n'n)+  2 \gamma nn' = 6 \gamma N,
\end{align*}
where the last inequality follows from the fact that if the number of bad indices is larger than $0$, then $\sum_{i \text{ is not bad}}(|x_i|+|z_i|) < 2n'n$. 
Therefore $\ed(C_1, C_2) > 2N-12\gamma N=2(1-6\gamma)N$.
\end{proof}

Setting $\gamma=\eps/6$, this gives the following theorem.

\begin{theorem}
For any constant $\eps>0$ there exists an efficient construction of linear insdel codes over an alphabet of size $\poly(1/\eps)$, with rate $\Omega(\eps^2)$ that can correct $1-\eps$ fraction of insdel errors (possibly inefficiently).
\end{theorem}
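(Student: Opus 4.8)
The plan is to assemble the theorem directly from the machinery already built in this section, choosing the free parameter $\gamma$ appropriately. The code $\C$ is the concatenation of the outer $[n, \gamma n, (1-\gamma)n]_n$ Reed-Solomon code with the $n$ inner codes $\Cin^1, \dots, \Cin^n$ supplied by Lemma~\ref{lem:property1}, which have alphabet size $\poly(1/\gamma)$ and rate $\Omega(\gamma)$. First I would set $\gamma = \eps/6$, so that the inner alphabet has size $\poly(1/\gamma) = \poly(1/\eps)$ as claimed. The rate of $\C$ is the product of the outer rate $\gamma = \Omega(\eps)$ and the inner rate $\Omega(\gamma) = \Omega(\eps)$, giving overall rate $\Omega(\gamma^2) = \Omega(\eps^2)$; this is exactly the ``$\C$ now has rate $\Omega(\gamma^2)$'' remark recorded after Lemma~\ref{lem:property1}. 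The construction is efficient because Lemma~\ref{lem:property1} produces the inner codes in polynomial time (via the exhaustive search over the $\poly(n)$ points of the $\tau$-biased sample space) and the Reed-Solomon outer code is explicit, so the generator matrix of $\C$ is computable in polynomial time.

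Next I would invoke Lemma~\ref{lem:distance}, which with $\gamma = \eps/6$ gives $\ed(C_1, C_2) > 2(1-6\gamma)N = 2(1-\eps)N$ for every pair of distinct codewords $C_1, C_2 \in \C$, where $N = n n'$ is the block length. The remaining step is the standard translation from minimum edit distance to decoding radius: a code whose minimum edit distance exceeds $2k$ can correct any $k$ insertions and deletions, since if $y$ satisfied $\ed(y, C_1) \le k$ and $\ed(y, C_2) \le k$ for distinct codewords then the triangle inequality for edit distance would force $\ed(C_1, C_2) \le 2k$. Taking $k = (1-\eps)N$, the code $\C$ corrects a $(1-\eps)$ fraction of insdel errors. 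The decoder is simply ``output the codeword nearest to $y$ in edit distance,'' which is why the statement only claims correction ``possibly inefficiently'' — this brute-force search is well-defined and correct but runs in time exponential in the message length.

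There is essentially no remaining obstacle here: the two genuinely substantive pieces — constructing the position-dependent inner codes satisfying Property~\ref{prop:property1} (Lemma~\ref{lem:property1}), and the distance analysis that splits an LCS into blocks aligned to the inner codewords of $C_1$, charges non-bad blocks via Lemma~\ref{lem:property2} and bounds the number of bad blocks by the outer distance (Lemma~\ref{lem:distance}) — are already done. The only things to double-check in writing this up are the arithmetic $1 - 6\gamma = 1 - \eps$ with $\gamma = \eps/6$, that $\binom{n'}{\gamma n'}^2 q^{-\gamma n'}$ and the $\tau$-term are genuinely $1/\poly(n)$ for $q = (e/\gamma)^3$ and $n' = \Theta(\log n / \gamma)$ (so the sample space really does contain a good point), and the factor-of-two bookkeeping between ``edit distance $> 2(1-\eps)N$'' and ``corrects a $(1-\eps)$ fraction.'' If one wanted to be slightly more careful about constants, note that the inner codes have $k' = \Theta(\log n)$ and $n' = \Theta(\log n/\gamma)$, so the inner rate is $\Theta(\gamma)$ and the hidden constant in the $\Omega(\eps^2)$ rate is absolute, independent of $\eps$.
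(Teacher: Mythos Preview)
Your proposal is correct and matches the paper's own proof essentially verbatim: the paper simply writes ``Setting $\gamma=\eps/6$, this gives the following theorem'' immediately after Lemma~\ref{lem:distance}, relying on exactly the chain of reasoning you spell out (Lemma~\ref{lem:property1} for the efficient inner-code construction, the rate calculation $\Omega(\gamma^2)$, Lemma~\ref{lem:distance} for the distance bound, and the standard edit-distance-to-decoding-radius conversion).
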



\subsection{Efficient Decoding}
We now give an efficient decoding algorithm for the above code. The algorithm is simple. Given any received string $y$, we consider the partition of $y$ into $n$ substrings $y_1, \cdots, y_n$ such that $y =y_1 \circ y_2 \circ \cdots \circ y_n$, where each $y_i$ can be the empty string. Now for each $y_i$, we find the closest codeword $x_i \in \Cin^i$ in insdel distance and record $\Delta_i=\ed(y_i, x_i)$. Finally we compute $\Delta=\sum_i \Delta_i$. Our goal is to minimize $\Delta$, and this can be done by using the following dynamic programming. 

We define a $n \times (N+1)$ matrix $f$ such that $\forall i \in [n], j \in [N] \cup \{0\}$, $f[i,j]$ records the optimal sum of insdel distances between a partition of $y_{[1:j]}$ into $i$ substrings and the concatenation of the first $i$ inner codewords. 

For any $0 \le p,  q \le N$, we define the function $g_i(y_{[p, q]})$ to be the insdel distance of the closest codeword $x_i \in \Cin^i$ to $y_{[p, q]}$. Note that the function $g_i$ can be computed in polynomial time by enumerating the $n$ codewords in $\Cin^i$ and computing the corresponding insdel distance to $y_{[p, q]}$. Now for $1\le i \le n$ and $0\le j \le N$, we can compute $f[i,j]$ as follows.

First, $\forall j \in  [N] \cup \{0\}$, let $f[1, j]=g_1(y_{[1, j]})$. Then, for $2\le i \le n$ and $ 0 \le j \le N$, we compute $f[i,j]$ using the following updating rule:

\[f[i,j] = 
\underset{0\le j' \le j}{\min} \; f[i-1, j'] + g_i(y_{[j'+1: j]}).
\]

Finally we return $f[n, N]$. To recover the inner codewords $\{x_i, i \in [n]\}$, during the dynamic programming we also maintain another $n \times (N+1)$ matrix $M$, such that $\forall i \in [n], j \in [N] \cup \{0\}$, $M[i, j]$ records two items: the right index of the previous substring (i.e., the optimal $j'$ in the updating rule), and the closest codeword $x_i \in \Cin^i$ to the substring $y_{[j'+1: j]}$. With these information one can trace back from $M[n, N]$ and recover all the inner codewords. 

After this, we use each $x_i$ to recover the corresponding symbol of the outer code at the $i$'th coordinate, and then run the list decoding algorithm from \cref{thm:listdecoding}. From the returned list, we exhaustively search for the codeword which has insdel distance less than $(1-6\gamma)N$ from the received string $y$.

We have the following lemma.

\begin{lemma}\label{lem:errorbound}
Assume the received string $y$ has at most $\alpha N$ insdel errors, then the above algorithm recovers at least $((1-2 \gamma)-\frac{\alpha}{1-2 \gamma}) n$ correct symbols in the outer codeword.
\end{lemma}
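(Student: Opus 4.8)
The plan is to analyze the output of the dynamic programming and bound how many outer-code symbols can be wrong. Let $z = z_1 \circ \cdots \circ z_n$ be the transmitted codeword with each $z_i \in \Cin^i$, and let $y$ be the received string with $\ed(y,z) \le \alpha N$. Fix the particular partition $y = w_1 \circ \cdots \circ w_n$ that is induced by an optimal set of insdel operations transforming $z$ into $y$; that is, $w_i$ is the part of $y$ that the adversary's operations ``derive'' from $z_i$. Along this partition, $\sum_i \ed(z_i, w_i) \le \alpha N$. Since for each $i$ the codeword $z_i \in \Cin^i$ is a candidate in the minimization defining $g_i$, we get $\sum_i g_i(w_i) \le \sum_i \ed(z_i,w_i) \le \alpha N$, so the value $\Delta^\star$ returned by the dynamic programming satisfies $\Delta^\star \le \alpha N$. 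Let $y = y_1 \circ \cdots \circ y_n$ be the optimal partition the dynamic programming actually returns, with recovered inner codewords $x_i \in \Cin^i$ and $\Delta_i = \ed(y_i, x_i)$, so $\sum_i \Delta_i = \Delta^\star \le \alpha N$.

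Now I would count the ``bad'' indices, those $i$ with $x_i \ne z_i$. For such an $i$, the recovered codeword $x_i$ is ``spurious'': it is a codeword of $\Cin^i$ that is not the true inner codeword at position $i$. The substring $y_i$ it is matched to is a substring of $y$, which in turn (up to the $\alpha N$ adversarial errors) corresponds to a substring $\tilde z_i$ of $z$; the inner codewords of $z$ contributing to $\tilde z_i$ are never equal to $x_i$ at position $i$ except possibly through the $0^{n'}$ coincidence or through position $i$ itself — but since $x_i \ne z_i$, the $i$-th block of $z$ does not equal $x_i$, so I can invoke \cref{lem:property2}: $\lcs(x_i, \tilde z_i) < 2\gamma(|x_i| + |\tilde z_i|) \le 2\gamma(n' + |\tilde z_i|)$. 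Hence $\ed(x_i, \tilde z_i) = |x_i| + |\tilde z_i| - 2\lcs(x_i,\tilde z_i) > (1-2\gamma)(n' + |\tilde z_i|) - \text{(lower order)}$, in particular $\ed(x_i, \tilde z_i) \ge (1-2\gamma)n' - O(1)$ roughly. Combined with the triangle inequality $\ed(x_i, y_i) \ge \ed(x_i, \tilde z_i) - (\text{errors charged to the stretch of } z \text{ behind } y_i)$, each bad index $i$ forces $\Delta_i$ to be large unless a commensurate number of adversarial errors are ``localized'' near $y_i$.

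The accounting then goes as follows: if there are $B$ bad indices, then summing the lower bounds on $\Delta_i$ over bad $i$ gives $\sum_{i \text{ bad}} \Delta_i \gtrsim (1-2\gamma) B n' - (\text{total adversarial errors}) = (1-2\gamma)Bn' - \alpha N$. But $\sum_i \Delta_i \le \alpha N$, so $(1-2\gamma)Bn' - \alpha N \le \alpha N$, i.e. $B \le \frac{2\alpha N}{(1-2\gamma)n'} = \frac{2\alpha n}{1-2\gamma}$. Actually the statement wants the sharper count $n - B \ge ((1-2\gamma) - \frac{\alpha}{1-2\gamma})n$, so I need to be a bit more careful about the constants: one should charge the $\alpha N$ errors only once (not twice) by comparing $\Delta^\star$ directly against the ``honest'' value, and should also absorb the at-most-$2\gamma n$ indices where $z_i$ is itself a $0$-block or collides through position $i$ (these are the indices where \cref{lem:property2} does not apply and must simply be declared possibly-bad up front). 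Carrying the bookkeeping through, the number of indices that are either structurally bad (at most $2\gamma n$ of them) or forced-bad by errors (at most $\frac{\alpha}{1-2\gamma}n$ of them, after the more careful single-charging) totals at most $2\gamma n + \frac{\alpha}{1-2\gamma}n$, and the recovered outer symbols at the remaining $\ge ((1-2\gamma) - \frac{\alpha}{1-2\gamma})n$ positions are correct.

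The main obstacle is the careful charging argument that avoids double-counting the adversarial errors: naively applying the triangle inequality to each bad block independently would charge the same insdel operation to two different blocks (the one it straddles), costing a factor of $2$ and giving the weaker bound $B \le \frac{2\alpha n}{1-2\gamma}$ instead of $\frac{\alpha n}{1-2\gamma}$. The fix is to work with a single global alignment: compare the dynamic-programming value $\Delta^\star$ to the cost of the honest partition, and observe that the difference between ``$x_i$ matched to $y_i$'' and ``$z_i$ matched to $w_i$'' can be bounded blockwise using the edit-distance additivity $\ed(z,z') \le \sum \ed(\text{corresponding blocks})$ in one direction and \cref{lem:property2} in the other, so that each error is counted against the total budget $\alpha N$ exactly once. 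Once this is set up cleanly the inequality $\sum_{i \text{ bad}} \ed(x_i,\tilde z_i) \le 2\alpha N + (\text{structural slack})$ falls out and rearranges to the claimed bound.
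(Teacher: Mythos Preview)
Your overall framework matches the paper's: bound $\Delta^\star \le \alpha N$ via the honest partition, then argue that each incorrectly recovered block forces a large contribution via \cref{lem:property2}. You also correctly flag the factor-of-$2$ obstruction as the crux.

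Where your proposal falls short is the fix. The loss is \emph{not} a double-counting of adversarial errors. Both $\alpha N$'s in the upper bound $\sum_i \ed(x_i,\tilde z_i)\le \sum_i\ed(x_i,y_i)+\sum_i\ed(y_i,\tilde z_i)\le \alpha N+\alpha N$ are genuinely different budgets (DP optimality versus adversary's edits), and no ``single global alignment'' collapses them to one. The real loss is in your \emph{lower} bound: you throw away the $|\tilde z_i|$ term in $(1-O(\gamma))(n'+|\tilde z_i|)$ and keep only $n'$, which is what costs the factor of $2$.

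The paper recovers this factor by summing $\ed(x_i,\tilde z_i)$ over \emph{all} $i$ (not just bad ones) and splitting on $L:=\sum_{i\in B}|\tilde z_i|$. If $L\ge |B|n'$, then \cref{lem:property2} alone already gives $\sum_{i\in B}\ed(x_i,\tilde z_i)\ge (1-2\gamma)(|B|n'+L)\ge 2(1-2\gamma)|B|n'$. If $L<|B|n'$, the bad preimages are collectively short, so the complementary preimages are collectively long: since every $|x_i|=n'$, one gets
\[
\sum_{i\notin B}\ed(x_i,\tilde z_i)\;\ge\;\sum_{i\notin B}\bigl(|\tilde z_i|-|x_i|\bigr)\;=\;(N-L)-(n-|B|)n'\;=\;|B|n'-L,
\]
which exactly compensates for the shortfall in the bad sum. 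Either way $\sum_i\ed(x_i,\tilde z_i)\ge 2(1-2\gamma)|B|n'$, and the $2$ now cancels the $2\alpha N$ on the other side, yielding $|B|\le \frac{\alpha}{1-2\gamma}n$.

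So the missing idea is that the \emph{good} blocks are not free: when the bad preimages $\tilde z_i$ are short, the good ones are forced to be long, and that length mismatch contributes positively to the global edit cost. Your sketch never invokes this complement, which is why it stalls at $B\le \frac{2\alpha}{1-2\gamma}n$.

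A smaller point: your description of the ``structurally bad'' indices (where \cref{lem:property2} fails) is not quite right. The obstruction is not ``$z_i$ is itself a $0$-block'' or ``collision through position $i$''; when $x_i\neq 0^{n'}$ and $x_i$ is wrong, \cref{lem:property2} always applies, because $\Cin^i\cap\Cin^j=\{0^{n'}\}$ for $j\neq i$ and the $i$-th contributing block is $\neq x_i$ by assumption. The only failure mode is $x_i=0^{n'}$ together with a zero block of $z$ contributing to the preimage $\tilde z_i$, and it is these indices that the paper sets aside before forming $B$.
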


\begin{proof}
First note that there is a natural partition of $y$ according to the actual inner codewords. With this partition and let each $x_i$ be the corresponding correct inner codewords, one can see that the total insdel distance is at most $\alpha N$. Therefore, the optimal partition obtained by the dynamic programming must have $\Delta=\sum_i \Delta_i \leq \alpha N$.

Assume there are $\beta n$ symbols of the outer codeword that are recovered correctly after the algorithm. Let $z$ be the actual codeword transmitted without errors, and let $z_1, \cdots, z_n$ be the corresponding strings of $y_1, \cdots, y_n$ before errors, such that $z=z_1 \circ z_2 \circ \cdots \circ z_n$. 

Note that at most $2\gamma n$ of the $z_i$'s contains part of the codeword $0^{n'}$. Hence there are at least $(1-\beta-2 \gamma)n$ of the $z_i$'s that don't have the codeword $0^{n'}$ contributing to them, and are recovered to an incorrect inner codeword $x_i$ in the algorithm. For simplicity, let's pick exactly $(1-\beta-2 \gamma)n$ of such $z_i$'s.

We claim that any such $z_i$ and the incorrectly recovered inner codeword $x_i$ satisfy the conditions of Lemma~\ref{lem:property2}. To see this, notice that first of all, the codeword $0^{n'}$ does not contribute to $z_i$. Hence if $x_i=0$, then $x_i$ and $z_i$ satisfy the conditions of Lemma~\ref{lem:property2}. Otherwise, $x_i \neq 0$. If no codeword from $\Cin^i$ contributes to $z_i$, then by Property~\ref{prop:property1a}, $x_i$ and $z_i$ satisfy the conditions of Lemma~\ref{lem:property2}. Else, some codeword $x'_i \in \Cin^i$ contributes to $z_i$ and thus $x'_i$ is the correct inner codeword from $\Cin^i$. Since $x_i$ is recovered incorrectly, we must have $x_i \neq x'_i$, and again by Property~\ref{prop:property1a}, $x_i$ and $z_i$ satisfy the conditions of Lemma~\ref{lem:property2}. 

Therefore, let $B \subseteq [n]$ denote the index set of the $(1-\beta-2 \gamma)n$ of the $z_i$'s, and let $L=\sum_{i \in B} |z_i|$ be the total length of these $z_i$'s, by Lemma~\ref{lem:property2} we know that $\forall i \in B$, $\lcs(x_i, z_i) \leq 2\gamma(|x_i|+|z_i|)$. Thus

\[\sum_{i \in B} \ed(x_i, z_i) \geq \sum_{i \in B} (1-2\gamma)(|x_i|+|z_i|) \geq (1-2\gamma)((1-\beta-2 \gamma)N+L).\]

We now have two cases.

\begin{description}
\item [Case 1:] $L \geq (1-\beta-2 \gamma)N$. Then we have 
\[\sum_{i \in [n]} \ed(x_i, z_i) \geq \sum_{i \in B} \ed(x_i, z_i) \geq 2 (1-2\gamma)(1-\beta-2 \gamma)N.\]
\item [Case 2:] $L < (1-\beta-2 \gamma)N$. Then we have $\sum_{i \in [n] \setminus B} |z_i| = N-L$ and thus 
\[\sum_{i \in [n] \setminus B} \ed(x_i, z_i) \geq  \sum_{i \in [n] \setminus B} |z_i|- \sum_{i \in [n] \setminus B} |x_i| = N-L-(\beta+2 \gamma)N=(1-\beta-2 \gamma)N-L.\]

Thus
\begin{align*}
    \sum_{i \in [n]} \ed(x_i, z_i) &= \sum_{i \in B} \ed(x_i, z_i)+\sum_{i \in [n] \setminus B} \ed(x_i, z_i) \\ &\geq (1-2\gamma)((1-\beta-2 \gamma)N+L)+(1-\beta-2 \gamma)N-L \\ & > 2 (1-2\gamma)(1-\beta-2 \gamma)N.
\end{align*}
\end{description}
To conclude, we always have $\sum_{i \in [n]} \ed(x_i, z_i) \geq 2 (1-2\gamma)(1-\beta-2 \gamma)N.$ Now, since there are at most $\alpha N$ insdel errors, and the algorithm returns a matching such that $\Delta=\sum_i \Delta_i=\sum_i \ed(y_i, x_i) \leq \alpha N$, by triangle inequality we must have

\[2 (1-2\gamma)(1-\beta-2 \gamma)N \leq \sum_{i \in [n]} \ed(x_i, z_i) \leq \sum_{i \in [n]} \ed(x_i, y_i)+\alpha N \leq 2 \alpha N.\]

This gives 

\[\beta \geq (1-2 \gamma)-\frac{\alpha}{1-2 \gamma}.\]
\end{proof}

We can now prove the following theorem.

\begin{theorem}
For any constant $\gamma >0$, the code $\C$ can correctly decode from up to $1-6\sqrt{\gamma}$ fraction of insdel errors in polynomial time.
\end{theorem}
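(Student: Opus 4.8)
The plan is to combine \cref{lem:errorbound} with the Guruswami--Sudan list-decoding bound from \cref{thm:listdecoding}, choosing the error fraction $\alpha$ so that the number of correctly recovered outer symbols exceeds the list-decoding threshold of the Reed--Solomon outer code. Recall the outer code has rate $\gamma$, so by \cref{thm:listdecoding} it can be list-decoded from an error rate of $1-\sqrt{\gamma}$, i.e.\ whenever at least $\sqrt{\gamma}\, n$ of the $n$ outer symbols are correct, we can produce in polynomial time a short list that contains the true outer codeword. So the argument reduces to: find $\alpha$ such that the guarantee $\beta \ge (1-2\gamma) - \frac{\alpha}{1-2\gamma}$ from \cref{lem:errorbound} forces $\beta \ge \sqrt{\gamma}$.

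First I would carry out this elementary inequality. We need $(1-2\gamma) - \frac{\alpha}{1-2\gamma} \ge \sqrt{\gamma}$, i.e.\ $\alpha \le (1-2\gamma)\bigl((1-2\gamma)-\sqrt{\gamma}\bigr) = (1-2\gamma)^2 - \sqrt{\gamma}(1-2\gamma)$. Since $(1-2\gamma)^2 \ge 1-4\gamma$ and $\sqrt{\gamma}(1-2\gamma) \le \sqrt{\gamma}$, it suffices to take $\alpha = 1 - 4\gamma - \sqrt{\gamma}$, which is at least $1 - 6\sqrt{\gamma}$ for $\gamma$ in the relevant range (as $4\gamma \le 5\sqrt{\gamma}$ when $\gamma \le 1$; one can be slightly more careful to nail down the exact constant, but $1-6\sqrt{\gamma}$ is a safe, clean bound). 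So with $\alpha N = (1-6\sqrt\gamma)N$ insdel errors, \cref{lem:errorbound} guarantees the dynamic-programming step recovers at least $\sqrt{\gamma}\, n$ correct outer symbols.

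Next I would assemble the full decoder: run the dynamic program of the previous subsection to obtain inner codewords $x_1,\dots,x_n$ and hence a candidate outer word; treat any position whose recovered inner codeword is "wrong" as an error for the outer code; feed this to the \cref{thm:listdecoding} list-decoder, which returns a polynomial-size list of candidate messages since the agreement $\ge \sqrt{\gamma} n$ meets the Guruswami--Sudan radius; then for each candidate re-encode it through $\C$ and compute its edit distance to the received word $y$, outputting the unique one within distance $(1-6\gamma)N$. Correctness of this last step follows from \cref{lem:distance}: the true codeword is within $(1-6\sqrt\gamma)N \le 2(1-6\gamma)N$ of $y$ for small $\gamma$ — actually I should check $1-6\sqrt\gamma \le 2(1-6\gamma)$, which holds once $\gamma$ is below an absolute constant — while any other codeword $C'$ satisfies $\ed(C',y) \ge \ed(C',z) - \ed(z,y) > 2(1-6\gamma)N - (1-6\sqrt\gamma)N$, which is strictly larger, so the output is unambiguous. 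Every stage (dynamic program, list decoding, re-encoding and edit-distance checks over a polynomial-size list) runs in polynomial time.

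The main obstacle, such as it is, is bookkeeping rather than any real difficulty: one must (i) pin down the precise range of $\gamma$ for which all the "for small enough $\gamma$" claims hold simultaneously — the comparisons $(1-2\gamma) - \sqrt\gamma \ge$ (something positive), $1-6\sqrt\gamma \le 2(1-6\gamma)$, and $\alpha \ge 1-6\sqrt\gamma$ — and (ii) make sure the "wrong inner codeword $\Rightarrow$ outer error" translation is clean, i.e.\ that a correctly recovered inner codeword $x_i$ at position $i$ yields the correct outer symbol, which holds because distinct messages of the linear code $\Cin^i$ map to distinct codewords. Once those are in place the theorem follows immediately by plugging $\alpha = 1 - 6\sqrt\gamma$ into \cref{lem:errorbound}.
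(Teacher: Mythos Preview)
Your proposal is correct and follows essentially the same approach as the paper: combine \cref{lem:errorbound} with the Guruswami--Sudan bound from \cref{thm:listdecoding} to guarantee $\beta \ge \sqrt{\gamma}$, then use the minimum-distance bound of \cref{lem:distance} to pick the unique list member within edit distance $(1-6\gamma)N$ of $y$. The one cosmetic wrinkle is that the inequality you flag, $1-6\sqrt{\gamma}\le 2(1-6\gamma)$, is not the one actually needed for the threshold argument; what you need (and implicitly use) is $1-6\sqrt{\gamma}\le 1-6\gamma$ together with $2(1-6\gamma)-(1-6\sqrt{\gamma})>(1-6\gamma)$, both of which hold for $0<\gamma<1$.
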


\begin{proof}
By \cref{lem:errorbound} and \cref{thm:listdecoding}, the list decoding algorithm of the outer code succeeds as long as 

\[(1-2 \gamma)-\frac{\alpha}{1-2 \gamma} \geq \sqrt{\gamma}.\]

At the same time, by \cref{lem:distance} the unique correct codeword can be found by examining the list of returned codewords as long as $\alpha \leq 1-6\gamma$.

One can check that when $\alpha \leq 1-6\sqrt{\gamma}$, both conditions are satisfied.
\end{proof}

By setting $\gamma=\frac{\eps^2}{36}$, this immediately gives the following corollary.

\begin{corollary}
For any constant $\eps >0$, there is a linear code $\C$ with rate $\Omega(\eps^4)$ and alphabet size $\poly(1/\eps)$, that can be encoded in polynomial time and decoded from up to $1-\eps$ fraction of insdel errors in polynomial time.
\end{corollary}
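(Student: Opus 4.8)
The final statement to prove is the corollary that setting $\gamma = \eps^2/36$ yields a linear code with rate $\Omega(\eps^4)$, alphabet size $\poly(1/\eps)$, polynomial-time encodable, and decodable from $1 - \eps$ fraction of insdel errors. This is essentially a bookkeeping consequence of the preceding theorem, which states that $\C$ decodes from $1 - 6\sqrt{\gamma}$ fraction of insdel errors in polynomial time, combined with the earlier observations about the rate and alphabet size of $\C$.

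\begin{proof}
The plan is to simply substitute the chosen value of $\gamma$ into the parameters already established for $\C$. First I would invoke the previous theorem, which guarantees that for any constant $\gamma > 0$, the code $\C$ decodes from up to $1 - 6\sqrt{\gamma}$ fraction of insdel errors in polynomial time. Setting $\gamma = \frac{\eps^2}{36}$, we get $6\sqrt{\gamma} = 6 \cdot \frac{\eps}{6} = \eps$, so the decoding radius becomes exactly $1 - \eps$ fraction of insdel errors, as desired, and the decoding remains polynomial-time since $\gamma$ is a constant whenever $\eps$ is.

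Next I would account for the rate. Recall from the general construction that $\C$ concatenates an $[n, k, d]_n$ Reed–Solomon outer code with rate $\gamma$ (since $k = \gamma n$) and $n$ inner codes each of rate $\Omega(\gamma)$ by \cref{lem:property1}. Hence the rate of $\C$ is $\gamma \cdot \Omega(\gamma) = \Omega(\gamma^2)$. Substituting $\gamma = \frac{\eps^2}{36} = \Theta(\eps^2)$ gives rate $\Omega(\eps^4)$. The alphabet size of the inner codes is $\poly(1/\gamma) = \poly(1/\eps)$, which is the alphabet of the concatenated code as well. Finally, encoding is polynomial time: the outer Reed–Solomon code is encodable in $\poly(n)$ time, the inner codes $\Cin^1, \dots, \Cin^n$ are found in polynomial time via the exhaustive search over the small-biased sample space in \cref{lem:property1}, and each inner encoding is a linear map over a constant-size alphabet. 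Combining these observations yields exactly the claimed corollary.
\end{proof}

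The only "obstacle" here is trivial — verifying the arithmetic $6\sqrt{\eps^2/36} = \eps$ and that $\Omega((\eps^2)^2) = \Omega(\eps^4)$ — so there is no real difficulty; the corollary is a direct restatement of the theorem with the parameter plugged in.
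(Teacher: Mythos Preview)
Your proposal is correct and matches the paper's own approach exactly: the paper simply states ``By setting $\gamma=\frac{\eps^2}{36}$, this immediately gives the following corollary,'' and your write-up just fills in the straightforward parameter checks (decoding radius $1-6\sqrt{\gamma}=1-\eps$, rate $\Omega(\gamma^2)=\Omega(\eps^4)$, alphabet size $\poly(1/\gamma)=\poly(1/\eps)$, polynomial-time encoding).
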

\section{Constructions with High Rate}\label{sec:high rate 1}
In this section, we  give our linear codes with a rate arbitrarily close to $1/3$, which can still correct a constant fraction of insdel errors.

\begin{theorem}\label{thm:highrate1}
For every constant $\gamma \in (0, 1/3)$, there exists a binary linear insdel code with rate $\ge 1/3 - \gamma$, relative decoding radius $  \Omega( \frac{\gamma^2}{\log \frac{1}{\gamma}})$, where both the encoding and decoding are in polynomial time.
\end{theorem}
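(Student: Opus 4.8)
The plan is to instantiate the general concatenation framework of Section~3 with a Reed--Solomon outer code of rate $1-\delta$ (with $\delta = \gamma/2$) over alphabet of size $n$, together with $n$ distinct binary inner codes $\Cin^1,\dots,\Cin^n$ of block length $n' = \Theta(\log n)$ and message length $k' = (1/3 - \gamma/2)n'$, satisfying the stronger ``three-codeword'' local property: for all $i,j$, every codeword $w \in \Cin^i$, every pair $u \in \Cin^j, v \in \Cin^{j+1}$, and every substring $w'$ of $u \circ v$, unless $w = u$ or $w = v$ we have $\ed(w, w') \ge d' = \Omega(n')$. The rate of the concatenated code is then $(1-\delta)k'/n' \ge 1/3 - \gamma$ for $\gamma$ small enough, which gives the rate claim immediately.

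First I would prove the local property holds for independent random linear inner codes, then derandomize. For the random case, fix $w \in \Cin^i \setminus \{0\}$, $u \in \Cin^j$, $v \in \Cin^{j+1}$ with $w \ne u, w \ne v$, and suppose some substring $w'$ of $u\circ v$ has $\ed(w,w') < d'$, hence $\lcs(w,w') = \ell \ge (|w| + |w'| - d')/2 \ge |w| - d'$ (using $|w'| \ge |w| - d'$ when the distance is small, and noting $\ell \le |w| \le n'$). Fix a monotone alignment of size $\ell$; by the independence/randomness of the three codes and an argument analogous to Claim~4.2 of~\cite{CGHL21} (Claim~\ref{clm:samecode} in the excerpt), the probability that this alignment is a genuine bit-matching is at most $2^{-\ell} \le 2^{-(n'-d')}$. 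A union bound over all $\le \binom{2n'}{\ell}\binom{n'}{\ell} \le 2^{O(n')}$ alignments and over all $\le 2^{3k'}$ triples $(w,u,v)$ gives failure probability at most $2^{3k' + O(n') - (n' - d')}$; since $d' = \Theta(n')$ with a small constant and $k' < n'/3$ with a little slack, this is $2^{-\Omega(n')} = 1/\mathrm{poly}(n)$, and a further union bound over the $\le n^2$ choices of $(i,j)$ still leaves it $1/\mathrm{poly}(n)$. To derandomize, exactly as in Lemma~\ref{lem:property1}, I would generate the $n$ generator matrices from a $\tau$-biased space with $\tau = 1/\mathrm{poly}(n)$; since the only use of randomness involves at most three codewords and $n' = O(\log n)$, the XOR Lemma (Lemma~\ref{lem:XORlem}) shows the joint distribution of any such triple of bit-vectors is $q^{O(n')}\tau = 1/\mathrm{poly}(n)$-close to uniform (with the $w = p\cdot w$-type degenerate cases handled as in Lemma~\ref{lem:property1}), which suffices for the union bound over $\mathrm{poly}(n)$ triples. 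Exhaustively searching the $\mathrm{poly}(n)$ sample points in polynomial time yields an explicit good construction.

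Next I would establish the decoding radius via the block-matching algorithm: given corrupted $y$, use dynamic programming to find a maximum monotone \emph{block matching} --- a non-crossing set of matches $(i,[\alpha,\beta])$ where some non-zero $u \in \Cin^i$ has $\ed(u, y_{[\alpha,\beta]}) \le d'/2$ --- then build $\tilde z$ by filling each matched block with its candidate and all other blocks with $0^{n'}$, and finally feed the induced outer-symbol string to the Reed--Solomon list decoder, returning the list element closest to $y$ in edit distance. For correctness, write $z = z^1 \circ\cdots\circ z^n$ and $y = y^1 \circ \cdots \circ y^n$ aligned to the true inner codewords; if there are at most $\rho nn'$ insdel errors, at most $\rho nn'/(d'/2) = O(\gamma)n$ true blocks carry $\ge d'/2$ errors, so the maximum block matching has size $\ge \hat n - O(\gamma) n$ where $\hat n$ counts nonzero blocks of $z$. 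Then I would use the local property to argue that in \emph{any} maximum block matching, at most $O(\gamma)n$ of the returned candidates disagree with the corresponding block of $z$: a disagreeing candidate $u \in \Cin^i$ matched to $y_{[\alpha,\beta]}$ forces $\ed(u, y_{[\alpha,\beta]}) \le d'/2$ while $u$ differs from both adjacent true blocks, so by the local property (transferred from $z$ to $y$ at the cost of the errors in the relevant window, which are globally bounded by $\rho nn'$) such windows can absorb only $O(\rho nn'/d') = O(\gamma)n$ of the total error budget. Combining, $\tilde z$ agrees with $z$ on all but $O(\gamma) n$ blocks (both among nonzero blocks and among the zero blocks we defaulted), so with $\delta = \gamma/2$ and $\rho, \gamma$ small enough the fraction of correct outer symbols exceeds $1 - \sqrt{1-\delta}$ and Theorem~\ref{thm:listdecoding} recovers $z$; the decoding radius works out to $\rho = \Omega(\gamma^2/\log\tfrac1\gamma)$, matching the statement. (The $\log\tfrac1\gamma$ loss enters because $d'/n'$ and hence $\rho$ must be taken proportional to $\gamma$ times a logarithmically small factor to keep the counting slack in the union bound; I would track this explicitly.)

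The main obstacle I anticipate is the decoding analysis rather than the construction: precisely bounding the number of ``wrong candidate'' blocks. The subtlety is that the maximum block matching is chosen by the algorithm on the corrupted $y$, so a candidate block may be matched to a window of $y$ that straddles the boundary of two true blocks $z^j, z^{j+1}$ and is moreover itself corrupted; the local property is stated for substrings of $u \circ v$ in the \emph{uncorrupted} code, so I must carefully charge the edit operations that convert such a corrupted straddling window back into a clean substring of two adjacent true inner codewords, and show the total such charge is $O(\rho n n')$ across all wrong blocks, so that dividing by $d'$ gives only $O(\gamma)n$ wrong blocks. This amortization --- making sure each insdel error is charged to at most a bounded number of wrong blocks despite the non-crossing-but-overlapping-window structure --- is the technically delicate step, and it is exactly why the theorem only claims rate $1/3 - \gamma$: pushing past $1/3$ requires the more elaborate $t$-consecutive-block property of Section~\ref{sec:high rate 2}.
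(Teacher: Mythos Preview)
Your plan mirrors the paper's argument closely: the same outer/inner setup, the same three-codeword local property (the paper's Property~\ref{prop:propertymainbasic}), the same block-matching dynamic programming, and the same charge-errors-to-windows decoding analysis. There is one real gap and one non-issue.

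\textbf{The union bound for the inner-code property does not close as you wrote it.} You bound the number of size-$\ell$ monotone alignments by $\binom{2n'}{\ell}\binom{n'}{\ell}\le 2^{O(n')}$ and then assert $2^{3k'+O(n')-(n'-d')}=2^{-\Omega(n')}$. But $\ell\ge n'-d'$ is close to $n'$, so $\binom{2n'}{\ell}$ is on the order of $\binom{2n'}{n'}\approx 4^{n'}$; your $O(n')$ term is therefore at least $2n'$ and the exponent is positive, not negative. The missing observation (this is exactly what Lemma~\ref{lem:innerconstruct} does) is that $\ed(w,w')<d'$ forces $|w'|\in[n'-d',\,n'+d']$, so the binomials should be read from the \emph{small} side:
\[
\binom{n'}{\ell}=\binom{n'}{n'-\ell}\le\binom{n'}{d'},\qquad
\binom{|w'|}{\ell}=\binom{|w'|}{|w'|-\ell}\le\binom{n'+d'}{2d'},
\]
giving only $2^{O(d'\log(n'/d'))}$ alignments rather than $2^{O(n')}$. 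Balancing this against the $2^{-3\gamma n'}$ slack coming from $3k'=(1-O(\gamma))n'$ is precisely what forces $d'/n'=\Theta\bigl(\gamma/\log\tfrac{1}{\gamma}\bigr)$ and is the source of the $\log\tfrac{1}{\gamma}$ in the decoding radius. Your closing parenthetical gestures at this, but the calculation you actually wrote does not go through.

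\textbf{The decoding ``obstacle'' you flag is not one.} The windows $[\alpha_l,\beta_l]$ in a block matching are pairwise disjoint by definition, so every insdel error lies in at most one window and there is no amortization subtlety. The paper's argument (Lemma~\ref{lem:correctmatches}) is simply: if a window carries fewer than $d'/6$ errors, then its candidate $u$ is within edit distance $<d'$ of some substring of two adjacent true inner codewords, so Property~\ref{prop:propertymainbasic} forces $u=z^i$; hence at most $6T/d'$ matches yield wrong candidates. Combined with $|M|\ge\hat n-2T/d'$ this gives $\hamming(\tilde z,z)\le 16T/d'$ (Lemma~\ref{lem:transfertohamingdist}), and the paper finishes with \emph{unique} (not list) decoding of the outer code, since $16T/d'<\delta n/2$ once $T\le\tfrac{\gamma}{64}\,n d'$.
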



\subsection{Construction} 

\paragraph{The construction} Following our general construction, given a constant $\gamma \in (0, 1/3)$, setting $\delta = \gamma/2$, we take $\Cout$ to be an $[n,k,d]_n$ Reed-Solomon code with $|\alphabetout| = n$, $k = (1- \delta) n$ and $d = \delta  n$. 
We can then construct $n$ different binary inner codes $\Cin^1, \dots, \Cin^n$ with message length $k'$ and codeword length $n'$ such that $ k'= \log n$ and $ n' =  O(\frac{k'}{\gamma})$. We need these inner codes satisfying the following property, i.e. \cref{prop:propertymainbasic}. 
Finally, for each $i\in [n]$ we replace the alphabet of $\Cout$ in the $i$-th block with $\Cin^{i}$.

\begin{property}[with parameter $d'$]
\label{prop:propertymainbasic}
$\forall i \in [n], j \in [n-1]$, $\forall w\in \Cin^i \backslash \{0^{n'}\}, u\in \Cin^j, v\in \Cin^{j+1}$, unless $w = u$ or $w = v$, it always holds that for any substring $ w' $ of $u\circ v$, $\ed(w, w')\ge d'$.

We say the property is with parameter $d'$ to indicate the parameter $d'$.
\end{property}

The following lemma shows an explicit construction of inner codes satisfying \cref{prop:propertymainbasic} with a parameter $d' = \Omega( \frac{\gamma}{\log \frac{1}{\gamma}}n')$.


\begin{lemma}\label{lem:innerconstruct}
For every $\gamma \in (0, 1/3)$,  there exists an efficient construction of $n$ binary inner codes $\Cin^1, \dots, \Cin^n$ satisfying \cref{prop:propertymainbasic} with parameter $d' = \frac{\gamma}{ 10 \log \frac{1}{\gamma}}  \cdot n' , n' = \frac{10\log n}{\gamma}$, where each $\Cin^i$ has rate $  \frac{1}{3}-\gamma $.
\end{lemma}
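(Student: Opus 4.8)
The plan is to follow the two-stage strategy already outlined in the techniques overview: first show that $n$ independent random binary linear codes of rate $1/3 - \gamma$ satisfy \cref{prop:propertymainbasic} with parameter $d' = \frac{\gamma}{10\log(1/\gamma)} n'$ with probability $1 - 1/\poly(n)$, and then derandomize using an $\eps$-biased sample space with $\eps = 1/\poly(n)$, appealing to Theorem~\ref{thm:epsbiased} and Lemma~\ref{lem:XORlem} so that the construction can be found by exhaustive search in polynomial time. For the random case, I would set $k' = \log n$, $n' = \frac{10\log n}{\gamma}$, and pick for each $i\in[n]$ an independent uniformly random generating matrix $G_i \in \F_2^{k'\times n'}$, so that $\Cin^i = \{xG_i : x \in \F_2^{k'}\}$ has rate $k'/n' = \gamma/10 \le 1/3 - \gamma$ (adjusting constants so the stated rate $1/3 - \gamma$ is met — one takes $k' = (1/3-\gamma)n'$; I will use whichever normalization makes both the rate and $n' = 10\log n / \gamma$ consistent).

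The heart of the argument is a union bound over "bad events." Fix $i\in[n]$, $j\in[n-1]$, a codeword $w\in\Cin^i\setminus\{0^{n'}\}$, codewords $u\in\Cin^j$, $v\in\Cin^{j+1}$ with $w\ne u$ and $w\ne v$ (where, as the paper stresses, inequality means "different code or different message"), and a substring $w'$ of $u\circ v$. The bad event is $\ed(w,w') < d'$, i.e. $\lcs(w,w') \ge \ell := (|w|+|w'|-d')/2 \ge |w| - d' \ge n' - d'$ since $|w'|\ge |w| = n'$. Fix a monotone alignment (set of matched position pairs) of size $\ell$ between $w$ and $w'$; since $w'$ sits inside $u\circ v$, this induces, for each matched pair, an equality between a bit of $w$ and a bit of $u$ or of $v$. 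Because $w$ is nonzero and differs (as a message, within its code) from both $u$ and $v$, I would invoke the same reasoning as Claim~\ref{clm:samecode}/Claim~\ref{claim:pairwise-ind} from \cite{CGHL21}: over the independent random choice of $G_i, G_j, G_{j+1}$, the $\ell$ equalities are satisfied with probability at most $2^{-\ell} \le 2^{-(n'-d')}$. (When $j,j+1,i$ are all distinct this is immediate independence; when two of the indices coincide it is exactly the $\F_q$-with-$q=2$ instance of their claim, using $w\ne u$, $w\ne v$.) Then union-bound: the number of choices of $(i,j)$ is at most $n^2$, of triples $(w,u,v)$ at most $(2^{k'})^3 = 2^{3k'}$, of substrings $w'$ at most $(2n')^2$, and of alignments of size $\ell\le n'$ at most $\binom{n'}{\ell}^2 \le 4^{n'}$. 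So the failure probability is at most $n^2 \cdot 2^{3k'} \cdot (2n')^2 \cdot 4^{n'} \cdot 2^{-(n'-d')}$. With $k' = (1/3-\gamma)n'$ we get $3k' = (1-3\gamma)n'$, and the exponent is $(1-3\gamma)n' + 2n' + d' - n' = (2-3\gamma)n' + d'$ — which is positive, so I must be more careful: the right count of alignments is not $4^{n'}$ but must be absorbed, and crucially the $2n'$ in $|w'|$ forces a cleaner bound. The correct bookkeeping (as the overview indicates) is: $\ell \ge n' - d'$, the probability is $2^{-\ell}$, and the dominant combinatorial factor is $2^{3k'}$ from the three codewords, all other factors being $2^{O(\gamma n')}$ or $\poly(n) = 2^{O(\gamma n')}$ (using $n' = 10\log n/\gamma$, so $\log n = \gamma n'/10$). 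Hence failure probability $\le 2^{3k' - n' + d' + O(\gamma n')} = 2^{-3\gamma n' + d' + O(\gamma n')}$, and choosing $d' = \frac{\gamma}{10\log(1/\gamma)} n' \le \gamma n'$ and the hidden constants small enough makes this $2^{-\Omega(\gamma n')} = 1/\poly(n)$.

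I expect the main obstacle to be precisely this exponent accounting: one must verify that every combinatorial factor other than $2^{3k'}$ — the choice of $(i,j)$, of the substring $w'$ (which can have length up to $2n'$, so $\ell$ can be as large as, not just $n'$, but close to $\frac{3}{2}n'$; wait, $|w| = n'$ bounds $\lcs \le n'$, so $\ell\le n'$ regardless), and of the alignment — contributes only $2^{O(\gamma n') + O(\log n)} = 2^{O(\gamma n')}$, so that it is swallowed by the slack between $3k' = (1-3\gamma)n'$ and $\ell \ge n' - d'$. The second, more routine obstacle is the derandomization: one identifies each $G_i$ with $k'n'$ bits, the total seed with $n k' n'$ bits drawn from a $\tau$-biased space for $\tau = 1/\poly(n)$, and must check that every bad event above depends on a \emph{fixed, explicit} set of bits (the bits of $G_i$, $G_j$, $G_{j+1}$), so that by the XOR Lemma (Lemma~\ref{lem:XORlem}) the joint distribution of $(w,u,v)$ is $2^{3n'}\tau$-close to uniform, which is $1/\poly(n)$-small after choosing $\tau$ appropriately — exactly as in the proof of \cref{lem:property1}, except now we track triples instead of pairs. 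Finally, since the seed length is $O(\log n)$, exhaustive search over all seeds, checking \cref{prop:propertymainbasic} for each (which involves $\poly(n)$ triples and an edit-distance computation on strings of length $O(\log n)$ each, hence $\poly(n)$ total), runs in polynomial time, completing the proof.
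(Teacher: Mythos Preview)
Your proposal is correct and follows the same two-stage approach as the paper: a union bound over bad triples $(w,u,v)$ for independent random linear inner codes, then derandomization via an $\eps$-biased generator and exhaustive search. The alignment-count obstacle you correctly flag is resolved in the paper by rewriting $\binom{n'}{m}\binom{|w'|}{m} = \binom{n'}{n'-m}\binom{|w'|}{|w'|-m}$ and observing that $n'-m,\ |w'|-m \le O(d')$ (from $m \ge (n'+|w'|-d')/2$ and $|w'|\in[n'-d',n'+d']$, the latter being the right replacement for your slip ``$|w'|\ge n'$''), so the count is at most $\binom{n'+d'}{2d'}^2 \le 2^{4d'\log(e(n'+d')/(2d'))} = 2^{O(\gamma n')}$ for the stated $d'$.
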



\begin{proof}
We first show that a random construction can give the desired codes with a good probability. 
Then show how to derandomize.

For every $i$, the random construction independently chooses $(\frac{1}{3}-\gamma)n'$ vectors in $\{0,1\}^{n'}$ to be the basis for $\Cin^i$.

Consider $w, u, v, w'$ in \cref{prop:propertymainbasic}.
If  $ \ed(w, w') \le d' $, then $\lcs(w, w') \geq  (n' + |w'|-d')/2$.
Let $M$ be a monotone alignment of size $m  = \lcs(w, w')$.
Now consider the probability that the alignment is indeed a matching.
Since $w\neq u, w\neq v$, as  every base vector is independently randomly chosen, $w, u, v$ are independent.
As the alignment is monotone, the event that all matches are correct, happens with probability $ 2^{-m} $.
The number of different $i,j$ pair is at most $n^2$.
The number of different size $m$ alignments is at most ${n'\choose m}\cdot {|w'|\choose m}  $.
There are $ 2^{3k'}$ different triples $(w, u, v)$.
And there are at most $2n'\times 2d' \le 4n'd'$ different $w'$, i.e. only consider $w'$ with length in $[n'-d', n'+d']$ is sufficient.
Hence, 
\begin{align*}
&\Pr[\text{\cref{prop:propertymainbasic} with parameter $d'$ is not satisfied}] \\
\le & {n'\choose m}\cdot {|w'|\choose m} \cdot 2^{3 k' } 4n'd' n^2  2^{-m}  \\
= & {n'\choose n'-m}\cdot {|w'|\choose |w'|-m} \cdot 2^{3 k'} 4n'd' n^2 2^{-m} \\
\le & {n'\choose d'}\cdot {n'+d' \choose 2d'}  \cdot 2^{3(1/3-\gamma)n'} 4n'd' n^2  2^{-(n'-d')}  && \text{ using } m \ge  (n'+|w'| - d')/2, |w'| \in [ n'-d', n'+d'] \\
\le &  {n'+d' \choose 2d'}^2   \cdot 2^{3(1/3-\gamma)n'} 4n'd' n^2  2^{-(n'-d')}  && \text{ using } {n'\choose d'} \le   {n'+d' \choose 2d'}\\
\le & 2^{4d'\log \frac{e(n'+d')}{2d'}}  \cdot 2^{ -3\gamma n' +d'} 4n'd' n^2   \\
\le & \frac{1}{n^{O(1)}}. && \text{ using } d' =   \frac{\gamma}{ 10 \log \frac{1}{\gamma}}  \cdot n' , n' = \frac{10\log n}{\gamma}  \\
\end{align*}

Thus, the outcome of the random generating process can meet \cref{prop:propertymainbasic} with a high probability. 

Next we replace uniform randomness by $\eps$-biased distribution $r $ over $\{0, 1\}^{k'n'n}$ from \cref{thm:epsbiased}, with $\eps = 2^{-3m}$, since $m = O(n') = O(\log n)$.
Notice that the seed length is $O(\log n)$.
In the above proof,  we consider a fixed alignment $M$ between $w$ and $ w'$.
Now we focus on the  bits in pairs of $M$.
Denote them as a string $s$.
Notice $|s| \leq 2m$.
Each   bit of $s$ is linear in bits of $r$.
So an XOR of any their subset is $\eps$-biased from uniform.
By \cref{lem:XORlem}, $s$ is $\eps \cdot \sqrt{2^{2m}} = 2^{-2m}$-close to uniform.
So $M$ has all its matches being correct with probability $\leq 2^{-m}+ 2^{-2m}$ instead of $2^{-m}$. 
The above analysis can still go through when this is the case.
So  we can try all seeds and find out a desired outcome.
\end{proof}

\subsection{Decoding and rate}

The decoding is via dynamic programming. Let $N = n\cdot n'$. Let $x\in \alphabetout^k$ be the original message, $z  =z^1\circ z^2 \circ \cdots \circ z^n = \C(x)$, where $z^i$ is a codeword of $\Cin^i$ for each $1\le i \le n$. 

Let $\corrupted$ be a corrupted codeword. For any substring $\corrupted_{[j:k]}$ of $\corrupted$, we say it can be matched to $i$-th block if $\ed(\corrupted_{[j:k]}, u)\le d'/2$ for some $u\in \Cin^i \setminus  \{0^{n'}\}$. 
We can define the following functions $g$ to denote whether a string $v$ can be matched to the $i$-th block, i.e. for any $v\in \alphabetin^*$ and $i\in [n]$, we let
\[g(i, v) = \begin{cases}
1, \text{ if } \exists \ u \in \Cin^i \setminus \{0^{n'}\}, \text{ s.t. }\ed(u,v) \le d'/2, \\
0, \text{ otherwise.}
\end{cases}
\]

For any $i\in [n]$ and $j\in [N]$, we say there is a non-zero alignment of size $m$ between $\corrupted_{[1:j]}$ and the first $i$ blocks if we can find $k$ non-overlapping substrings $\corrupted_{[\alpha_1: \beta_1]}, \dots,  \corrupted_{[\alpha_m: \beta_m]}$, $1\leq \alpha_1\leq \beta_1 \le \cdots \le \alpha_m \le \beta_m \le j$, and $m$ indices $1\le t_1< \dots< t_m\le i$ such that   $ g(t_l, \corrupted_{[\alpha_l: \beta_l]}) = 1$ for $l\in [m]$.

The dynamic programming computes the maximum size of   non-zero  alignments between  $\corrupted$ and the first $n$ blocks. It proceeds by computing a matrix $f$ such that $f[i,j]$ is the size of the largest alignment between $\corrupted_{[1:j]}$ and the first $i$ blocks. We let $f[i,j] = 0$ if $i = 0 $ or $j = 0$. For $1\le i \le n$ and $1\le j \le N$, we can compute $f[i,j]$ using the following updating rule:

\[f[i,j] = 
\max \begin{cases}
\underset{0\le j'<j}{\max} \; f[i-1, j'] + g(i, \corrupted_{[j'+1: j]}), \\
f[i, j-1].
\end{cases}
\]

Each value $f[i,j]$ corresponds an  alignment, denoted by $M[i,j]$. $M[i,j]$ has size $f[i,j]$ and can be represented as $M[i,j] = \left( (t_1,  [\alpha_1: \beta_1] ), \dots, (t_k,  [\alpha_k: \beta_k] )\right)$ where $k = f[i,j]$. Each tuple $(t_l, [\alpha_l: \beta_l] )$ represents a match  from the substring $\corrupted_{[\alpha_l: \beta_l]}$ to the $t_l$-th inner code. 
Thus, for any $l\in [k]$, we have $g(t_l,  y_{[\alpha_l: \beta_l]} ) = 1$. 

Specifically, $M[i,j]$ is constructed as following. For $i=0$ or $j=0$, $M[i,j]$ is empty. For $i,j$ both nonzero, if $f[i,j] = f[i, j-1]$, then $M[i,j] = M[i,j-1]$. Otherwise, $f[i,j] = f[i-1, j'] + g(i, \corrupted_{[j'+1, j]})$ for some $0 \le j' < j$. If there are multiple such $j'$, we choose an arbitrary one. Then, if $g(i, \corrupted_{[j'+1: j]}) = 1$, we append $(i, [j'+1: j] )$ to $M[i,j']$ to get $M[i,j]$. If  $g(i, \corrupted_{[j'+1: j]}) = 0$, we let $M[i,j] = M[i,j']$.

Assume $M$ is an alignment returned by the algorithm.
Let $\tilde{z}$ be such that if $t$ appears in a match $(t, [\alpha: \beta])$ in  $M$ then   $\tilde{z}^t  $ is recovered according to the match, i.e. it is equal to  $u \in \Cin^i \setminus \{0^{n'}\}, \text{ s.t. }\ed(u, \corrupted_{[\alpha: \beta]}) \le d'/2$.
For $t \in [n]$ not appearing in any matches of $M$, let $\tilde{z}^t = 0^{n'}$.

Finally, from $\tilde{z}$ we can recover the symbols of the outer code, then we use the decoder of $\Cout$ to get $x$.

\subsection{Analysis}

\begin{claim}
\label{claim:alignment_size1}
Any non-zero alignment between $\corrupted_{[1:j]}$ and the first $i$ blocks has size at most $f[i,j]$.
\end{claim}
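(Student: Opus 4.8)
The plan is to argue by induction on $i+j$ that $f[i,j]$ is an upper bound on the size of any non-zero alignment between $\corrupted_{[1:j]}$ and the first $i$ blocks; since $f[i,j]$ itself corresponds to a valid alignment (constructed above as $M[i,j]$), this shows $f[i,j]$ is exactly the maximum. The base cases $i=0$ or $j=0$ are immediate, since there is no non-zero alignment when there are no blocks or no symbols of $\corrupted$, and $f[i,j]=0$ there.

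For the inductive step, fix $i,j \geq 1$ and let $M = \left((t_1,[\alpha_1:\beta_1]),\dots,(t_m,[\alpha_m:\beta_m])\right)$ be an arbitrary non-zero alignment between $\corrupted_{[1:j]}$ and the first $i$ blocks, with $m$ its size. I split on whether the last block index $i$ is used, and whether the last symbol $\corrupted[j]$ is used. First, if $i$ does not appear among $t_1,\dots,t_m$, then $M$ is also a non-zero alignment between $\corrupted_{[1:j]}$ and the first $i-1$ blocks, so by the inductive hypothesis $m \leq f[i-1,j] \leq f[i,j]$ by the updating rule (take $j'=j$ and note $g(i,\cdot)\ge 0$; also $f[i,j]\ge f[i,j-1]$, but here the first branch already suffices). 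Second, if $i$ appears but the last match $(t_m,[\alpha_m:\beta_m])$ does not touch position $j$, i.e. $\beta_m < j$, then $M$ is also a non-zero alignment between $\corrupted_{[1:j-1]}$ and the first $i$ blocks, so $m \leq f[i,j-1] \leq f[i,j]$. Third, suppose the last match is $(i,[\alpha_m:\beta_m])$ with $\beta_m = j$ — note that since $t_1<\dots<t_m$, if $i$ is used it must be $t_m$, and one may as well assume the match using block $i$ is the last one with $\beta_m=j$, since otherwise we are in the previous case. Then $\left((t_1,[\alpha_1:\beta_1]),\dots,(t_{m-1},[\alpha_{m-1}:\beta_{m-1}])\right)$ is a non-zero alignment between $\corrupted_{[1:\alpha_m - 1]}$ and the first $i-1$ blocks (the substrings are non-overlapping and monotone, so they all lie in $\corrupted_{[1:\alpha_m-1]}$, and $t_1<\dots<t_{m-1}<i$), hence by the inductive hypothesis $m-1 \leq f[i-1,\alpha_m-1]$. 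Since $g(i,\corrupted_{[\alpha_m:j]}) = 1$, the updating rule with $j' = \alpha_m - 1$ gives $f[i,j] \geq f[i-1,\alpha_m-1] + 1 \geq m$.

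In all cases $m \leq f[i,j]$, completing the induction. The only mildly delicate point — and the place I expect to spend the most care — is the third case: I must make sure that a general alignment using block $i$ can be normalized so that the match on block $i$ is the last tuple and its right endpoint is exactly $j$ (or else reduce to one of the other two cases), and that after removing that tuple the remaining matches genuinely form a valid alignment against the first $i-1$ blocks on the prefix $\corrupted_{[1:\alpha_m-1]}$; this uses monotonicity of the $t_l$'s and the non-overlapping, non-crossing structure of the substrings. Everything else is a routine unwinding of the recurrence.
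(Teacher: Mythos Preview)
Your proof is correct and follows essentially the same inductive approach as the paper: induct on $(i,j)$, case-split on whether block $i$ (respectively position $j$) is used by the alignment, and in the remaining case peel off the last match to invoke the hypothesis on $f[i-1,\alpha_m-1]$. The only quibble is your appeal to the updating rule with $j'=j$ in case~1 (the recurrence only ranges over $0\le j'<j$), but the paper makes the same handwave when asserting $f[i_0+1,j_0]\ge f[i_0,j_0]$, and it is easily patched by further splitting that case on whether $\beta_m<j$.
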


\begin{proof}
We prove this by induction on $i$ and $j$. The base case is $i=0$ or $j=0$, and we have $f[i,j] = 0$ by definition. The claim holds trivially. For fixed $0 < i_0 < n$ and $0 < j_0 < N$, we assume the claim holds for any $i,j$ such that $0\le i \le i_0$ and $0 \le j \le j_0$. We show that the claim also holds for $f[i_0 + 1, j_0]$ and $f[i_0, j_0 + 1]$. 

For $f[i_0+1, j_0]$, let $M'$ be an optimal alignment between $\corrupted_{[1:j_0]}$ and the first $i_0 + 1$ blocks. There are two cases. First, there is a match  $(i_0+1, \corrupted_{[j' + 1:j_0]})$ in $M'$ for some $j'$. Then $M'\setminus \{(i_0 + 1, \corrupted_{[j': j_0]})\}$ is a matching between $\corrupted_{[1:j']}$ and the first $i_0$ blocks.  Since  $\forall j', f[i_0, j']$ is the size of the largest alignment between $\corrupted_{[1:j']}$ and the first $i_0$ blocks by the inductive hypothesis, we know $  f[i_0, j'] + 1 = |M'| $.
Thus $f[i_0+1, j_0] \ge f[i_0, j'] + 1 =  M'$. 
The second case is when there is no matching between any substring to $(i_0 + 1)$-th block in $M'$. Then $M'$ is an alignment between $\corrupted_{[1: j_0]}$ to the first $i_0$ blocks. By the inductive hypothesis, we have $|M'| =  f[i_0, j_0]$. By the updating rule of $f$, we have $f[i_0 + 1, j_0]\ge f[i_0, j_0]$.
So $f[i_0 + 1, j_0]\ge |M'|$.

The proof is similar for $f[i_0, j_0 + 1]$. 

\end{proof}

In the following discussion, we assume $\ed(z, \corrupted) = T$.
We first show the following claim. 
Let $\hat{n}$ be the number of non-zero blocks in $z$.
\begin{claim}
\label{claim:alignment_size2}
$f[n, N] \ge \hat{n}-2\frac{T}{d'}$.
\end{claim}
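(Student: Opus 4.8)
\textbf{Proof plan for Claim~\ref{claim:alignment_size2}.} The goal is to show that the dynamic program finds a non-zero alignment of size at least $\hat{n} - 2T/d'$, where $\hat{n}$ is the number of non-zero blocks of $z$ and $T = \ed(z,\corrupted)$. By \cref{claim:alignment_size1} it suffices to exhibit \emph{some} non-zero alignment between $\corrupted$ and the $n$ blocks of this size; the natural candidate is the alignment coming from the ``true'' block structure. First I would fix an optimal sequence of $T$ edit operations transforming $z$ into $\corrupted$. This induces a partition $\corrupted = \corrupted^1 \circ \corrupted^2 \circ \cdots \circ \corrupted^n$ where $\corrupted^i$ is the (possibly empty) substring that the block $z^i$ was transformed into, and crucially $\sum_{i=1}^n \ed(z^i, \corrupted^i) \le T$, since restricting the global edit script to each block gives a (not necessarily optimal) way to edit $z^i$ into $\corrupted^i$.

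Next I would call a block index $i$ \emph{good} if $z^i \neq 0^{n'}$ and $\ed(z^i,\corrupted^i) \le d'/2$, and \emph{bad} otherwise. For every good $i$, the substring $\corrupted^i = \corrupted_{[\alpha_i:\beta_i]}$ satisfies $g(i,\corrupted^i) = 1$ with witness $u = z^i \in \Cin^i \setminus\{0^{n'}\}$. Since the $\corrupted^i$ are consecutive non-overlapping substrings of $\corrupted$ and the block indices $i$ are strictly increasing, collecting $(i, [\alpha_i:\beta_i])$ over all good $i$ yields a valid non-zero alignment. Its size equals the number of good indices. Among the $\hat{n}$ non-zero blocks, the number of bad ones is at most the number of non-zero blocks with $\ed(z^i,\corrupted^i) > d'/2$; since $\sum_i \ed(z^i,\corrupted^i) \le T$, there are at most $2T/d'$ such blocks. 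Hence the number of good indices is at least $\hat{n} - 2T/d'$, and combining with \cref{claim:alignment_size1} gives $f[n,N] \ge \hat{n} - 2T/d'$.

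The only mild subtlety — and the part I would be most careful about — is making precise that an optimal global edit script restricts to each block with the edit costs adding up to at most $T$. The clean way is to think of the edit script as a monotone alignment (matching) between positions of $z$ and positions of $\corrupted$ realizing $\lcs(z,\corrupted)$; this matching partitions $\corrupted$ at the images of the block boundaries of $z$ (breaking ties for deleted boundary positions in any fixed way), and the matched pairs lying inside block $i$ form a common subsequence of $z^i$ and $\corrupted^i$, so $\ed(z^i,\corrupted^i) = |z^i| + |\corrupted^i| - 2\,\lcs(z^i,\corrupted^i)$ and summing over $i$ telescopes to exactly $\ed(z,\corrupted) = T$ (using $\sum_i |z^i| = |z|$, $\sum_i |\corrupted^i| = |\corrupted|$, and that the per-block common subsequences together form a common subsequence of $z$ and $\corrupted$ of total length $\lcs(z,\corrupted)$). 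Everything else is a counting argument and poses no real obstacle.
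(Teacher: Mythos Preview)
Your proposal is correct and follows essentially the same approach as the paper: partition $\corrupted$ according to an optimal edit script into pieces $\corrupted^1,\ldots,\corrupted^n$, observe that at most $2T/d'$ of the non-zero blocks can have $\ed(z^i,\corrupted^i)>d'/2$, and use the remaining good non-zero blocks to build a non-zero alignment whose size gives the bound via \cref{claim:alignment_size1}. Your treatment of the ``edit costs add up'' subtlety via the $\lcs$ alignment is a bit more careful than the paper's, but the argument is the same.
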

\begin{proof}
We show that there is an alignment between $\corrupted$ and the $n$ blocks with size $\hat{n}-2\frac{T}{d'}$. Then the claim is a direct result of \cref{claim:alignment_size1}. 

Since $\ed(z, \corrupted) = T$, there is a sequence of $T$ edit operations that transforms $z$ to $\corrupted$.   Denote $z = z^1\circ \cdots \circ z^n$, where $z_i \in \Cin^i, i\in [n]$. The $T$ edit operations can be partitioned into $n$ groups such that the $i$-th group of the edit operations is performed on $z^i$.  Notice that there are at most $2\frac{T}{d'}$ groups with more than $d'/2$ edit operations. Thus, for the remaining blocks, $\ed(z^i, \corrupted^i)\le d'/2$. For these blocks, if $z_i$ is not zero, then $\corrupted^i$ can be matched to $i$-th block. This gives an alignment of size at least $\hat{n}-2\frac{T}{d'}$.
\end{proof}

For any match $(t,  [\alpha:\beta] )$, we say it gives a correct recovery if $\tilde{z}^i = z^i$. Otherwise, we say it gives a wrong recovery. 
The alignment returned by the dynamic programming may contain wrong matches. The following lemma ensures that we can find a good amount of correct matches. 
\begin{lemma}
\label{lem:correctmatches}
The number of correct recoveries giving by matches in $M[n, N]$ is at least $\hat{n} - 8\frac{T}{d'}$.
\end{lemma}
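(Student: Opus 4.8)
The goal is to show that the DP-returned alignment $M[n,N]$ yields at least $\hat n - 8T/d'$ correct recoveries. By \cref{claim:alignment_size2} we already know $|M[n,N]| = f[n,N] \ge \hat n - 2T/d'$, so it suffices to show that at most $6T/d'$ of the matches in $M[n,N]$ give a wrong recovery. I would fix the returned alignment $M = M[n,N]$, written as $\left((t_1,[\alpha_1{:}\beta_1]),\dots,(t_k,[\alpha_k{:}\beta_k])\right)$ with $t_1 < \dots < t_k$ and non-overlapping intervals, and let $u_\ell \in \Cin^{t_\ell}\setminus\{0^{n'}\}$ be the codeword witnessing $g(t_\ell, y_{[\alpha_\ell:\beta_\ell]})=1$, i.e. $\ed(u_\ell, y_{[\alpha_\ell:\beta_\ell]}) \le d'/2$, so that $\tilde z^{\,t_\ell}=u_\ell$. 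The match $\ell$ is a wrong recovery exactly when $u_\ell \neq z^{t_\ell}$.

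**Key steps.** First, relate $y_{[\alpha_\ell:\beta_\ell]}$ back to $z$: since $z$ is transformed into $y$ by $T$ edit operations, the alignment of $z$ to $y$ induces, for each match interval $[\alpha_\ell:\beta_\ell]$ of $y$, a corresponding substring $w'_\ell$ of $z$ (the preimage of that interval under the optimal $z\to y$ transformation), and the total number of edit operations "charged" to all the $w'_\ell$ is at most $T$ because the intervals $[\alpha_\ell:\beta_\ell]$ are disjoint. Concretely $\sum_\ell \ed(w'_\ell, y_{[\alpha_\ell:\beta_\ell]}) \le T$. Second, each $w'_\ell$ is a substring of $z$, hence a substring of $z^{t_\ell-1}\circ z^{t_\ell}\circ z^{t_\ell+1}$ or, more carefully, since the intervals in $y$ are monotone and disjoint and the $t_\ell$ are increasing, $w'_\ell$ is a substring of the concatenation of the blocks of $z$ "around position $t_\ell$"; I would pin down that $w'_\ell$ overlaps at most the blocks $z^{t_\ell-1}, z^{t_\ell}, z^{t_\ell+1}$ only for few $\ell$, or — cleaner — just observe $w'_\ell$ is a substring of $u\circ v$ for two \emph{adjacent} inner codewords of $z$ for all but $O(T/d')$ values of $\ell$ (a match interval in $y$ that is "small" has a short preimage, and the only obstruction to being short is having many edits charged to it). Third, invoke \cref{prop:propertymainbasic}: if $u_\ell$ is not equal to either of the two adjacent inner codewords of $z$ whose concatenation contains $w'_\ell$, then $\ed(u_\ell, w'_\ell)\ge d'$; but $\ed(u_\ell, w'_\ell) \le \ed(u_\ell, y_{[\alpha_\ell:\beta_\ell]}) + \ed(y_{[\alpha_\ell:\beta_\ell]}, w'_\ell) \le d'/2 + \ed(y_{[\alpha_\ell:\beta_\ell]}, w'_\ell)$, forcing $\ed(y_{[\alpha_\ell:\beta_\ell]}, w'_\ell) \ge d'/2$. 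Summing over such $\ell$ and using $\sum_\ell \ed(w'_\ell, y_{[\alpha_\ell:\beta_\ell]})\le T$ shows there are at most $2T/d'$ such matches. Fourth, the remaining "bad" cases are: $u_\ell$ equals one of the two adjacent blocks of $z$ but not the block $z^{t_\ell}$ it is assigned to — these can be charged to a position where the $z$-block structure is misaligned with the $y$-interval structure, and each such misalignment event again costs $\ge d'/2$ edits locally (the block at position $t_\ell$ in $z$ either got $\ge d'/2$ edits, or $w'_\ell$ genuinely straddles the wrong pair of blocks which again forces many unmatched symbols); summing, at most another $O(T/d')$ matches. Adding the $2T/d'$ slack from $f[n,N]$ itself and tuning constants gives the bound $8T/d'$.

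**Main obstacle.** The delicate point is the bookkeeping in steps 2–4: cleanly defining the preimage substrings $w'_\ell \subseteq z$ of the match intervals and proving that, except for an $O(T/d')$-sized exceptional set, each $w'_\ell$ is contained in the concatenation of just two \emph{adjacent} inner codewords of $z$, so that \cref{prop:propertymainbasic} (which only speaks about $u\circ v$ for adjacent $j, j+1$) actually applies. A match interval $[\alpha_\ell:\beta_\ell]$ whose preimage spans three or more blocks of $z$ must itself be long in $z$, i.e. $|\beta_\ell - \alpha_\ell|$ is within $O(d')$ of at least $2n' - O(d')$... but the match only guarantees $\ed(u_\ell, y_{[\alpha_\ell:\beta_\ell]})\le d'/2$ with $|u_\ell| = n'$, so $|y_{[\alpha_\ell:\beta_\ell]}| \le n' + d'/2$ and hence its preimage $w'_\ell$ has length $\le n' + d'/2 + (\text{edits charged to it})$; a preimage spanning three blocks would need length $> n'$ with genuinely more than $n'$ symbols, i.e. more than $d'/2$ edits charged, and there can be at most $2T/d'$ such $\ell$. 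This is the argument I expect to need the most care; once it is in place, the triangle-inequality estimates and the final summation are routine.
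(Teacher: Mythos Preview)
Your approach is essentially the paper's: fix a set of $T$ edit operations transforming $z$ into $y$, associate to each match interval $[\alpha_\ell:\beta_\ell]$ its preimage substring in $z$, and use the triangle inequality together with \cref{prop:propertymainbasic} to argue that a wrong recovery forces many edits to be charged to that interval. The paper organises this more compactly by proving a single claim: if fewer than $d'/6$ edits fall in $y_{[\alpha:\beta]}$, then the recovery is correct. From $\ed(u,w)\le d'/2 + d'/6 = 2d'/3$ one gets $|w|\le n'+2d'/3$, so $w$ touches at most three blocks of $z$; trimming the shorter end yields a substring $w'$ of two adjacent blocks with $\ed(u,w')\le d'$, and then \cref{prop:propertymainbasic} forces $u$ to equal one of those two blocks. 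This single threshold immediately bounds the wrong recoveries by $6T/d'$, and combining with \cref{claim:alignment_size2} gives the stated $\hat n - 8T/d'$. Your separate case split (preimage spans $\ge 3$ blocks; Property fails; wrong-index case) reaches the same place but with looser bookkeeping.

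On your ``fourth case'' (where $u_\ell$ equals one of the two adjacent blocks $z^j,z^{j+1}$ but $j,j+1\neq t_\ell$): the paper simply asserts that $u$ equals ``the corresponding block of $z$, i.e.\ $z^i$'' after invoking \cref{prop:propertymainbasic}, without further argument. Your misalignment-charging sketch for this case does not quite work as stated (if $\Cin^{t_\ell}\cap\Cin^j$ contained a common nonzero string, one could have a wrong recovery with zero edits). What actually closes this case is that $u_\ell\in\Cin^{t_\ell}\setminus\{0\}$ and $u_\ell=z^j\in\Cin^j$ force $t_\ell=j$, i.e.\ distinct inner codes share no nonzero codeword; this is not literally part of \cref{prop:propertymainbasic} but is a by-product of the random/derandomised construction in \cref{lem:innerconstruct}. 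Both the paper and your proposal are terse on this point.
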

\begin{proof}
We assume $M[n,N] $ has size $m$ i.e.
\[
M[n,N] = \left( (t_1,  [\alpha_1: \beta_1] ),(t_2,  [\alpha_2: \beta_2] ), \dots, (t_m, \ [\alpha_m: \beta_m] )\right).
\]

In the following analysis, we fix a set of $T$ edit operations that transform $z$ to $\corrupted$. 
We upper bound the number of wrong matches.
Consider a match $(i, [\alpha: \beta])$.
\begin{claim}
If there are $<d'/6$ errors in $ y_{[\alpha: \beta]}$, then  $\tilde{z}^i = z^i$.
\end{claim}
\begin{proof}
Suppose not.

Assume $ y_{[\alpha: \beta]}$ is achieved by applying $<d'/6$ insdels on a substring $ w $ of $z$.
the algorithm finds a $u\in \Cin^i$ s.t. $\ed(u, y_{[\alpha: \beta]} ) \le d'/2$, so $\ed(u, w) \le 2d'/3$.
Note that this implies $|w| \le n'+ \frac{2}{3}d'$.
Also note that $w$ can have overlaps with at most $3$ blocks of $z$, since otherwise its length is at least $2n'$.
Now one can pick a substring of $w'$ of $w$ which only has overlaps with  at most $2$ blocks of $z$, and $\ed(u, w') \le d'$.
Our inner codes have  \cref{prop:propertymainbasic}.
So $u$ has to be equal to the corresponding block of $z$ i.e. $z^i$.
And the algorithm set $\tilde{z}^i  = u$, hence the claim holds.
\end{proof}

Thus wrong recoveries i.e. $\tilde{z}^i \neq z^i$ by a match  $(i, [\alpha: \beta])$ must be s.t. there are at least $d'/6$ insdels in   $ y_{[\alpha: \beta]}$.
There are  $T$ errors in total.
So there can be at most $T/(d'/6) = 6T/d'$ wrong recoveries since  $[\alpha_i: \beta_i], i\in [m]$ do not overlap.

By \cref{claim:alignment_size2}, $M$ has size $\ge \hat{n} - 2T/d'$.
So the number of correct recoveries is at least $\hat{n} - 8T/d'$.
\end{proof}

\begin{lemma}\label{lem:transfertohamingdist}
$ \hamming(\tilde{z}, z) \le 16T/d'$.
\end{lemma}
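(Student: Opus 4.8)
The plan is to bound the Hamming distance $\hamming(\tilde{z}, z)$ by accounting separately for the two ways a block of $\tilde{z}$ can disagree with the corresponding block of $z$: either a match in $M[n,N]$ produced a wrong recovery, or a block that is \emph{not} touched by any match of $M[n,N]$ was set to $0^{n'}$ by the algorithm while the true block $z^i$ was nonzero. The first source is already controlled: by \cref{lem:correctmatches}, at most $8T/d'$ of the matches in $M[n,N]$ give wrong recoveries (and the number of matches is at most $\hat{n}$, so in particular this also bounds the matched-but-wrong count). So the whole argument reduces to bounding the number of nonzero blocks of $z$ that receive no match in $M[n,N]$.

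For this, first I would observe that $M[n,N]$ has size $|M[n,N]| = f[n,N] \ge \hat{n} - 2T/d'$ by \cref{claim:alignment_size2}, where $\hat{n}$ is the number of nonzero blocks in $z$. The key point is that the $\hat{n} - 8T/d'$ correct recoveries guaranteed by \cref{lem:correctmatches} correspond to matches $(t_l, [\alpha_l:\beta_l])$ with $\tilde{z}^{t_l} = z^{t_l}$, and since $z^{t_l}$ is matched it must be nonzero (a match requires $g(t_l, \cdot) = 1$, i.e. distance $\le d'/2$ to a \emph{nonzero} inner codeword). Hence at least $\hat{n} - 8T/d'$ of the nonzero blocks of $z$ are recovered correctly. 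Consequently the number of nonzero blocks of $z$ that are \emph{not} recovered correctly — whether because they were matched to the wrong codeword or because they received no match and were set to $0^{n'}$ — is at most $\hat{n} - (\hat{n} - 8T/d') = 8T/d'$.

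Finally I would assemble the count. A block $i$ where $\tilde{z}^i \ne z^i$ falls into one of two categories: (a) $z^i = 0^{n'}$, but the algorithm set $\tilde{z}^i$ to a nonzero codeword because some match $(i, [\alpha:\beta])$ landed in block $i$; or (b) $z^i \ne 0^{n'}$ and the recovery (matched or defaulted to $0^{n'}$) is wrong. Category (b) is bounded by $8T/d'$ by the previous paragraph. For category (a), note that each such block consumes a distinct match of $M[n,N]$ that is a wrong recovery, and by the claim inside the proof of \cref{lem:correctmatches}, any wrong recovery forces at least $d'/6$ insdel errors inside the corresponding disjoint substring $y_{[\alpha:\beta]}$; since the substrings are non-overlapping and there are $T$ errors total, category (a) contributes at most $6T/d'$. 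Adding the two gives $\hamming(\tilde{z},z) \le 8T/d' + 6T/d' \le 16T/d'$, with the slack absorbing the fact that the $8T/d'$ bound of \cref{lem:correctmatches} already folds in the wrong-match count. The main obstacle is getting the bookkeeping exactly right so that a block is not double-counted across categories (a) and (b) and so that the $6T/d'$ / $8T/d'$ estimates are not applied to the same errors twice; I expect that once the correct/incorrect partition of the nonzero blocks of $z$ is set up cleanly as above, the inequality $16T/d'$ follows immediately with room to spare.
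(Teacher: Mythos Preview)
Your proposal is correct and follows essentially the same decomposition as the paper's proof: split the blocks where $\tilde{z}$ and $z$ disagree into those where $z^i=0^{n'}$ (your category (a)) and those where $z^i\neq 0^{n'}$ (your category (b)), bound the latter by $8T/d'$ via \cref{lem:correctmatches}, and bound the former by the number of wrong recoveries. The paper uses the looser $8T/d'$ for category (a) while you use the sharper $6T/d'$ coming from the claim inside the proof of \cref{lem:correctmatches}; both sum to at most $16T/d'$. Your concern about double-counting is unnecessary, since categories (a) and (b) are disjoint by definition (they partition the disagreeing blocks according to whether $z^i$ is zero or nonzero), so the bookkeeping is already clean.
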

\begin{proof}
By \cref{lem:correctmatches}, we can recover   at least $\hat{n} - 8T/d'$ blocks correctly.
Notice that these correctly recovered blocks are all non-zero since $M$ is a non-zero matching.
Now consider zero-blocks in $z$.
The number of them is $n-\hat{n}$.
For these blocks, the decoding algorithm makes mistakes in recovering them only when these wrong recoveries involve their indices.
Note that each wrong recovery only involves one index of  a zero block and $M$ is monotone.
So there are at most $8T/d'$ mistakes, i.e. at least $n-\hat{n} - 8T/d'$  zero blocks are correctly recovered.
Thus totally there are $\ge n - 16T/d'$ blocks correctly recovered.
Hence $ \hamming(\tilde{z}, z) \le 16T/d'$.
\end{proof}

\begin{proof}[Proof of \cref{thm:highrate1}]
Let the rate of the outer code be $1-\frac{\gamma}{2}$ and the rate of the inner code be $1/3-\gamma/2$.
By \cref{lem:innerconstruct}, we only need $d'/n' \le \frac{\gamma}{20\log \frac{2}{\gamma}}$ to get a sequence of inner code meeting \cref{prop:propertymainbasic}.
When $T \le \frac{\gamma}{64}nd'$, by \cref{lem:transfertohamingdist}, $\hamming(\tilde{z}, z) \le \gamma n/4$.
Since the relative distance of the outer code is $\gamma/2$.
One can decode correctly.

\end{proof}

\section{For Rate Close to 1/2}
\label{sec:high rate 2}
In this section we modify our construction in the previous section to give a linear insdel code with rate close to $1/2$.

\begin{theorem}\label{thm:highrate}
For every constant $\gamma \in (0, 1/2)$, there exists a binary linear insdel codes with rate $\ge 1/2 - \gamma$, relative distance  $\rho \ge  \Omega(\gamma^3/\log \frac{1}{\gamma})$, and a polynomial time encoding.
\end{theorem}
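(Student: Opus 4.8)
## Proof Proposal for Theorem \ref{thm:highrate}

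The plan is to follow the same three-layer template as in Section~\ref{sec:high rate 1} (Reed-Solomon outer code, position-dependent linear inner codes, dynamic-programming decoder), but to replace Property~\ref{prop:propertymainbasic} with the stronger ``multi-block'' local property sketched in the technical overview, which allows the inner rate to approach $1/2$ instead of $1/3$. Concretely, I would first fix the outer code $\Cout$ to be an $[n,k,d]_n$ Reed-Solomon code with $k=(1-\delta)n$, $d=\delta n$ for $\delta=\Theta(\gamma)$, and set the inner parameters $k'=\log n$, $n'=O(k'/\gamma)$, $t=O(\log\frac1\gamma/\gamma^2)$, $s=\Omega(\gamma t)$, $d'=\Omega(\gamma n')$. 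Then I would state and prove the analogue of Lemma~\ref{lem:innerconstruct}: there is an efficient construction of $n$ binary linear inner codes $\Cin^1,\dots,\Cin^n$, each of rate $1/2-\gamma$, satisfying the new local property --- namely, for every window $w$ of $t$ consecutive inner codewords, every window $u$ of $t+1$ consecutive inner codewords, and every substring $w'$ of $u$, if $w$ or $u$ has at least $s$ \emph{unique} blocks (non-zero inner codewords that are not a scalar multiple / equal message of any block of the other window) then $\ed(w,w')\ge d'$.

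The heart of the argument is the probabilistic existence of such inner codes, which I would prove by the two-case union bound from the overview. Start with independent random linear inner codes. Suppose the property fails for some $C_1,C_2$, witnessed by $w,u,w'$ with $\ed(w,w')<d'$, hence $\lcs(w,w')\ge(|w|+|w'|-d')/2$; fix a monotone alignment $M$ of this size. \textbf{Case 1:} some pair $(i,j)\in M$ has $|i-j|\ge d'$, which forces $i'\ne j'$ for every pair in $M$ (else $\ge d'$ unmatched bits appear in $C_1$ or $C_2$); then all matched bits involve independent random coordinates, the alignment is a valid matching with probability $\le 2^{-((\hat t-1)n'-O(d'))}$ where $\hat t\ge s$ is the larger number of non-zero blocks, and the union bound over the $\le 2^{2\hat t k'}$ choices of non-zero inner codewords (plus $\poly(n)$ choices of positions, window lengths, and alignments) succeeds as long as $2k'<n'$, i.e. rate $<1/2$, with the $O(d')$ slack absorbed by taking $d'$ a small enough multiple of $n'$. \textbf{Case 2:} every pair $(i,j)\in M$ has $|i-j|<d'$; here restrict $M$ to matches whose $u$-endpoint lies in a unique block and whose $w$-endpoint lies in the correspondingly-positioned block --- this loses at most $O((d'+n')\cdot(\text{few blocks}))$ matches, leaving a trimmed matching $M'$ of size $\ge s'(n'-d')-n'-d'$ where $s'\ge s$ is the number of unique blocks involved; since unique blocks are, by definition, not scalar multiples of their counterparts, each surviving match is correct with probability $\le 1/2$ independently (using Claim~\ref{clm:samecode}-type independence within a single code and genuine independence across codes), so $M'$ is a valid matching with probability $\le 2^{-(s'(n'-d')-n'-d')}$, and the union bound over the $\le 2^{2s'k'}$ tuples of the relevant inner codewords succeeds again for rate $<1/2$. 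Derandomization is identical to Lemma~\ref{lem:innerconstruct}: since $t,s$ are constants and $n'=O(\log n)$, each bad event depends on $O(\log n)$ codeword bits, so an $\eps$-biased space with $\eps=1/\poly(n)$ and the XOR lemma (Lemma~\ref{lem:XORlem}) replace the uniform analysis, and we exhaustively search the $\poly(n)$ seeds.

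With the inner codes in hand, the decoding and distance analysis mirror Section~\ref{sec:high rate 1} almost verbatim. The decoder runs the same ``maximum non-zero block matching'' dynamic program (now declaring a substring matchable to block $i$ if it is within $d'/2$ of a non-zero codeword of $\Cin^i$), fills matched blocks with their recovered codewords and all other blocks with $0^{n'}$ to get $\tilde z$. Writing $T=\ed(z,y)$, the same counting shows the returned matching has size $\ge\hat n-O(T/d')$, and a wrong recovery at block $i$ forces $\Omega(d')$ errors inside the corresponding substring, except one must now argue using the distance property of $\Cout$ that wrong recoveries cannot be concentrated: within any $t$ consecutive blocks where $z$ has $\ge s$ unique blocks, a cluster of wrong recoveries would contradict the new local property, so the total number of wrong recoveries is $O(T/d')$ plus the number of ``bad'' windows, which the outer distance $\delta n$ bounds by $O(\gamma n)$. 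Hence $\hamming(\tilde z,z)\le O(T/d')+O(\gamma n)$, and choosing $T\le\Omega(\gamma^3 n'/\log\frac1\gamma)\cdot n$ makes this at most $\delta n/2$, so the Reed-Solomon decoder recovers $x$; since $d'=\Theta(\gamma n')$ and $n'=\Theta(n'\!)$, the relative decoding radius is $\Omega(\gamma^3/\log\frac1\gamma)$ and the overall rate is $(1-\delta)(1/2-\gamma)\ge 1/2-\gamma'$.

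The main obstacle I expect is Case~2 of the inner-code existence proof: making the ``trimming'' of the alignment $M'$ rigorous --- precisely quantifying how many matches survive after deleting those not anchored in unique blocks, and verifying that the surviving matches are genuinely pairwise-independent coin flips (this needs a careful combination of cross-code independence with the intra-code independence of Claim~\ref{clm:samecode}, handled separately on the unique blocks of $w$ versus those of $u$) --- so that the exponent $s'(n'-d')-n'-d'$ dominates the $2s'k'$ term in the union bound for every relevant configuration, including degenerate ones where windows overlap or $w'$ is short. Getting the constants in $t=\Theta(\log\frac1\gamma/\gamma^2)$ and $s=\Theta(\gamma t)$ to simultaneously satisfy the density requirement (so the outer code forces enough unique-block windows) and the union-bound requirement will require the delicate combinatorial bookkeeping alluded to in the overview.
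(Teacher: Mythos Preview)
Your proposal correctly identifies the technical heart of the construction --- the multi-block local property (the paper's Property~\ref{prop:propertymain-s}), the two-case union bound for the probabilistic existence of the inner codes, and the $\eps$-biased derandomization --- and these parts match the paper's Lemma~\ref{lem:propertymain-s} essentially line for line. (One slip: you write $n'=O(k'/\gamma)$, which would give inner rate $\Theta(\gamma)$; you want $n'=k'/(1/2-\gamma')$ so the inner rate is $1/2-\gamma'$.)

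However, the theorem only claims rate, relative \emph{distance}, and polynomial-time encoding --- not a decoding algorithm. The paper proves the distance directly (Lemma~\ref{lem: distance and high rate}): cut both concatenated codewords into length-$tn'$ windows, observe via an averaging argument (Lemma~\ref{lem:dense}) that at least a $\delta/2$ fraction of windows have $\ge \delta t/2$ positions where the outer symbols differ, hence at least a $\delta/4$ fraction of windows in one of the two codewords carry $\ge s$ unique blocks, and then apply Property~\ref{prop:propertymain-s} window-by-window to get relative edit distance $\ge (d'/n')\cdot\delta/(4t)=\Omega(\gamma^3/\log\frac1\gamma)$. You never give this direct argument.

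Instead you try to extract the distance from a decoding analysis, and here there is a real gap. Your decoder is the \emph{single-block} matching of Section~\ref{sec:high rate 1}: a substring of $y$ is ``matchable to block $i$'' if it lies within $d'/2$ of some nonzero $u\in\Cin^i$. But at inner rate $1/2-\gamma$, the single-block Property~\ref{prop:propertymainbasic} is exactly what fails (that is the $2^{3k'}$ vs.\ $2^{n'}$ barrier you are trying to overcome), so a wrong recovery at a single block need \emph{not} cost $\Omega(d')$ errors; your assertion ``a wrong recovery at block $i$ forces $\Omega(d')$ errors inside the corresponding substring'' is false here. The attempted patch (``a cluster of wrong recoveries would contradict the new local property'') does not connect: Property~\ref{prop:propertymain-s} compares a $t$-window of one \emph{codeword} to a substring of a $(t{+}1)$-window of another codeword, not the decoder's block-by-block guesses $\tilde z$ to anything. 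The paper's efficient-decoding result is a separate theorem (Theorem~\ref{thm: efficient decoding of high rate}) and uses a different decoder --- \emph{segment} matching, where each match pairs a substring of $y$ with an element of $\bigcirc_{j=i}^{i+t-1}\Cin^j$ at once --- precisely so that Property~\ref{prop:propertymain-s} can be invoked per segment.
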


\paragraph{The construction.} Following our general construction, here we take $\Cout$ to be an $[n,k,d]_{q}$ Reed-Solomon code with $|\alphabetout| = q =  \log n $, $k = (1- \delta) n$ and $d = \delta  n$, $\delta = \gamma /2$. 
We construct $n$ distinct binary inner codes $\Cin^1, \dots, \Cin^n$, each with codeword length $n'$, message length $k' =   \frac{1}{2}(1-\gamma') n' $  such that they satisfy the following \cref{prop:propertymain-s} with parameters $s = \frac{\delta t}{4}, t= O(\frac{\log\frac{1}{\gamma'}}{\gamma' \delta}), d' := \delta'n' = \Omega(\gamma' n'), \gamma' = \gamma/2$.
Recall the constructed code is denoted as $\C$.


\begin{property}[with parameters $s, t,  d'$]
\label{prop:propertymain-s}
$\forall i \in [n-t], j \in [n-t-1]$, $\forall w = w_i \circ \cdots \circ w_{i-1+t} \in \bigcirc_{l\in [t]} \Cin^{i-1+l}$,  $\forall u = u_j \circ u_{j+1} \circ \cdots \circ u_{j+t}\in \bigcirc_{l\in [t+1]}\Cin^{j-1+l}$, 
it always holds that for any substring $ w' $ of $u$, we have $\ed(w, w')\ge   d'$, 
as long as 
the number of unique blocks in $w$ or $u$   is  $\ge s$.

A unique block of $w$ (or $u$)  is defined to be a non-zero block such that there is no block  in $u$ (or $w$) that is the same codeword in a same inner code with this block.

We say the property is with parameter $(s, t, d')$ to indicate the parameters $s, t, d'$.
\end{property}

\begin{lemma}
\label{lem:propertymain-s}
For  every $\gamma' , \delta \in (0, 1/2)$,  
there exists an explicit construction of $n$ binary codes $\Cin^1, \dots, \Cin^n$ satisfying \cref{prop:propertymain-s} with parameter $(s, t, d')$, where each $\Cin^i$ has message rate $1/2 - \gamma' $, where $d'/n' = \frac{\gamma'} {10   }$, $t = \frac{10\log\frac{1}{\gamma'}}{\gamma'\delta}, s = \delta t/4$.
\end{lemma}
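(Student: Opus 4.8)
The plan is to follow the two‑stage template of \cref{lem:innerconstruct}. First I would take a \emph{random} construction: each $\Cin^i$ is an independent random binary linear code, specified by $k'=\tfrac12(1-\gamma')n'$ uniformly random basis vectors in $\{0,1\}^{n'}$ (so its rate is $\tfrac12-\gamma'$ by fiat), and I would show that with probability $1-1/\poly(n)$ this collection satisfies \cref{prop:propertymain-s} with parameters $(s,t,d')$. Then I would derandomize exactly as in \cref{lem:innerconstruct}: replace the $k'n'n$ uniform bits defining all generator matrices by an $\eps$‑biased distribution with $\eps=2^{-\Theta(tn')}=1/\poly(n)$, which by \cref{thm:epsbiased} uses only $O(\log n)$ truly random bits. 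The only use of randomness is in bounding the probability that a fixed monotone alignment is a valid bit‑matching, and such an event reads off at most $O(tn')=O(\log n)$ coordinates; hence by \cref{lem:XORlem} the $\eps$‑biased distribution is $\eps\cdot 2^{O(tn')}=1/\poly(n)$‑close to uniform on the relevant coordinates, every probability bound degrades by at most $1/\poly(n)$, and since $s,t$ are constants one can exhaustively search the $\poly(n)$ seeds and verify \cref{prop:propertymain-s} for each in polynomial time (only $O(1)$‑tuples of inner codewords need inspection). So the content is the probabilistic claim, which I would prove by the case analysis sketched in the technical overview.

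Fix $i,j$, fix $w=w_i\circ\cdots\circ w_{i-1+t}$ and $u=u_j\circ\cdots\circ u_{j+t}$ with at least $s$ unique blocks in $w$ or in $u$, fix a substring $w'$ of $u$, and suppose $\ed(w,w')<d'$. Then $\lcs(w,w')=\ell>(|w|+|w'|-d')/2$, which forces $|w'|>|w|-d'=tn'-d'$, so $w'$ omits at most $n'+d'$ bits of $u$. Fix a monotone alignment $M$ of size $\ell$ realizing the LCS. \emph{Case 1: $M$ contains a pair $(a,b)$ with $|a-b|\ge d'$.} By monotonicity this forces every pair in $M$ to be off‑diagonal, since a diagonal pair together with the long pair would leave $\ge d'$ unmatched bits on one side. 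Then every matched pair couples bits of codewords from two distinct (hence independent) inner codes, or from a common code at shifted positions, so by the argument of \cref{clm:samecode} together with independence across codes, $\Pr[M\text{ is a valid bit-matching}]\le 2^{-((\hat t-1)n'-O(d'))}$, where $\hat t\ge s$ is the larger of the numbers of non‑zero blocks in $w$ and in $w'$. A union bound over the $\le n^2$ pairs $(i,j)$, the $\poly(n)$ choices of $w'$, the $2^{O(n'\log t)}$ monotone alignments $M$, and the $\le 2^{2\hat t k'+O(1)}$ choices of the non‑zero inner codewords in $w$ and $u$ bounds this case, using $k'=\tfrac12(1-\gamma')n'$, by roughly $2^{2\hat t k'-(\hat t-1)n'+O(d'+n'\log t+\log n)}=2^{-2\gamma'\hat t n'+O(n'\log t+\log n)}$, which is $1/\poly(n)$ once $\hat t\ge s=\Theta(\log(1/\gamma')/\gamma')$ and $n'=\Theta(\log n)$, since then $\gamma'\hat t n'=\Omega(\log(1/\gamma')\,n')$ dominates the lower‑order terms.

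\emph{Case 2: every pair $(a,b)$ in $M$ has $|a-b|<d'$.} Let $s'\ge s$ be the larger of the numbers of unique blocks in $w$ and in $u$; say $u$ has $s'$ unique blocks. I would \emph{trim} $M$ to $M'$ by deleting every match whose $u$‑endpoint lies in a non‑unique block of $u$, and every match whose $w$‑endpoint falls outside the $w$‑block occupying the same position as the $u$‑block containing its $u$‑endpoint. Because every match has $|a-b|<d'$ and each block has length $n'$, each of the $s'$ unique $u$‑blocks loses at most $O(d')$ of its matches, so $|M'|\ge\ell':=s'(n'-O(d'))-O(n')$. By the very definition of a unique block, a matched bit of a unique $u$‑block is coupled to a bit of a $w$‑block that is \emph{not} the same codeword in the same inner code (it is either a different codeword, or from a different, independent inner code), and the $s'$ unique $u$‑blocks have $s'$ distinct partner $w$‑blocks; hence by \cref{clm:samecode} and independence, $\Pr[M'\text{ is a valid bit-matching}]\le 2^{-\ell'}$. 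Since $M$ valid implies $M'$ valid, a union bound — whose dominant term $2^{2s'k'+O(1)}$ now counts only the $s'$ unique $u$‑codewords and their $s'$ partner $w$‑codewords — bounds this case by roughly $2^{2s'k'-\ell'+O(n'\log t+\log n)}=2^{s'n'(-2\gamma'+O(\delta'))+O(n'\log t+\log n)}$. With $d'/n'=\delta'=\gamma'/10$ the term $O(\delta')$ stays below $2\gamma'$, and with $s'\ge s=\delta t/4$, $t=\Theta(\log(1/\gamma')/(\gamma'\delta))$ one gets $\gamma' s' n'=\Omega(\log(1/\gamma')\,n')$, again swamping the lower‑order terms, so this case is also $1/\poly(n)$. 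Summing the two cases over all choices shows \cref{prop:propertymain-s} fails with probability $1/\poly(n)$, completing the random construction; the derandomization above then finishes the lemma.

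The hard part will be Case 2: pinning down the right notion of ``unique block'', arguing that the trimmed alignment $M'$ still retains $s'(n'-O(d'))$ matches so its validity probability is as small as $2^{-\ell'}$, and verifying the independence structure behind that bound. It is precisely the tension between this $2^{-\ell'}\approx 2^{-s'n'}$ bound and the $2^{2s'k'}$ union‑bound term — which grows only with the number of unique blocks, not with the total $2t+1$ blocks — that both forces the quantitative choices $t=\Theta(\log(1/\gamma')/(\gamma'\delta))$, $s=\delta t/4$ and allows the inner rate $k'/n'=\tfrac12(1-\gamma')$ to approach $1/2$. Case 1 and the derandomization are routine adaptations of \cref{lem:innerconstruct} and \cite{CGHL21}.
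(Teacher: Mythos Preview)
Your proposal is correct and follows essentially the same approach as the paper: random linear inner codes, the same two-case split on whether the alignment contains a pair with $|a-b|\ge d'$, the same trimming to unique blocks in Case~2 with the key $2^{2s'k'}$ versus $2^{-s'(n'-O(d'))}$ tradeoff, and the same $\eps$-biased derandomization. (Minor slip: you write $k'=\tfrac12(1-\gamma')n'$ but then correctly compute with $k'=(\tfrac12-\gamma')n'$; also in Case~1 the paper takes $\hat t$ over $w$ and $u$ rather than $w$ and $w'$, which is slightly cleaner since unique blocks are defined relative to $u$.)
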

\begin{proof}
We first show that if the bases of the $n$ inner codes are randomly generated, then with high probability, the property holds.

For \cref{prop:propertymain-s},
suppose there exists a  substring $w'$ of $u$, such that $\ed(w, w') < d'$.
Then $\ell := \lcs(w, w') \ge (|w|+|w'| - d')/2 > tn' - d'$, since $|w'| > |w|-d'$, $|w| = tn'$.

Let $M$ be a monotone alignment between $w, w'$, with size $\ell$.
There are two possible cases for $M$.
Let $(i, j)$ denote one pair in $M$ where $i, j\in [n]$ are indices indicating the positions of the two entries in the two   codewords  in $\C$. 

Case 1, there exists $(i,j)$ in $M$ s.t. $|i-j| > d'$. 
Notice that in this case, it is not possible that there is $(i', j') \in M$ s.t. $i' = j'$, since then the edit distance is larger than $d'$.
Either of $w$ and $u$ has a number of non-zero blocks.
Let  $\hat{t} $ be the larger one of them.
The number of pairs in $M$ that connect to non-zero blocks is at least $\hat{t} n' - n'- d'$, since the worst case is that $\hat{t}$ is the number of non-zero blocks in $u$ and there are at most $n'+d'$ symbols of $u$   not matched.
As a result, the probability that $M$ is indeed matching same symbols happens with probability $ \le 2^{-(\hat{t} n' -n' -d')}$, because $M$ is monotone and bases of the inner codes are independently uniform at random.

The number of starting positions of the two intervals is at most $O(n^2)$.
The number of such alignments is at most ${tn' \choose \ell}{(t+1)n' \choose \ell}  $.
The number of positions of non-zero blocks is at most ${t \choose \hat{t}}{t+1 \choose \hat{t}} \le 2^{2t+2}$.
There are at most $2\hat{t}$  unique blocks in $w$ and $u$.
The number of choices for these non-zero blocks is at most $2^{ 2\hat{t} k'}$. 
So
\begin{align*}
    & 2^{-(\hat{t} n' -n' -d')} \times O(n^2) {tn' \choose \ell}{(t+1)n' \choose \ell}   2^{2t+2} 2^{2\hat{t}k'} \\
\le & 2^{-(\hat{t} n' -n' -d')} \times O(n^2) {tn' \choose d'}{(t+1)n' \choose d'}  2^{2t+2} 2^{2\hat{t}k'} && \text{ using } \ell \ge tn' - d', d' \le n'/2\\
\le & 2^{-(\hat{t} n' -n' -d')} \times O(n^2){(t+1)n' \choose d'}^2 2^{2t+2} 2^{2\hat{t}k'} && \text{ using } {tn' \choose d'} \le {(t+1)n' \choose d'}\\
\le & 2^{-(\hat{t} n' -n' -d')} \times O(n^2) \left(\frac{e(t+1)n'}{d'}\right)^{2d'} 2^{2t+2} 2^{2\hat{t}k'}  \\
\le & 2^{-\hat{t}(n'-2k') +n' + d' + 2t+2 } \times O(n^2) 2^{2d' \log \frac{e(t+1)n'}{d'}}  \\
\le & 2^{-\hat{t} 2\gamma' n' + n'+ 2t+2  + n'\log (2e(t+1))} O(n^2) && \text{ using } k' = (1/2-\gamma')n' ,  d'\le n'/2\\
\le & 1/n^{O(1)}. && \text{ using } \hat{t} \ge s , s = \delta t/4, t = \frac{10 \log \gamma'^{-1}}{\gamma' \delta},  \delta = \gamma' 
\end{align*}

Case 2, for every $(i,j) $ in $M$, $|i-j| \le d'$.
Notice that either $w$ and $u$ has a number of unique blocks.
Let $s'$ be the larger one.
Hence $s' \ge s$.
Without loss of generality, we do our proof assuming $u$ has a more number of unique blocks.
One can do a similar proof for the other condition.
Now we only consider matches s.t. one end is in a unique block of $u$ and the other end is in the corresponding block of $w$. 
The number of such matches is at least $ s'n' - n'-d' - s'd' =s'(n'-d') -n'- d' $.
This is because there are at most $n'+d'$ positions in unique blocks of $u$ that are not in any match, otherwise edit distance is larger than $d'$. 
And the term $s'(n'-d')$ comes from the following reason.
We claim that for each unique block there are at most $d'$ symbols matched to symbols in non-unique blocks.
Because otherwise  by the condition of this case, this block has to have matches to the left and right blocks of its corresponding block in the other string. There are totally $>d'$ such matches. Hence the corresponding block of the other string has to have $>d'$ symbols not matched.
Hence this shows the claim.
As a result, the probability that $M$ is indeed a matching happens with probability $\le 2^{-s'(n'-d')+ n'+ d' } $.
The number of starting positions of the two intervals is at most $O(nd')$.
The number of such alignments is at most ${tn' \choose \ell}{(t+1)n' \choose \ell}$.
So the number of positions of unique blocks is at most ${t \choose s'}{t+1 \choose s'} \le 2^{2t+2}$.
The number of different codewords in unique blocks is at most $2^{2s'k' }$.

So the probability that there is such a matching is at most
\begin{align*}
    & 2^{-s'(n'-d')+ n' + d' } \times O(nd') {tn' \choose \ell}{(t+1)n' \choose \ell} 2^{2t+2} 2^{ 2sk'} \\
\le & 2^{-s'(n'-d')+ n'+ d' } \times O(nd') {(t+1)n' \choose \ell}^2 2^{2t+2} 2^{ 2sk'} &&  {tn' \choose \ell} \le{(t+1)n' \choose \ell},  {t \choose s'} \le {t+1 \choose s'+1}\\
\le & 2^{ -2\gamma' n' s' + n'  + (s'+1)d'} \times O(nd') {(t+1)n' \choose \ell}^2 2^{2t+2}  && \text{ using } k' = (1/2 - \gamma') n'\\
\le & 2^{ -2\gamma' n' s' + n' + (s'+1)d'} \times O(nd') {(t+1)n' \choose n'+d'}^2 2^{2t+2}  && \text{ using } \ell \ge tn'-d'\\
\le & 2^{ -2\gamma' n' s'  + n'   + (s'+1)d'} \times O(nd')\left( \frac{e(t+1)n'}{n'+d'}\right)^2 2^{2t+2}  \\
\le & 2^{ -2\gamma' n' s' + n'  + (s'+1)d'} \times O(nd') 2^{2 n'\log \frac{e(t+1)}{1+\delta'}  } 2^{2t+2}  \\
\le & 1/n^{O(1)}. && \text{ using } d' = \frac{\gamma'}{10}n',   t = \frac{10\log\frac{1}{\gamma'}}{\gamma' \delta}, s = \delta t/4 
\end{align*}

As a result, there exists a sequence of inner codes meeting \cref{prop:propertymain-s}.

Next we derandomize the construction.
We generate the sequence of base vectors of all inner codes using an $\eps$-biased generator. 
Notice that in the above analysis the two interval only involve $O(\log n)$ codeword bits and they are linear of the base vectors. 
So any XOR of distinct bits from the two interval is $ \eps$-biased to a uniform random bit,
where by saying distinct bits we mean bits that are in distinct blocks or at distinct positions in a same code block.
By the XOR lemma, these distinct bits are   $\sqrt{2^{(2t+1)n'}} \eps$-close to uniform as long as $\eps$ is a small enough reverse polynomial.
So the above analysis can still go through.
The seed length is $O(\log \frac{n}{\eps})$.
Notice that the property can be checked in polynomial time.
So we can try every seed to find one meeting the property in polynomial time.

\end{proof}

We need the following dense property.
\begin{lemma}
\label{lem:dense}
Let $y, y'$ be two distinct codewords of the outer code, then for every $\ell \le n$, if we cut $y, y'$ into   non-overlapping substrings of length $\ell$, then for at least a $\frac{d}{2n}$ fraction of these substrings $y_t, t\in [\ell]$,   the distance between $y_t$ and $y'_{t}$ is at least  $ \frac{d}{2n} \ell $.
\end{lemma}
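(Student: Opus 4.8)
The plan is to use a simple averaging (counting) argument based on the Hamming distance between the two outer codewords. Since $y$ and $y'$ are distinct codewords of an $[n,k,d]$ Reed-Solomon code, we have $\hamming(y, y') \ge d$. Cut both strings into $m = \lceil n/\ell \rceil$ consecutive blocks of length $\ell$ (the last block possibly shorter); write $y = y_1 \circ \cdots \circ y_m$ and $y' = y'_1 \circ \cdots \circ y'_m$. The total number of coordinates in which $y$ and $y'$ differ is at least $d$, and these differences are distributed among the $m$ blocks, so $\sum_{t \in [m]} \hamming(y_t, y'_t) \ge d$.

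Now I would argue by contradiction. Suppose that fewer than a $\frac{d}{2n}$ fraction of the blocks have $\hamming(y_t, y'_t) \ge \frac{d}{2n}\ell$. Call a block \emph{heavy} if $\hamming(y_t, y'_t) \ge \frac{d}{2n}\ell$ and \emph{light} otherwise. A heavy block contributes at most $\ell$ to the sum (trivially, since each block has $\le \ell$ coordinates), and there are fewer than $\frac{d}{2n} m$ heavy blocks; a light block contributes less than $\frac{d}{2n}\ell$, and there are at most $m$ of them. Hence
\[
\sum_{t \in [m]} \hamming(y_t, y'_t) < \frac{d}{2n} m \cdot \ell + m \cdot \frac{d}{2n}\ell = \frac{d}{n} m \ell.
\]
Since $m\ell = \lceil n/\ell\rceil \cdot \ell$ is at most $n + \ell \le 2n$ (as $\ell \le n$), the right-hand side is at most $\frac{d}{n}\cdot 2n = 2d$. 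This bound of $2d$ is not yet a contradiction with the lower bound $d$, so the rough version above needs to be tightened: I would instead bound the heavy-block contribution using the true block length and be slightly more careful, or simply state the lemma for $\ell$ dividing $n$ (so $m\ell = n$ exactly) and note the general case follows by a negligible loss, adjusting the constant $\frac{d}{2n}$ accordingly. With $m\ell = n$, the displayed inequality gives $\sum_t \hamming(y_t, y'_t) < \frac{d}{n}\cdot n = d$, contradicting $\hamming(y,y') \ge d$.

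The main (and only) obstacle is the boundary bookkeeping when $\ell \nmid n$: the last, shorter block and the replacement of $\hamming$ by edit distance. For edit distance versus Hamming distance, note that for equal-length strings $\ed(y_t, y'_t) \le 2\hamming(y_t, y'_t)$ always, but we need a \emph{lower} bound on $\ed$; here one uses instead that $\ed(y_t, y'_t) \ge \hamming(y_t, y'_t)$ is \emph{false} in general — rather, for strings of the same length, $\ed \ge$ (something like) the Hamming-type discrepancy only after alignment. The clean fix, which I expect the authors intend, is that for the outer Reed-Solomon code over a large alphabet, two distinct codewords restricted to a block differ in many \emph{positions}, and one only needs the weaker conclusion that a constant fraction of blocks are ``different enough''; so I would phrase the proof purely in terms of $\hamming$ on the outer codewords and the block-count averaging above, absorb the $\ell \nmid n$ rounding into a constant-factor slack, and not worry about converting to edit distance at this level since the lemma as used downstream only needs the Hamming statement. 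This makes the whole argument a two-line pigeonhole once the statement is normalized to $\ell \mid n$.
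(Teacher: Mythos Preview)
Your proposal is correct and is essentially identical to the paper's proof: a one-line averaging/contradiction argument showing that if fewer than a $\tfrac{d}{2n}$ fraction of the blocks were heavy, the total Hamming distance would fall below $d$. Your concerns are overblown---the paper's ``distance'' here is explicitly Hamming distance (so the edit-distance tangent is irrelevant), and the paper simply treats $m\ell = n$ without comment, getting $\tfrac{d}{2n}\cdot n + (1-\tfrac{d}{2n})\cdot n \cdot \tfrac{d}{2n} < d$ directly.
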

\begin{proof}
Suppose not. 
There are less than $\frac{d}{2n}$ fraction of these substrings $y_t, t\in [\ell]$,   the hamming distance between $y_t$ and $y'_{t}$ is at least  $ \frac{d}{2n} \ell $.
Then the total distance is at most $\frac{d}{2n} \times n + (1-\frac{d}{2n}) \times n \times \frac{d}{2n} < d $, which contradicts the distance of the outer code.

\end{proof}

Next we show the rate and the distance.
\begin{lemma}
\label{lem: distance and high rate}
  If \cref{prop:propertymain-s} holds for $ \Cin^i,i\in [n] $, then relative distance  is $  \frac{\delta'\delta  }{4t}$ and the rate is  $(1-\delta) (\frac{1}{2}-\gamma')  $.
\end{lemma}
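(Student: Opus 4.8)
The plan is to establish the two claims separately. The rate computation is immediate: the outer Reed--Solomon code has rate $1-\delta$ by construction, and each inner code has rate $k'/n' = \frac{1}{2}(1-\gamma')$ by \cref{lem:propertymain-s}; since concatenation multiplies rates, the concatenated code $\C$ has rate $(1-\delta)(\frac12 - \gamma')$. The substance is the distance bound. I would upper bound $\lcs(C_1, C_2)$ for two distinct codewords $C_1, C_2 \in \C$ and then convert via $\ed = |C_1| + |C_2| - 2\lcs$. First I would invoke \cref{lem:dense} with the choice $\ell = tn'$: cutting both outer codewords into consecutive windows of $t$ symbols (equivalently, cutting $C_1, C_2$ into consecutive superblocks of $t$ inner codewords each), at least a $\frac{d}{2n} = \frac{\delta}{2}$ fraction of the windows have the property that the two corresponding length-$t$ outer substrings differ in at least $\frac{\delta}{2} t$ coordinates. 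Call such a superblock of $C_1$ (say) \emph{good}.

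Next I would argue that in a good superblock $w$ of $C_1$ (a block of $t$ consecutive inner codewords), the number of unique blocks is at least $s = \delta t/4$. The point is that $w$ and the correspondingly-aligned superblock of $C_2$ disagree in $\ge \delta t/2$ positions, and each disagreeing position either is a non-zero block of $C_1$ that differs from $C_2$'s block in the same inner code (hence a unique block of $w$, unless $C_2$'s block is zero there — but then it is still a non-zero block of $C_1$ with no matching block in $C_2$ in the same code at that position... I need to be a bit careful here, since "unique" is defined relative to \emph{all} blocks of $u$, not just the aligned one), or is a position where $C_1$'s block is zero. At most half the $\delta t/2$ disagreements can be of the "$C_1$ block is zero" type in a way that kills uniqueness, so $w$ retains $\ge \delta t/4 = s$ unique blocks; this is the step I expect to need the most care, matching up the combinatorial definition of unique block in \cref{prop:propertymain-s} with the Hamming-disagreement count from \cref{lem:dense}. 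Then I would bound $\lcs$ within each superblock: take the LCS of $C_1$ and $C_2$, and for each superblock $w$ of $C_1$ look at the substring $w'$ of $C_2$ it is matched to. That substring is covered by at most $t+1$ consecutive inner codewords of $C_2$ (a superblock $u$ of length $t+1$) unless it is long, in which case $C_2$ already wastes many symbols; so WLOG $|w'| \le (t+1)n'$ and $w'$ is a substring of such a $u$. For a \emph{good} $w$, \cref{prop:propertymain-s} gives $\ed(w, w') \ge d' = \delta' n'$, hence $\lcs(w, w') \le \frac{|w| + |w'| - d'}{2} \le tn' - \frac{\delta' n'}{2}$, i.e. the matched length inside a good superblock is short of the trivial bound $tn'$ by at least $\frac{\delta' n'}{2}$.

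Finally I would sum: there are $n/t$ superblocks, of which $\ge \frac{\delta}{2}\cdot\frac{n}{t}$ are good (after accounting for the constant loss in \cref{lem:dense} and the two-sided "$w$ or $u$ has many unique blocks" phrasing, which only helps). Each good superblock loses $\ge \frac{\delta' n'}{2}$ from the trivial matched length, so
\[
\lcs(C_1, C_2) \le nn' - \frac{\delta}{2}\cdot\frac{n}{t}\cdot \frac{\delta' n'}{2} = N\Bigl(1 - \frac{\delta\delta'}{4t}\Bigr),
\]
whence $\ed(C_1, C_2) \ge 2N - 2\lcs \ge \frac{\delta\delta'}{2t}N$, giving relative distance $\ge \frac{\delta\delta'}{4t}$ after absorbing the factor-of-two normalization used in the paper's convention (relative distance measured as $\ed/(2N)$, consistent with \cref{lem:distance}). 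The main obstacle, as noted, is the bookkeeping in the middle step — correctly propagating the edge cases where inner codewords are zero or where a superblock of $C_1$ straddles the boundary between two superblocks of $C_2$ in the alignment — but none of these should cost more than constant factors already absorbed into the $\frac{1}{4}$. I would also double-check the off-by-$t$ issues at the ends of the codeword (the last incomplete superblock), which contribute only a lower-order term.
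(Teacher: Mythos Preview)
Your plan follows the paper's approach closely: invoke \cref{lem:dense} to find many length-$t$ windows where the two outer codewords differ in at least $\delta t/2$ positions, argue that such windows carry $\ge s$ unique blocks, apply \cref{prop:propertymain-s} to each, and sum. The rate computation and the shape of the final summation are both fine.

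The gap is exactly where you flagged it. Your claim that ``at most half the $\delta t/2$ disagreements can be of the `$C_1$ block is zero' type'' is unjustified: nothing prevents \emph{all} disagreeing positions in a given window from having $C_1$'s inner codeword equal to $0^{n'}$ (and $C_2$'s nonzero), in which case $w$ has no unique blocks at all. The paper repairs this with a double pigeonhole. First, at each disagreeing position $\ell$ at least one of $(C_1)_\ell,(C_2)_\ell$ is nonzero; a nonzero $(C_1)_\ell$ at a disagreeing position is a unique block of $w$ relative to \emph{every} $(t{+}1)$-window $u$ of $C_2$, since the only block of $u$ lying in $\Cin^{\ell}$ is $(C_2)_\ell\neq (C_1)_\ell$ (or $\ell$ is outside $u$'s index range). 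Symmetrically for $(C_2)_\ell$. Hence in each dense window either $C_1$'s $t$-window or $C_2$'s aligned $t$-window has $\ge \delta t/4=s$ such robustly-unique blocks. Second, pigeonhole over the dense windows: one of the two codewords --- say $C_1$ --- has at least a $\delta/4$ fraction of its $t$-windows with $\ge s$ robustly-unique blocks, and it is \emph{that} codeword you partition for the LCS accounting. This also dissolves your worry that uniqueness is ``relative to all of $u$, not just the aligned block'': robust uniqueness is insensitive to which $u$ the LCS actually selects.

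With a $\delta/4$ (not $\delta/2$) fraction of good windows, each contributing $\ge d'$ to the edit distance, you obtain $\ed(C_1,C_2)\ge \frac{\delta}{4}\cdot\frac{n}{t}\cdot d'=\frac{\delta'\delta}{4t}\,N$ directly --- no ``normalization by $2$'' at the end; your extra halving was silently compensating for the overcount. In particular, the two-sided ``$w$ or $u$'' hypothesis of \cref{prop:propertymain-s} does not let you skip the codeword pigeonhole: the $u$ in the property is the LCS-matched window, whereas \cref{lem:dense} only tells you about the \emph{aligned} window of the other codeword, and these need not coincide.
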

\begin{proof}
    Consider two distinct messages $x, x'$.
    By \cref{lem:dense}, if we cut the codewords $C(x), C(x')$  into intervals each of length $tn'$, then for a $\ge \delta/2$ fraction of intervals, the distinct blocks (i.e. their corresponding outer codeword symbols are distinct) in the two corresponding intervals is at least $\delta/2$ fraction.
    We call these intervals of $C(x)$ as dense intervals. 
      Notice the number of distinct blocks of dense interval $i$ is at least $2s = \delta t/2 $.
    So either the $i$-th interval of $C(x)$ or the $i$-th interval of $C(x')$ has $\ge s = \delta t/4$ unique blocks.
    So either $C(x)$ or $C(x')$ has a $\ge \delta/4$ fraction of intervals such that each interval has $\ge s$ unique blocks.
    Without loss of generality assume it is $C(x)$.
    
    Let $M$ be a maximum matching between $C(x), C(x')$.
    For each dense interval $i$ of $C(x)$, consider matches of $M$ that involves symbols in this interval, denoted as $M_i$.
    Consider the   shortest string  of $ C(x')$ that cover $M_i$.
    By \cref{prop:propertymain-s},  the distance between them is $\ge d'$.
    So the overall relative distance between $C(x), C(x')$ is $\ge \frac{d'}{n'} \frac{\delta }{4t} =   \frac{\delta'\delta  }{4t}$.
    Also by our picking of parameters, the rate is $\ge  (1-\delta)  (\frac{1}{2} - \gamma')  $.

\end{proof}

The theorem follows immediately from our parameters.
\begin{proof}[Proof of \cref{thm:highrate}]
By \cref{lem:propertymain-s}, there is an explicit construction of $\Cin^i, i\in [n]$, with $\delta = \gamma / 2, \gamma' = \gamma/2$. 
So the encoding of $\C$ is in polynomial time.
By \cref{lem: distance and high rate}, the relative distance is $ \ge \frac{\delta' \delta }{4t} = \Omega( \gamma^3/\log \frac{1}{\gamma}  )$.
The rate is  $\ge (1-\delta)  (\frac{1}{2} - \gamma) \ge\frac{1}{2} - \gamma $.

\end{proof}

\subsection{Efficient decoding}

Next we show an efficient decoding for our code.
\begin{theorem}
\label{thm: efficient decoding of high rate}
The code given in \cref{thm:highrate} has a polynomial-time decoding with relative decoding radius $\Omega( \gamma^3/\log \frac{1}{\gamma}  ) $.
\end{theorem}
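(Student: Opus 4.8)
### Proof proposal for Theorem (efficient decoding of the rate-$1/2$ code)

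The plan is to mimic, at the level of $t$-block windows, the dynamic-programming decoder of Section~\ref{sec:high rate 1}, but now using \cref{prop:propertymain-s} in place of \cref{prop:propertymainbasic}. Let $z = z^1 \circ \cdots \circ z^n = \C(x)$ and let $\corrupted$ be the received string with $\ed(z,\corrupted) = T$. Instead of matching individual inner codewords, the decoder will try to match each block of $\corrupted$ to a non-zero codeword of the corresponding inner code: define $g(i,v)=1$ iff there is $u \in \Cin^i \setminus \{0^{n'}\}$ with $\ed(u,v) \le d'/(2t)$ (the threshold is scaled down by a factor $t$ because \cref{prop:propertymain-s} only gives a lower bound of $d'$ over a whole window of $t$ blocks, so per block we can only afford roughly $d'/t$ errors). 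Run the same $f[i,j]$/$M[i,j]$ dynamic program as before to compute a maximum monotone non-zero alignment between $\corrupted$ and the $n$ blocks, fill the matched blocks with their witnesses $u$, set all other blocks to $0^{n'}$, obtain $\tilde z$, decode the outer symbols from $\tilde z$, and finally run the Reed--Solomon decoder of \cref{thm:listdecoding} (here unique decoding suffices since the outer relative distance $\delta$ is a constant and we will only introduce $O(\gamma)$-fraction block errors — actually we can just use the standard RS decoder up to half the distance).

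The analysis proceeds in three steps, parallel to Lemmas~\ref{lem:correctmatches}--\ref{lem:transfertohamingdist}. First, partition the $T$ edit operations among the $n$ original blocks; at most $2tT/d'$ blocks receive more than $d'/(2t)$ operations, so with $\hat n$ the number of non-zero blocks in $z$, the maximum non-zero alignment has size at least $\hat n - 2tT/d'$ (same argument as \cref{claim:alignment_size2}, using \cref{claim:alignment_size1} which transfers verbatim). Second — and this is the crux — bound the number of \emph{wrong} matches $(i,[\alpha,\beta])$, i.e. matches for which the witness $u$ is not $z^i$. Here I would show: if $\corrupted_{[\alpha,\beta]}$ contains fewer than, say, $d'/(6t)$ errors, then $u=z^i$. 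The reason is that $\corrupted_{[\alpha,\beta]}$ is obtained from a substring $w_0$ of $z$ by that few insdels, and $\ed(u,\corrupted_{[\alpha,\beta]}) \le d'/(2t)$, so $\ed(u,w_0) \le d'/t$; since $|w_0|$ cannot be much bigger than $n'$, $w_0$ overlaps at most a constant number of blocks of $z$. Now consider a window $W$ of $t$ consecutive blocks of $z$ containing the block at position $i$ together with all blocks $w_0$ touches, padded so that $u$ embeds into a length-$(t+1)$-block window $u^{\star}$ of $z$, and such that $u$ (viewed as living in block $i$, with $0^{n'}$ in the other $t-1$ blocks of $W$) lies at edit distance $\le d'/t + O(n') \cdot(\text{number of zero blocks})$... this is where one must be careful: a single inner codeword $u$ with zeros elsewhere might well be close to a substring of a window of $z$ that has many zero blocks, so \cref{prop:propertymain-s} does not directly apply unless the window is "dense" in unique blocks.

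The fix, which I expect to be the main obstacle to get exactly right, is to only invoke \cref{prop:propertymain-s} on windows guaranteed to have $\ge s$ unique blocks and to argue that a wrong recovery in a \emph{non-dense} window is harmless because there aren't many such windows contributing to the final outer error. Concretely: by \cref{lem:dense} applied to $z$ against the (unknown) candidate outer word $\tilde z$ recovers, in a $\ge \delta/4$ fraction of length-$tn'$ windows either $z$'s window or $\tilde z$'s window has $\ge s$ unique blocks; in each such window, any wrong match forces $\ge d'$ edit errors against $\corrupted$ by \cref{prop:propertymain-s}, so there are at most $O(tT/d')$ windows where this happens, hence at most $O(t^2 T/d')$ wrongly recovered blocks inside dense windows; the remaining at most $\le (1-\delta/4)n/t$ non-dense windows contribute at most $(1-\delta/4)n$ blocks of slack, but — crucially — by the definition of "dense" and \cref{lem:dense}, the fraction of blocks lying in non-dense windows where $z$ and the true codeword disagree is already controlled, so they cannot push the total Hamming distance between the recovered outer word and $x$ above $\delta n/2$. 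Putting the counts together: $\hamming(\text{decoded outer word}, x) \le O(t^2 T/d') + (\text{zero-block mistakes}) \le O(t^2 T / d')$ (zero-block mistakes are bounded by the matching size deficit exactly as in \cref{lem:transfertohamingdist}). Therefore, taking $T \le c\, \frac{d'}{t^2} n$ for a small constant $c$, we get $\hamming \le \delta n/2 = d/2$ and the RS decoder recovers $x$. Since $d'/n' = \gamma'/10$ and $t = \Theta(\log\frac{1}{\gamma'}/(\gamma'\delta))$ with $\delta = \gamma' = \gamma/2$, the tolerated fraction is $T/N = \Omega\!\big(\frac{d'}{t^2 n'}\big) = \Omega\!\big(\gamma \cdot \frac{\gamma^4}{\log^2(1/\gamma)}\big)$; I would then double-check the window-counting argument carefully, since with a tighter accounting (charging $d'$ over a window of $t$ blocks but paying only once) one should be able to recover the claimed $\Omega(\gamma^3/\log\frac1\gamma)$ bound — matching the relative distance from \cref{lem: distance and high rate} — rather than the cruder $\gamma^5$ bound, and getting that extra factor of $t$ back is exactly the delicate part of the argument. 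Finally, all steps run in polynomial time: $g(i,\cdot)$ enumerates the $\poly(n)$ codewords of $\Cin^i$ and computes edit distance, the DP table is $n \times N$, and the RS decoder is polynomial by \cref{thm:listdecoding}.
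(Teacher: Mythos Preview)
Your decoder diverges from the paper's in the crucial algorithmic choice: you match \emph{individual} inner codewords (with threshold $d'/(2t)$), whereas the paper matches entire \emph{$t$-block segments}. Concretely, the paper sets $g(i,v)=1$ iff there exists $u \in \bigcirc_{j=i}^{i+t-1}\Cin^j$ with $\ed(u,v) < d'/2$, and the DP steps by $t$ in the block index: $f[i,j] = \max_{0\le j'<j}\{\, f[i-t,j'] + g(i-t+1,\corrupted_{[j'+1:j]})\,\}$. Segment-level matching is precisely what makes \cref{prop:propertymain-s} apply directly: for a segment match with witness $u$ and fewer than $d'/2$ errors in $v$, one gets $\ed(u,w') < d'$ for a substring $w'$ of $z$ spanning at most $t+1$ consecutive blocks; by \cref{prop:propertymain-s} both $u$ and this $z$-window must then have $< s$ unique blocks relative to one another, so $u$ agrees with $z$ on $> t-2s$ of its $t$ blocks. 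Summing over the $\ge n/t - O(\eta N/d')$ segment matches with few errors yields $> (t-2s)\bigl(n/t - O(\eta N/d')\bigr)$ correctly recovered outer symbols, and list-decoding the outer Reed--Solomon code gives the claimed radius $\Omega(\gamma^3/\log\frac{1}{\gamma})$.

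Your block-level approach hits exactly the obstacle you flag: a single wrong witness $u \in \Cin^i$, padded with zeros to a $t$-block string, has at most one unique block, so \cref{prop:propertymain-s} bites only when the \emph{$z$-window} has $\ge s$ unique blocks --- which you cannot guarantee. The proposed fix via \cref{lem:dense} is circular: that lemma compares two distinct \emph{outer codewords}, but $\tilde z$ is the algorithm's raw block-by-block output and need not be a codeword, and you are trying to bound $\hamming(\tilde z, z)$ using a density property of the pair $(z,\tilde z)$ itself. Your accounting of ``non-dense windows'' also leaves a $(1-\delta/4)n$-block slack, which is essentially everything. The missing idea is the segment-level DP: by making the candidate $u$ itself a $t$-block string, the comparison in \cref{prop:propertymain-s} is between two genuine multi-block windows, no padding or density argument on $\tilde z$ is needed, and the extra factor of $t$ you were worried about losing never appears.
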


Let $\corrupted$ be a corrupted codeword with   at most $ \eta N$ insdel errors.

We  first define the following functions $g$.
For every $v\in \alphabetin^*$ and $i \in [n]$, we let
\[g(i, v) = \begin{cases}
1, \text{ if } \exists u \in \bigcirc_{j=i}^{i+t-1}\Cin^j  \text{ s.t. } \ed(u, v) < d'/2 ; \\
0, \text{ otherwise.}
\end{cases}
\]
We call such a match $( i, v)$ with $g(i, v) = 1$, as a segment match.
We call the $u \in \bigcirc_{j=i}^{i+t-1}\Cin^j $ s.t. $\ed(u, v) < d'/2 $, a candidate (string).
Denote the sequence of non-overlapping segment matches as a segment matching.
Here we only consider monotone matching i.e. substrings involved in    matches of the matching, are non-overlapping.
Let $f[i, j]$ be the size of the longest segment matchings between the first $i$ blocks and the first $j$ symbols of $\corrupted$.

We use the following dynamic programming to compute $f[i, j]$.
Let $f[i,j] = 0$ if $i = 0 $ or $j = 0$. 
For $1\le i \le n$ and $1\le j \le N$, we can compute $f[i,j]$ using the following transition rule:
\[f[i,j] = 
\underset{0\le j'<j}{\max} \; \{ f[i-t, j'] + g(i-t+1, \corrupted_{[j'+1: j]})\}.
\]
One can compute the corresponding matching along with the computing of $f[i, j]$.
Denote this matching as $M$.

Next, the decoder recovers all the blocks according to $M$.
Then assign zeros to all the other blocks.

Recall that $\corrupted$ is a corrupted version of $z$.
Denote $z_0$ as the outer codeword of the original message $x$.
\begin{lemma}
\label{lem:ellLB}
$|M| \ge \frac{n}{t} - \frac{2\eta N}{d' } $.
 
\end{lemma}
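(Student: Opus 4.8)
The plan is to exhibit an explicit segment matching between the first $n$ blocks and $\corrupted$ of size at least $\frac{n}{t} - \frac{2\eta N}{d'}$; then the claimed bound on $|M|$ follows because the dynamic programming computes the \emph{longest} such matching (an analogue of \cref{claim:alignment_size1}, which one should state and prove for this new transition rule by the same induction on $i,j$). To build the witness matching, I would first cut the original codeword $z = z^1 \circ \cdots \circ z^n$ into $n/t$ consecutive \emph{segments}, each segment being the concatenation of $t$ adjacent inner codewords $z^{(r-1)t+1}\circ \cdots \circ z^{rt}$ for $r = 1, \dots, n/t$ (assume $t \mid n$ for cleanliness, or handle the last partial segment separately with a $+O(1)$ loss). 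Since $\ed(z,\corrupted) \le \eta N$, fix an optimal sequence of at most $\eta N$ insdel operations transforming $z$ into $\corrupted$; this induces a partition $\corrupted = \corrupted^{(1)} \circ \cdots \circ \corrupted^{(n/t)}$ where $\corrupted^{(r)}$ is the image of the $r$-th segment, and the number of edit operations is distributed among the segments so that $\sum_r \ed(z\text{-segment}_r, \corrupted^{(r)}) \le \eta N$.

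Next I would count how many segments are ``cheap.'' A segment $r$ is cheap if $\ed(z\text{-segment}_r, \corrupted^{(r)}) < d'/2$. By Markov's inequality (the total edit budget is $\le \eta N$ and each expensive segment costs at least $d'/2$), the number of expensive segments is at most $\frac{\eta N}{d'/2} = \frac{2\eta N}{d'}$, so at least $\frac{n}{t} - \frac{2\eta N}{d'}$ segments are cheap. For each cheap segment $r$, the $z$-segment itself is a valid candidate string in $\bigcirc_{j=(r-1)t+1}^{rt}\Cin^j$ with $\ed\big(z\text{-segment}_r, \corrupted^{(r)}\big) < d'/2$, so $g\big((r-1)t+1, \corrupted^{(r)}\big) = 1$. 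The substrings $\corrupted^{(r)}$ are non-overlapping and ordered, and the block indices $(r-1)t+1$ are spaced exactly $t$ apart and increasing, which is precisely the monotone, non-overlapping structure the transition rule $f[i,j] = \max_{0\le j'<j}\{f[i-t,j'] + g(i-t+1,\corrupted_{[j'+1:j]})\}$ tracks. Hence these cheap segments assemble into a legal segment matching of size at least $\frac{n}{t} - \frac{2\eta N}{d'}$, and by optimality $|M| \ge \frac{n}{t} - \frac{2\eta N}{d'}$.

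I expect the only real subtlety to be the bookkeeping of the edit operations across segment boundaries: an insdel sequence need not cleanly partition, since a single operation acts at one position but the ``image'' of a segment boundary can be ambiguous. The standard fix (as used in \cref{claim:alignment_size2} and \cref{claim:alignment_size1}) is to fix one particular optimal alignment/trace and cut $\corrupted$ at the images of the segment endpoints under that trace; deletions at a boundary and insertions at a boundary are each assigned to one fixed side, so the per-segment edit counts sum to exactly the total, and each segment's induced edit distance is an upper bound on the cost charged to it. The divisibility issue ($t \nmid n$) and the outer-code alphabet being $\bigcirc$ of $t$ inner codes at the boundary segments contribute only lower-order additive terms that are absorbed into the statement's slack (or one simply notes $n/t$ should be read as $\lfloor n/t \rfloor$ throughout). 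Everything else is a direct transcription of the argument from Section~\ref{sec:high rate 1}.
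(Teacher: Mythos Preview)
Your proposal is correct and follows essentially the same approach as the paper: cut $z$ into $n/t$ consecutive length-$t$ segments, use a fixed optimal edit sequence to induce a corresponding partition of $\corrupted$, bound by a Markov-type count the number of segments receiving $\ge d'/2$ edits, and observe that each remaining ``cheap'' segment yields a valid segment match with $g=1$. The paper separates out the fact that the DP computes the optimum as its own lemma (\cref{lem:DPgood}), which is exactly the analogue of \cref{claim:alignment_size1} you flagged as needed; your treatment of boundary bookkeeping and divisibility is more careful than the paper's terse sketch but not substantively different.
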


\begin{proof}

We cut $u$ into non-overlapping intervals of length $t$. 
There can be at most $\frac{\eta N'}{d'/2}$ intervals each having $\ge d'/2$ insdels.
Consider the canonical matching which matches every original symbol to its original position.
For those intervals which has less than $d'/2$ errors, it's distance to its original blocks is less than $d'/2$.
There are at least $ \frac{n}{t} - \frac{\eta N}{d'/2}$ such intervals.
So the claim holds.

\end{proof}

\begin{lemma}
\label{lem:DPgood}
    The DP can compute the maximum segment matching.
\end{lemma}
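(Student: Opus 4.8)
The plan is to establish correctness of the dynamic program by a standard induction argument combined with an exchange (cut-and-paste) lemma, exactly paralleling the proof of \cref{claim:alignment_size1} in the previous section. Concretely, I will prove two directions: (i) every value $f[i,j]$ computed by the recurrence is achievable by some monotone segment matching between the first $i$ blocks and $\corrupted_{[1:j]}$; and (ii) every monotone segment matching between the first $i$ blocks and $\corrupted_{[1:j]}$ has size at most $f[i,j]$. Together these give that $f[n,N]$ equals the size of a maximum monotone segment matching, which is the claim.

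\begin{proof}[Proof sketch of \cref{lem:DPgood}]
We prove by induction on $i+j$ that $f[i,j]$ equals the maximum size of a monotone segment matching between the first $i$ blocks and $\corrupted_{[1:j]}$. The base cases $i=0$ or $j=0$ hold since no segment match can be formed and $f[i,j]=0$.

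For the inductive step, first note the recurrence only combines a sub-solution $f[i-t,j']$ with one additional segment match $(i-t+1,\corrupted_{[j'+1:j]})$ occupying blocks $i-t+1,\dots,i$ and the suffix $\corrupted_{[j'+1:j]}$; since $j'<j$ and $i-t<i$, by induction $f[i-t,j']$ is realized by an actual monotone segment matching of the first $i-t$ blocks and $\corrupted_{[1:j']}$, and appending the new segment match keeps monotonicity (the new blocks and the new substring both lie strictly to the right). Hence $f[i,j]$ is achievable, giving the lower bound direction.

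For the upper bound, let $M'$ be any monotone segment matching between the first $i$ blocks and $\corrupted_{[1:j]}$. If no segment match in $M'$ uses the last block $i$, then $M'$ is also a monotone segment matching between the first $i-1$ blocks and $\corrupted_{[1:j]}$; but the recurrence over $j'$ with the segment of length $t$ ending at block $i$ and empty newly-matched substring, or more simply comparing against $f[i-1,j]$ via the same recurrence applied one block earlier, shows $f[i,j]\ge f[i-1,j]\ge |M'|$ — I will phrase this cleanly by observing that the set of segment matchings on the first $i$ blocks includes those on the first $i-1$ blocks, and the recurrence is monotone in $i$. Otherwise, the unique segment match in $M'$ touching block $i$ must occupy exactly blocks $i-t+1,\dots,i$ (a candidate string lives in $\bigcirc_{j=i-t+1}^{i}\Cin^{j}$) and some substring $\corrupted_{[j'+1:j'']}$ with $j''\le j$; by monotonicity this is the last segment match, so deleting it leaves a monotone segment matching of the first $i-t$ blocks and $\corrupted_{[1:j']}$, of size $|M'|-1 \le f[i-t,j']$ by induction. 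Since $g(i-t+1,\corrupted_{[j'+1:j'']})=1$ and $g(i-t+1,\corrupted_{[j'+1:j]})$ is at least as large whenever the longer substring still has a candidate within distance $d'/2$ — and if it does not, we simply take $j''=j$ in the alignment, which we may do WLOG since extending the matched substring to the end only helps — we get $|M'| \le f[i-t,j'] + g(i-t+1,\corrupted_{[j'+1:j]}) \le f[i,j]$. This completes the induction.
\end{proof}

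The main obstacle I anticipate is the bookkeeping in the upper-bound direction: making precise that the last segment match in a monotone matching can be assumed, without loss of generality, to have its right endpoint exactly at position $j$, and that it occupies precisely the block-window $[i-t+1,i]$ rather than some shorter or shifted window. The block-window point is forced by the definition of $g$ (candidates always span exactly $t$ consecutive inner codes), so it is really just a matter of stating it carefully; the substring-endpoint point is handled by noting that any substring of $\corrupted$ between the previous segment match and position $j$ can be absorbed into the current segment match or discarded, neither of which decreases the matching size. Everything else is the routine induction template already used for \cref{claim:alignment_size1}, so I would keep this proof short and reference that earlier argument for the structurally identical parts.
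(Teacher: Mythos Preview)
Your proposal is correct and takes the same standard inductive approach as the paper. The paper's own proof is terser: it argues only the upper-bound direction, assuming without comment that the last segment match of an optimal $M^*$ ends exactly at block $i'$ and at position $j'$ in $\corrupted$, which is precisely the bookkeeping issue you flag as the main obstacle in your closing paragraph.

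One point to tighten in your sketch: in the case where no segment match of $M'$ uses block $i$, you appeal to $f[i,j]\ge f[i-1,j]$, but the recurrence here steps from $i$ to $i-t$, not to $i-1$, so monotonicity in $i$ is not immediate from the transition rule as written. The cleanest fix (and what the paper implicitly relies on) is to regard the DP as indexed by $i\in\{0,t,2t,\dots\}$, i.e., segment matches occupy fixed windows $[(\ell-1)t+1,\ell t]$; then any segment match within the first $i$ blocks necessarily ends at block $i$ or at some earlier multiple of $t$, and the single recurrence case covers both ``last window matched'' (when $g=1$) and ``last window skipped'' (when $g=0$, via the term $f[i-t,j']$ with $j'=j-1$, say). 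With that reading your argument and the paper's coincide.
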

\begin{proof}
We use induction.

For the base case that either $i = 0 $ or $j = 0$, then the maximum matching should have size $0$.
So each $f[i, j]$ is the maximum for these cases.

For the induction, assume $f[i,j]$ is the maximum for every $i< i', j<j'$.
Then for $f[i', j']$, assume the corresponding maximum matching $M^*$ has its last segment match in between $\corrupted[j'', j']$ and blocks $[i'', i']$.
By the algorithm, $f[i' , j'] \ge 1 + f[i''-1, j''-1]$.
By the hypothesis, $f[i''-1, j''-1] \ge |M^*| - 1$.
So $f[i', j'] \ge |M^*|$.
This shows the induction.
\end{proof}

\begin{lemma}
\label{lem:efficient decoding correct}
The decoding can recover from $\eta$ fraction of errors for $\eta = \Omega(\gamma^3/\log \frac{1}{\gamma})$.
\end{lemma}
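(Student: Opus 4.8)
The plan is to mirror the analysis of the rate-$1/3$ construction (\cref{lem:correctmatches}, \cref{lem:transfertohamingdist}, and the proof of \cref{thm:highrate1}), but now working at the granularity of length-$t$ segments rather than single blocks, and using \cref{prop:propertymain-s} in place of \cref{prop:propertymainbasic}. Fix a set of at most $\eta N$ insdel operations transforming $z$ into $\corrupted$, and let $M$ be the maximum segment matching returned by the dynamic programming (correctness of the DP is \cref{lem:DPgood}, and a lower bound $|M|\ge n/t - 2\eta N/d'$ on its size is \cref{lem:ellLB}). First I would bound the number of \emph{wrong} segment matches in $M$: a segment match $(i-t+1,\corrupted_{[\alpha:\beta]})$ with a candidate $u$ must be compared against the original substring $w$ of $z$ that $\corrupted_{[\alpha:\beta]}$ is derived from. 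If fewer than, say, $d'/6$ of the fixed edit operations touch $\corrupted_{[\alpha:\beta]}$, then $\ed(u,w)<2d'/3<d'$ and $w$ overlaps at most $t+1$ consecutive blocks of $z$ (else $|w|\ge (t+1)n'$, too long); trimming $w$ to a substring of $t+1$ consecutive blocks still keeps edit distance $<d'$, so by the contrapositive of \cref{prop:propertymain-s} the relevant $t$-block window of $z$ must have fewer than $s$ unique blocks, OR $u$ equals the corresponding window of $z$. I would argue that when $u$ is not forced to equal the original window, the recovered blocks in that window that are \emph{unique} in $z$'s window are correctly recovered; the bookkeeping is that the only blocks $M$ can get wrong are (a) those inside a segment hit by $\ge d'/6$ errors — at most $6\eta N/d'$ of these, since the matched substrings are disjoint — plus (b) blocks lying in windows with fewer than $s$ unique blocks, which by \cref{lem:dense} applied with $\ell = tn'$ are a controlled fraction.

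Concretely, I would use \cref{lem:dense} (with $\ell=tn'$) to say that for any two distinct outer codewords, at least a $\delta/2$ fraction of the length-$t$ windows are ``dense'' in the sense that $\ge \delta t/2 = 2s$ of their blocks are distinct between the two codewords, hence $\ge s$ unique blocks in one of the two. In the decoding setting the ``other'' codeword is effectively $\tilde z$, so I would instead argue directly: the segments where the DP produces a candidate, combined with \cref{prop:propertymain-s}, force agreement with $z$ on all the unique blocks of those windows, and the number of windows that could fail the hypothesis of \cref{prop:propertymain-s} is itself bounded by $(\delta/2)(n/t)$ plus the error-driven term. Summing, the number of incorrectly recovered blocks is $O(t\eta N/d' + \delta n)$. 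Dividing by $n$ and plugging in $d'/n' = \gamma'/10$, $n' = \log n$, $t = \Theta(\log(1/\gamma')/(\gamma'\delta))$, $s = \delta t/4$, $\delta=\gamma'=\gamma/2$, gives $\hamming(\tilde z, z_0\text{'s encoding across inner blocks})/n = O(\eta t n'/d' + \delta) = O(\eta \log(1/\gamma)/\gamma^2 + \gamma)$. Choosing $\eta = \Theta(\gamma^3/\log\frac1\gamma)$ makes the error-driven term $O(\gamma)$, so the total fraction of wrongly recovered inner blocks is $O(\gamma) < \delta/2 = \gamma/4$ (after tuning constants), which is within the unique-decoding radius of the Reed--Solomon outer code whose relative distance is $\delta = \gamma/2$. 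Then the outer decoder recovers $x$ exactly.

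The final step I would spell out is the ``fill the rest with zeros'' accounting, exactly as in \cref{lem:transfertohamingdist}: the non-zero blocks that $M$ recovers correctly are correct by the above; zero-blocks of $z$ are recovered correctly unless a (wrong) segment match straddles their index, and since $M$ is monotone each wrong segment match corrupts at most $t$ zero-block indices, contributing another $O(t\cdot \eta N/d')$ term of the same order already absorbed above. Hence $\hamming(\tilde z, z) = O((\eta t n'/d' + \delta)\, n) \le \gamma n /4$, and unique decoding of the outer code finishes the proof.

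I expect the main obstacle to be making the ``unique block'' argument in the contrapositive of \cref{prop:propertymain-s} fully rigorous in the decoding (rather than distance) setting: one has a single candidate string $u$ of $t$ consecutive blocks and an original window $w$ of up to $t+1$ consecutive blocks of $z$, and one must carefully verify that ``$u$ is not forced to equal $w$ on a given block'' can only happen when that block is non-unique or sits in a low-uniqueness window, while simultaneously controlling boundary effects where a matched substring of $\corrupted$ spills across the $t$-block boundary (the reason \cref{prop:propertymain-s} is stated with $t$ vs.\ $t+1$ blocks, and why the trimming loses only $O(n'+d')$ matches). Getting the constants in $s=\delta t/4$, $t=\Theta(\log(1/\gamma')/(\gamma'\delta))$ to line up so that both the error term and the structural term are genuinely $O(\gamma)$ — and strictly below $\delta/2$ — is the delicate part; everything else is bookkeeping parallel to Section~\ref{sec:high rate 1}.
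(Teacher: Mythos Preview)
Your overall plan---segment-level matching, bound wrong segment matches via the contrapositive of \cref{prop:propertymain-s}, then decode the outer code---is the same skeleton the paper uses. However, two of your steps do not go through as written.

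First, the appeal to \cref{lem:dense} is misplaced. That lemma gives a \emph{lower} bound on the fraction of dense windows between two outer codewords; it does not give an upper bound on ``low-uniqueness'' windows, and in any case the decoding analysis never compares two codewords (you only have $z$ and the decoder's candidates). You seem to sense this, since you back off to ``argue directly,'' but the direct argument is in fact the whole proof and the dense-lemma detour contributes nothing. The paper's version is simply: for every segment match $(i,v)$ with fewer than $d'/2$ errors, if $u$ is the candidate and $\tilde v$ is the block-window of $z$ containing the uncorrupted version of $v$, then $\ed(u,\tilde v') < d'$ for some substring $\tilde v'$ of $\tilde v$, so by \cref{prop:propertymain-s} both $u$ and $\tilde v$ have $<s$ unique blocks \emph{relative to each other}; hence $u$ and $z$ agree on $>t-2s$ of the indices in that window. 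No density argument is needed.

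Second, and more seriously, unique decoding of the Reed--Solomon outer code does not suffice with these parameters. The per-segment guarantee above recovers only $t-2s = t(1-\delta/2)$ blocks per good segment, so even with $\eta=0$ the fraction of wrongly-recovered blocks is already $\delta/2$---exactly the unique-decoding radius of an RS code with relative distance $\delta$. Any positive error-driven term $O(\eta t n'/d')$ pushes you over. The paper closes this gap by invoking Guruswami--Sudan list decoding (\cref{thm:listdecoding}), whose radius $1-\sqrt{1-\delta}$ strictly exceeds $\delta/2$; the condition $(t-2s)(n/t - 4\eta N/d') \ge n\sqrt{1-\delta}$ then yields $\eta \le (1-\frac{\sqrt{1-\delta}}{1-\delta/2})\frac{\delta'}{4t} = \Omega(\gamma^3/\log\frac{1}{\gamma})$. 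Your claim that ``after tuning constants'' the total error can be made $<\delta/2$ cannot be realized, because the $2s/t = \delta/2$ structural term is not a constant you are free to tune---it is fixed by the choice $s=\delta t/4$ that \cref{lem:propertymain-s} requires.
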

\begin{proof}
By \cref{lem:ellLB} and \cref{lem:DPgood}, we can get a matching $M$ of size  $\ge \frac{n}{t} - \frac{2\eta N}{d' }$.
There can be at most $\frac{2\eta N'}{d' }$ intervals each having $\ge d'/2$ errors.
So there are $\ell \ge \frac{n}{t} - 4\frac{\eta N}{d' }$ segment matches that each of them   involves $< d'/2$ errors.
For every such segment match $([i', i], v)$, assume the candidate is $u$. 
Assume $v$ is within insdel distance $d'/2$ to a substring $v'$ of $z$.
Assume the consecutive blocks of $z$ that contains $v'$ is $\tilde{v}$. 
We claim that there can be   $<s$ unique blocks in $u$ and $<s$ unique blocks in   $\tilde{v}$, comparing to each other.
This is because if not, then the insdel distance between $u$ and $v'$ is at least $d'$ by \cref{prop:propertymain-s}.
There are $<d'/2$ errors in $v$, so $([i', i], v)$ cannot be a segment match.
Hence the decoding can recover $>(t-2s)$ blocks in $\tilde{v}$.
So the total number of blocks that be correctly recovered is $> (t-2s) \ell \ge (t-2s)\left( \frac{n}{t} - 4\frac{\eta N}{d' }  \right)$.
By \cref{thm:listdecoding}, the relative list-decoding radius of Reed-Solomon code is $  1-\sqrt{1-\delta}  $.
So as long as  $(t-2s)\left( \frac{n}{t} - 4\frac{\eta N}{d' }  \right) \ge n \sqrt{1-\delta}   $, we can use the list decoding of the outer code to recover the message.
This deduces that $\eta \le ( 1- \frac{\sqrt{1-\delta}}{1-\delta/2}) \frac{\delta' }{4t} \le \frac{\delta \delta'}{8t} $.
So we can do unique decoding for $\eta = \Omega(\gamma^3/\log \frac{1}{\gamma}) $.

\end{proof}

\begin{proof}[Proof of \cref{thm: efficient decoding of high rate}]
The running time is polynomial, which immediately follows from our algorithm.
The decoding radius follows from \cref{lem:efficient decoding correct} .
\end{proof}

\section{The Existence of Linear Concatenated Codes matching Random Linear Codes}\label{section:existence}
In this section we show that there exist linear concatenated insdel codes that match the parameters of random linear codes.

We will use the following parameters in our general construction: the outer code is taken to be an $[n, k=(1-\gamma)n/2, d=(1+\gamma) n/2]_q$ Reed-Solomon code with $q=\Theta(n)$, and every inner code is an independent random linear code over the field $\F_{q_0}$ with block length $n' = \log_{q_0} q $ and message length $k'=\log_{q_0} q$. That is, every inner code has rate $1$.

We'll need the following claim.

\begin{claim}\label{claim:RS} \cite{1056765}
  For any $[n, k, d]_q$ Reed-Solomon code, the number of codewords with weight $w$ is at most ${n \choose w}q^{w-d+1}$.
\end{claim}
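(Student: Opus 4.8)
The plan is to derive this from the fact that Reed--Solomon codes are MDS, via a short counting argument based on the polynomial evaluation description. Recall that an $[n,k,d]_q$ Reed--Solomon code is MDS, so $d=n-k+1$, and that its codewords are exactly the vectors $(f(\alpha_1),\dots,f(\alpha_n))$, where $\alpha_1,\dots,\alpha_n$ are the fixed distinct evaluation points and $f$ ranges over polynomials in $\F_q[x]$ of degree less than $k$.

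First I would note that a codeword of weight exactly $w$ vanishes on a uniquely determined set $S\subseteq[n]$ with $|S|=n-w$, namely its set of zero coordinates. Consequently the number of weight-$w$ codewords is at most $\sum_{S:\,|S|=n-w}N_S$, where $N_S$ denotes the number of codewords that vanish on all of $S$; there are $\binom{n}{n-w}=\binom{n}{w}$ summands. So it suffices to prove the per-set bound $N_S\le q^{\,w-d+1}$ for every fixed $S$ of size $n-w$.

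For the per-set bound, fix $S$ and write $P_S(x)=\prod_{i\in S}(x-\alpha_i)$, a polynomial of degree $n-w$. A codeword coming from $f$ vanishes on $S$ iff $P_S\mid f$, i.e.\ $f=P_S\cdot g$ for some $g\in\F_q[x]$, and the constraint $\deg f<k$ then reads $\deg g<k-(n-w)=w-(n-k)=w-d+1$. When $w\ge d-1$ the number of such $g$ (including $g=0$) is exactly $q^{\,w-d+1}$, and distinct $g$ yield distinct codewords since $P_S\ne0$ and a polynomial of degree $<k\le n$ is determined by its values at the $n$ points; hence $N_S\le q^{\,w-d+1}$. Combining with the union bound over $S$ gives the claim for all $w\ge d$. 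For $1\le w<d$ there are no nonzero codewords of weight $w$ at all (minimum distance $d$), so the inequality is trivial there; the degenerate value $w=0$ is not relevant to the intended use of the claim.

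I do not expect any genuine obstacle here: this is the classical MDS weight-distribution estimate, and one could equally well read it off from the exact MDS weight enumerator $A_w=\binom{n}{w}\sum_{j=0}^{w-d}(-1)^j\binom{w}{j}\bigl(q^{\,w-d+1-j}-1\bigr)$ by retaining only the leading term. The only steps deserving a moment's care are (i) that the sum $\sum_S N_S$ counts every weight-$w$ codeword at least once, so that overcounting is in the harmless direction, and (ii) the separate, trivial treatment of the range $w<d$.
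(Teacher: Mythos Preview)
The paper does not supply its own proof of this claim; it is stated with a citation to Thommesen~\cite{1056765} and used as a black box in Section~\ref{section:existence}. Your argument is the standard one and is correct: identify each weight-$w$ codeword with its zero set $S$ of size $n-w$, bound the number of codewords vanishing on a fixed $S$ by counting polynomials $f=P_S\cdot g$ with $\deg g<k-(n-w)=w-d+1$, and union-bound over the $\binom{n}{w}$ choices of $S$; the range $1\le w<d$ is handled trivially by the minimum-distance property, and $w=0$ is indeed irrelevant since the claim is only invoked for $w\ge d$.
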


We now have the following lemma.

\begin{lemma}

In the general construction, if the outer code is an $[n, k=(1-\gamma)n/2, d=(1+\gamma) n/2]_q$ Reed-Solomon code with $q=\Theta(n)$, $\gamma=\delta +2\ent(\delta)/\log q_0+o(1)$, and every inner code is an independent random binary linear code with block length $n' = \log_{q_0} q $ and message length $k'=\log_{q_0} q$, then with probability $\ge 3/4$, the final code can correct $\delta$ fraction of insdel errors.
\end{lemma}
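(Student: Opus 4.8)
# Proof Proposal

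The plan is to argue via a first-moment (union bound) calculation over all pairs of distinct codewords of the concatenated code $\C$, showing that with probability at least $3/4$ over the choice of the random inner codes, no pair $(C_1, C_2)$ has $\lcs(C_1, C_2) > (1-\delta)N$, where $N = nn'$. Since the code is linear, it suffices to bound, for every nonzero outer codeword $a \in \Cout$, the probability that the concatenated codeword $C$ obtained by applying the (random) inner encodings to $a$ has $\lcs(C, 0) = \mathrm{wt}(C) > (1-\delta)N$ — wait, more carefully: linearity reduces correcting $\delta$ fraction of insdel errors to the condition that every \emph{pair} of distinct codewords has edit distance $> 2\delta N$, i.e. $\lcs < (1-\delta)N$; but unlike Hamming distance, $\lcs$ is not translation-invariant, so I cannot reduce to the weight of a single codeword directly. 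Instead I would fix a pair of distinct outer codewords $a \neq b$, let $a - b$ have Hamming weight $w \ge d$, and observe that the concatenated codewords $C_1, C_2$ agree (are $0^{n'}$-vs-$0^{n'}$, hence trivially equal) on the $n - w$ coordinates where $a$ and $b$ agree, and on the remaining $w$ coordinates the inner codewords are independent and (being rate-$1$ random linear codes) uniformly random and independent across those coordinates.

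The key steps, in order: First, fix the weight $w$ and enumerate: by Claim~\ref{claim:RS} the number of outer-codeword differences $a-b$ of weight $w$ is at most $\binom{n}{w} q^{w - d + 1}$, and since the outer code is linear we can bound the number of relevant pairs $\{C_1, C_2\}$ the same way (each difference pattern times a translate, but the LCS bound below will hold for the pair given the difference, so we union over difference patterns and over the $\le q^k$ choices of, say, $a$). Second, for a fixed such pair, bound the probability that $\lcs(C_1, C_2) \ge (1-\delta)N$: any common subsequence of that length is witnessed by a monotone alignment $M$ of size $\ell = (1-\delta)N$ between two strings of length $N$; the number of such alignments is at most $\binom{N}{\ell}^2 \le 2^{2\ent(\delta)N/\log_{q_0}\!2 \cdot \log_{q_0}\! 2}$ — more cleanly, at most $2^{O(\ent(\delta) N)}$ after converting to bits, i.e. $q_0^{O(\ent(\delta)N/\log q_0)}$. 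Third, and this is the crux, for a fixed monotone alignment $M$, bound the probability over the random inner codes that every matched pair of symbols under $M$ is actually equal. On the $n-w$ agreeing outer coordinates the matched symbols within a single $0^{n'}$-block are automatically equal and contribute nothing; on the $w$ disagreeing coordinates, the inner codewords $C_1$ restricted to block $i$ and $C_2$ restricted to block $i$ are two distinct uniform strings from a rate-$1$ random linear code, and — crucially — blocks across the $w$ coordinates are mutually independent and independent of everything else. A matched pair $(p, r)$ under $M$ with $p$ in a disagreeing block and $r$ in a disagreeing block: if they lie in the same outer coordinate $i$, then $C_1[p], C_2[r]$ are (after the linear-algebra bookkeeping, as in Claim~\ref{clm:samecode}) equal with probability $q_0^{-1}$ conditioned on earlier matches; if they lie in different outer coordinates, the two symbols are fully independent and equal with probability $q_0^{-1}$; matched pairs where at least one endpoint is in an agreeing ($0^{n'}$) block force the symbol on the other side to be $0$, again probability $q_0^{-1}$ when that other side is a free uniform symbol, or probability $1$ when both sides are in agreeing blocks. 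The count of matched pairs with at least one endpoint in a disagreeing block is at least $wn' - (N - \ell) = wn' - \delta N$ minus the pairs that are $0$-vs-$0$ — I would bound the number of "productive" matched positions (those each contributing a factor $q_0^{-1}$) from below by $wn' - \delta N - (\text{slack for the two boundary blocks})$, giving probability at most $q_0^{-(wn' - \delta N - O(n'))}$ that $M$ is a valid matching.

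Putting these together, the failure probability for a fixed weight $w$ is at most
\[
q^k \cdot \binom{n}{w} q^{w-d+1} \cdot q_0^{O(\ent(\delta) N/\log q_0)} \cdot q_0^{-(wn' - \delta N - O(n'))},
\]
and since $q = q_0^{n'}$ and $N = nn'$, the exponents in base $q_0$ are roughly $kn' + O(w\log n/\log q_0) + (w - d)n' + O(\ent(\delta)nn'/\log q_0) - wn' + \delta nn'$. Using $k = (1-\gamma)n/2$, $d = (1+\gamma)n/2$, the $w$-dependent terms cancel to leave a negative coefficient ($(w-d-w)n' = -dn'$ against the positive pieces), and the choice $\gamma = \delta + 2\ent(\delta)/\log q_0 + o(1)$ is exactly calibrated so that the total exponent is negative and bounded away from $0$, making the bound $q_0^{-\Omega(N)}$; summing the geometric-type series over $w$ from $d$ to $n$ and then over all weights keeps it $\le 1/4$. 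The main obstacle I anticipate is the third step: carefully accounting for the dependencies among matched symbols within the same outer coordinate (when both alignment endpoints land in the same block $i$ and the inner code has rate $1$ but the two codewords are distinct) and correctly lower-bounding the number of "productive" matched pairs after discarding $0$-vs-$0$ pairs and boundary effects — this is where an argument in the style of Claim~\ref{clm:samecode} from \cite{CGHL21} must be invoked per block, and one has to ensure the per-block probabilities genuinely multiply across the $w$ independent disagreeing coordinates while handling the at most two blocks straddling the ends of the covered region separately.
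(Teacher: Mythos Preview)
Your overall architecture is right and matches the paper: union bound over pairs, parametrize by the Hamming weight $w$ of the outer-code difference, count alignments of size $(1-\delta)N$, and use \cref{claim:RS} for the Thommesen-style enumeration. The place where your argument breaks is the probability estimate for a fixed alignment.

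You assert that on the $w$ disagreeing outer coordinates the two inner codewords are ``uniformly random and independent,'' and hence that the number of productive matches (each worth a factor $q_0^{-1}$) is at least $wn' - \delta N - O(n')$, with the $O(n')$ being only ``boundary'' slack. This is false. A disagreeing coordinate $i$ (where $a_i\neq b_i$) may perfectly well have $a_i=0$; then the $i$-th inner block of $C_1$ is \emph{deterministically} $0^{n'}$, not random. Any match from such a block to a block $j$ of $C_2$ with $b_j=0$ holds with probability $1$, not $q_0^{-1}$. The number of such ``unique-$0$'' coordinates (positions where exactly one of $a_i,b_i$ is $0$) can be as large as $\Theta(k)=\Theta(n)$, so the loss is $\Theta(N)$ productive matches, not $O(n')$. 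With the honest bound $q_0^{-(wn'-\delta N - \Theta(kn'))}$ and your factor $q^k$ for choosing $C_1$, the exponent no longer closes for $\gamma=\delta+2\ent(\delta)/\log q_0+o(1)$.

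The paper's fix is the one idea your sketch is missing. It introduces a parameter $t$ counting the unique-$0$ positions and observes a cancellation: (a) the probability bound weakens by $q_0^{tn'/2}$ because of these free $0$-vs-$0$ matches, but (b) once the difference $C_1-C_2$ is fixed, specifying $t$ unique-$0$ positions pins down $t$ coordinates of $C_1$ (if $a_i=0$ that coordinate is $0$; if $b_i=0$ then $a_i=(a-b)_i$ is determined), so the number of admissible $C_1$ drops from $q^k$ to at most $q^{k-\min(k,t)}\le q^{k-t/2}$. These two factors of $q_0^{\pm tn'/2}$ cancel exactly, and only then does the exponent reduce to $(-\gamma+\delta+2\ent(\delta)/\log q_0 + o(1))nn'$ as you wanted. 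Your calculation in the last paragraph implicitly assumes this cancellation has already happened; the missing step is to introduce $t$ and argue both sides of it. (A minor separate point: ``agreeing block'' does not mean ``$0^{n'}$-block''; when $a_i=b_i\neq 0$ the shared inner codeword is random, not zero. This confusion is what leads you to overlook the unique-$0$ issue.)
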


\begin{proof}
    Let $N=nn'$. The concatenated code can correct $\delta$ fraction of insdel errors unless there exist two different codewords whose longest common subsequence has length $\geq \ell=(1-\delta)N$. We bound the probability that this event happens, as follows.

    First, The total number of length $\ell$ monotone matchings is ${N \choose \ell}^2$. Now we bound the total number of pairs of codewords, and the probability that the event happens, using a set of parameters we define. Trivially, the number of pairs of codewords is bounded by $q^{2k}$. However, we can bound this number in a different way: for any pair of outer codewords $(C_1, C_2)$, we first consider $C_1-C_2$, which by linearity is also an outer codeword. Let $w \geq d$ denote the weight of $C_1-C_2$. By Claim~\ref{claim:RS} the number of such codewords is at most ${n \choose w}q^{w-d+1}$. Note that when $C_1-C_2$ is fixed, the pair $(C_1, C_2)$ is fixed by fixing $C_1$.

Now for any monotone matching between the symbols of the final concatenated codewords corresponding to $C_1, C_2$, we define two parameters $s, t$. Below when we say a ``block", we mean a block of symbol positions which correspond to an inner codeword.
\begin{itemize}
    \item $s$ is the number of blocks that contain at least one match whose two endpoints are both in this block, and that the two symbols in $C_1, C_2$ corresponding to this block are the same.
    \item We say a $0$ symbol in $C_1$ or $C_2$ is unique if the symbol at the same position of the other codeword is not $0$. Let $t$ be the total number of unique $0$ symbols in $C_1, C_2$.
\end{itemize}
Notice that $s \le n-w$, since there are at most $n-w$ symbols that are the same and at the same positions in $C_1, C_2$. Also notice that $t \le 2k$ since each of $C_1, C_2$ can have at most $n-d=k$ of such $0$ symbols.

Using these parameters, we can bound the probability that such a monotone matching happens. First, we remove all matches whose two endpoints are both in one same block and the two symbols in $C_1, C_2$ corresponding to this block are the same. Note that the number of such matches is at most $sn'$. Now, all remaining matches must have the two endpoints in different blocks, or in the same block but the two symbols in $C_1, C_2$ corresponding to this block are different. By our choice of independent random linear inner codes, and using a similar argument as in Section~\ref{sec:higherror}, if we count these matches from left to right, then every such match happens with probability $1/q_0$ conditioned on all previous matches, unless such a match hits two $0$ symbols in $C_1, C_2$ (in which case we say the match is \emph{bad}). However, the number of such bad matches is at most $tn'/2$. Therefore, the probability that the matching happens is at most 
\[q_0^{-(\ell-sn'-tn'/2)} \leq q_0^{-(w-\delta n-t/2)n'}.\]

On the other hand, note that the $t$ unique $0$ symbols can only happen at positions where the symbols of $C_1$ and $C_2$ differ. Since $C_1-C_2$ is fixed, this also fixes $t$ symbols in $C_1$. Hence the total number of possible codewords for $C_1$ is at most $q^{k-\min(k, t)} \leq q^{k-t/2}$ since $t \leq 2k$. Now we bound the probability that there exist two different codewords whose longest common subsequence has length $\geq \ell=(1-\delta)N$.

\begin{align}
     & \Pr[\exists \mbox{ two different codewords whose longest common subsequence has length at least } \ell ] \\
 \le & \sum_{w\ge d, s, t} {nn' \choose \ell}^2 {n \choose w}q^{w-d+1} {n-w \choose s} {w \choose t}2^t q^{k-t/2} q_0^{-(w-\delta n-t/2)n'} \\
 = & \sum_{w\ge d, s, t} {nn' \choose \delta nn'}^2 {n \choose w}  {n-w \choose s} {w \choose t}2^t q_0^{(k-t/2+w-d+1-(w-\delta n-t/2))n'} \\
 = & \sum_{w\ge d, s, t} {nn' \choose \delta nn'}^2 {n \choose w}  {n-w \choose s} {w \choose t}2^t q_0^{(k-d+1+\delta n)n'} \\
 \le & \sum_{w\ge d, s, t}2^{2nn' \ent(\delta)  }  {n \choose w}{n-w \choose s} {w \choose t}2^t q_0^{(1-\gamma n+\delta n)n'}\\ 
  \le & \sum_{w\ge d, s, t} 2^{2nn' \ent(\delta)  } 2^n 2^{n-w}2^w 2^t q_0^{(\delta-\gamma)nn'+n'}\\ 
  . \le & \sum_{w\ge d, s, t} 2^{2nn' \ent(\delta)} 2^{3n} q_0^{(\delta-\gamma)nn'+n'} \\
  \le & n^3 2^{2nn' \ent(\delta)} 2^{3n} q_0^{(\delta-\gamma)nn'+n'}.   
\end{align}

To make the above probability $ \leq 1/4$, one just needs
\[3 \log n+2nn' \ent(\delta)+3n+(\delta-\gamma)nn' \log q_0 +n' \log q_0 \leq -2.\]

Thus it suffices to take 

\[\gamma=\delta +2\ent(\delta)/\log q_0+(3n+3\log n+2)/(nn' \log q_0)+1/n=\delta +2\ent(\delta)/\log q_0+o(1).\]
\end{proof}

This gives the following theorem.

\begin{theorem}
For any field $\F_q$, there exists a linear concatenated code over $\F_q$ where the outer code is a Reed-Solomon code, such that the code has rate $\frac{1}{2}(1- \delta)- \ent(\delta)/\log q-o(1)$ and can correct $\delta$ fraction of insdel errors, where $\ent()$ is the binary entropy function. 
\end{theorem}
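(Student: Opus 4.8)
The plan is to obtain the theorem as an essentially immediate consequence of the preceding lemma, together with a short rate computation. Throughout, I identify the theorem's field $\F_q$ with the inner field of the lemma (which the lemma writes as $\F_{q_0}$; the outer Reed--Solomon code lives over a separate, larger field, of size $Q=\Theta(n)$). I would instantiate the general construction of Section~\ref{section:existence} with outer code the $[n,\,k=(1-\gamma)n/2,\,d=(1+\gamma)n/2]_Q$ Reed--Solomon code, and with each of the $n$ inner codes an independent, uniformly random $\F_q$-linear map $\F_q^{k'}\to\F_q^{n'}$, where $n'=k'=\log_q Q$, so that the inner message space has exactly the size of a single outer symbol and each inner code has rate $1$. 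Setting $\gamma=\delta+2\ent(\delta)/\log q+o(1)$ as in the lemma, the lemma guarantees that, over the random choice of the inner generator matrices, the resulting concatenated code corrects a $\delta$ fraction of insdel errors with probability at least $3/4$. Since this probability is positive, some fixed choice of inner codes works; and since the outer code is linear and each inner code is an $\F_q$-linear map, the concatenated code is a linear code over $\F_q$.

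It then remains only to read off the rate. The concatenated code maps a length-$k$ word over the outer alphabet — equivalently $kk'$ symbols of $\F_q$ — to $nn'$ symbols of $\F_q$, so its rate is $(k/n)\cdot(k'/n')=\frac{1-\gamma}{2}\cdot 1=\frac{1-\gamma}{2}$. Substituting $\gamma=\delta+2\ent(\delta)/\log q+o(1)$ gives rate $\frac12(1-\delta)-\ent(\delta)/\log q-o(1)$, as claimed; the $o(1)$ term is exactly the additive correction $(3n+3\log n+2)/(nn'\log q)+1/n$ from the lemma, which tends to $0$ as $n\to\infty$ because $n'=\Theta(\log n)$. Letting $n\to\infty$ yields the desired code family. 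The stated corollaries then follow by specializing: $q=2$ (using $\ent(\delta)=\Theta(\delta\log\frac1\delta)$ to rephrase the decoding radius as $\Omega(\delta\log^{-1}\frac1\delta)$), and $q=2^{\Theta(1/\eps)}$ so that $\ent(\delta)/\log q\le\eps$.

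Since all of the analytic content — the count of Reed--Solomon codewords of a given weight via Claim~\ref{claim:RS}, and the union bound over monotone matchings parametrized by the weight $w$ of $C_1-C_2$ and the auxiliary counts $s,t$ — is already carried out inside the lemma, I do not expect a genuine obstacle here. The only points that require care are bookkeeping: verifying the parameter consistency $q^{k'}=Q$ so that the inner codes can actually encode outer symbols, checking that the prescribed $\gamma$ keeps the outer rate $(1-\gamma)/2$ positive in the relevant range of $\delta$, and confirming that the $o(1)$ correction is absorbed uniformly over the family. All of these are routine for the stated choices of parameters.
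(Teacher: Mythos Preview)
Your proposal is correct and matches the paper's approach: the theorem is stated in the paper immediately after the lemma with no separate proof, precisely because it is the immediate existence-plus-rate consequence you describe. Your identification of the theorem's $\F_q$ with the lemma's $\F_{q_0}$, the probabilistic-method step, and the rate computation $\frac{k}{n}\cdot\frac{k'}{n'}=\frac{1-\gamma}{2}=\frac{1}{2}(1-\delta)-\ent(\delta)/\log q-o(1)$ are exactly what the paper intends.
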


\section{A Lower Bound}\label{sec:lowerbound}

    In this section, we give the lower bound that any linear subspace $\C \subseteq \F_2^n$ with dimension $\geq 3$ must contain two strings whose longest common subsequence has length at least $(\frac{1}{2}+\frac{3}{16\log n})n$. 
    
    We regard all binary strings of length $n$ as a linear subspace and define ``$+$'' as the bit-wise xor operation.
    
    \begin{theorem}\label{thm:lower bound main}
        For every three different strings $a,b,c$ of length $n$, the linear subspace consists of 8 strings $0^n,a,b,c,a+b,a+c,b+c,a+b+c$, we can always pick string $x,y$ out of these 8 strings, such that $\lcs(x,y) \geq \frac{n}{2} + \Omega(\frac{n}{\log n})$.
    \end{theorem}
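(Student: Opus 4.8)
The plan is to first dispose of all the ``easy'' configurations with elementary lower bounds on $\lcs$, and then concentrate on a single rigid case where a finer, run-based argument is needed.

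\textbf{Easy reductions.} For binary strings of equal length $n$ I will repeatedly use three bounds: $\lcs(x,y)\ge\min(z(x),z(y))$ and $\lcs(x,y)\ge\min(w(x),w(y))$, where $z(\cdot)$ and $w(\cdot)$ count zeros and ones (align all zeros, resp.\ all ones), and $\lcs(x,y)\ge n-\hamming(x,y)$ (align the positions of agreement). View the eight strings as the codewords of the $3$-dimensional binary linear code $\C$ spanned by $a,b,c$; since $\hamming(u,v)=w(u+v)$ and $u+v\in\C$, the third bound reads $\lcs(u,v)\ge n-w(u+v)$. Put $\beta=\tfrac{n}{c\log n}$ for a constant $c$ to be fixed. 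If some nonzero codeword $u$ has $w(u)\le n/2-\beta$, pairing it with $0^n$ gives $\lcs(0^n,u)=n-w(u)\ge n/2+\beta$ and we are done; if two nonzero codewords both have weight $\ge n/2+\beta$, aligning their ones finishes; and if some pair has $\hamming\le n/2-\beta$ we are done by the agreement bound. Hence we may assume that every nonzero codeword has weight in $(n/2-\beta,\,n/2+\beta)$ with at most one exception, and that $\C$ is almost-equidistant with distance $\approx n/2$ --- so every \emph{position-preserving} matching at our disposal only yields a common subsequence of length $\approx n/2$, and any extra gain must come from a genuinely shifted alignment.

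\textbf{The structural core.} It remains to extract an extra $\Omega(n/\log n)$ on \emph{some} pair. I would exploit run structure and shifts. Recall the split-point refinement $\lcs(x,y)\ge\sum_i\lcs(x[I_i],y[J_i])$ over any order-respecting partitions of the two strings into intervals; combined with the observation that aligning a long $0$-run of $x$ against a $0$-heavy interval of $y$ (and likewise for $1$-runs) contributes almost its full length, a good correlation of $u$ and $v'$ at \emph{some} nonzero shift upgrades to a common subsequence of length well above $n/2$. So the goal becomes: a $3$-dimensional, balanced, almost-equidistant binary linear code cannot have low self- and cross-correlation \emph{simultaneously at all shifts and for all of its (at most seven) nonzero codewords}. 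I would prove this by a multi-scale pigeonhole: examine the codewords at the $\Theta(\log n)$ dyadic resolutions $1,2,4,\dots,n$; at resolution $2^j$ cut $[n]$ into length-$2^j$ blocks and record for each codeword a coarse sketch (e.g.\ the sequence of block weights, or block majorities). Since there are only $O(\log n)$ scales and $\le 7$ codewords, an averaging/counting argument must produce, at some scale, two codewords (or a codeword and a shift of another) whose coarse sketches agree far more than an equidistant code could tolerate if it were ``generic'' at every scale; pushing that coarse agreement back down through the split-point bound yields the pair with $\lcs\ge n/2+\Omega(n/\log n)$. Linearity is used crucially to keep the codeword count at seven and to move freely between $u,v'$ and $u+v'$.

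\textbf{Where the difficulty lies.} The delicate point is precisely this last step: ensuring the passage from ``coarse-scale agreement'' to ``long common subsequence'' costs only a $\Theta(\log n)$ factor, and controlling the interleaving of $0$-regions and $1$-regions when partial common subsequences from different blocks are concatenated (a block matching zeros followed by a block matching ones is fine, but the blocks must be combined in increasing order, which restricts which per-block alignments may be used together). I also expect the lone possible ``heavy'' codeword to need a short separate treatment: when it exists its bitwise complement behaves like a low-weight word, and re-running the easy reductions on the pairs and cosets involving it should again suffice, but this must be checked. Finally the constants should be tracked through both steps to recover the stated $\tfrac{3}{16\log n}$.
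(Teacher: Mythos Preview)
Your easy reductions are sound and close in spirit to the paper's first lemma; the paper's version is slightly sharper in that it pins down three codewords $x,y,z$ with $x+y=z$ for which, in the $3\times n$ matrix with rows $x,y,z$, each of the four admissible columns $(0,0,0)^T,(0,1,1)^T,(1,0,1)^T,(1,1,0)^T$ appears $n/4\pm O(n/\log n)$ times. The real work is your ``structural core,'' and there the proposal has a genuine gap: the multi-scale pigeonhole you sketch is not yet an argument. ``Coarse sketches agree at some dyadic scale'' does not by itself produce a common subsequence longer than $n/2$; the split-point bound $\lcs(x,y)\ge\sum_i\lcs(x[I_i],y[J_i])$ only helps if you can align the blocks so that each pair individually beats the trivial bound, and nothing you wrote forces that. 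You flag this yourself (``the delicate point is precisely this last step''), but you do not supply the missing combinatorial mechanism, and I do not see how to make a generic block-majority pigeonhole yield it.

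What the paper actually does for this step is quite different and worth knowing. After the reduction, it encodes the column sequence (dropping all-zero columns) as a string $L$ over the three-letter alphabet $\{a,b,c\}$ and shows that any such string admits a nested ``bracket'' parsing of depth $O(\log n)$ in which, at every level, the residual string after removing lower-level brackets has the special shape $\alpha^{k}\{\beta,\gamma\}^{k}$ for some $\alpha\in\{a,b,c\}$. For that specific local pattern one checks directly that the sum of the three pairwise (position-restricted) common subsequences exceeds the trivial value by $\Omega(k)$. Since the $\Theta(n)$ characters of $L$ are partitioned among $O(\log n)$ levels, pigeonhole on the levels gives one level with total inner length $\Omega(n/\log n)$, hence a pair among $x,y,z$ with $\lcs\ge n/2+\Omega(n/\log n)$. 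The crucial idea is that the gain is extracted from a \emph{concrete} local pattern that the parsing guarantees exists in abundance at some depth, not from a generic coarse-scale agreement statement; this is exactly the piece your outline is missing.
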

    

    For simplicity, we slightly misuse the notation by labelling the three string we choose by $a,b,c$, their sum is labeled $a+b,a+c,b+c,a+b+c$, every $\C \subseteq \F_2^n$ with dimension $\geq 3$ must contain 8 different strings $0^n,a,b,c,a+b,a+c,b+c,a+b+c$.
    
    In the following proof, we need to compare characters of different length $n$ strings in the same position, so we define the matrix generated by strings as follows:

    \begin{definition}
        $m$ length-$n$ strings $a_1,a_2,...,a_m$ generates a $m$ by $n$ matrix $T$ by $T(i,j):=$ the $j$-th character of string $a_i$.
    \end{definition}


    \begin{lemma}\label{lem:toolthmprelem}
        If the largest pairwise $\lcs$ between $0^n,a,b,c,a+b,a+c,b+c,a+b+c \in \F_2^n$ has its size no larger than $\frac{n}{2} + \frac{3n}{16\log n}$, then there exist 3 strings $x,y,z$ out of $a,b,c,a+b,a+c,b+c,a+b+c$ such that 
        \begin{itemize}
            \item    $x + y = z$,
            \item  each of $x,y,z$ consists of $ [\frac{n}{2} - \frac{3n}{16\log n}, \frac{n}{2} + \frac{3n}{16\log n}]$ 0s,
            \item the $3$ by $n$ matrix generated by $x,y,z$  contains almost the same number of different columns, namely, column $(0,0,0)^T,(0,1,1)^T,(1,0,1)^T,(1,1,0)^T$ each appears $[\frac{n}{4} -\frac{9n}{16\log n}, \frac{n}{4} + \frac{9n}{16\log n}]$ times. 
        \end{itemize}

    \end{lemma}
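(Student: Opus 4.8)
The plan is to use the LCS hypothesis to pin down the Hamming weights of the seven nonzero strings, then invoke a packing fact in the Fano plane to produce a balanced two-dimensional subspace, and finally read off the column counts by elementary linear algebra. For a string $s\in\F_2^n$ write $w_0(s)$ and $w_1(s)=n-w_0(s)$ for its number of $0$'s and $1$'s, and index the seven nonzero strings by $\F_2^3\setminus\{0\}$ via $v\mapsto s_v:=v_1a+v_2b+v_3c$. First, a common subsequence of $0^n$ and $s_v$ is just a run of $0$'s, so $\lcs(0^n,s_v)=w_0(s_v)$; since $0^n$ lies in the subspace, the hypothesis gives $w_0(s_v)\le\frac{n}{2}+\frac{3n}{16\log n}$ for every nonzero $v$. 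Second, for any $u,v\in\F_2^n$, keeping only the $1$'s of whichever has fewer of them shows $\lcs(u,v)\ge\min(w_1(u),w_1(v))$; applying this to two distinct nonzero strings $s_v,s_{v'}$ and combining with the hypothesis forbids two of the seven nonzero strings from each having more than $\frac{n}{2}+\frac{3n}{16\log n}$ ones. Hence at most one nonzero string has $w_0<\frac{n}{2}-\frac{3n}{16\log n}$, so at least six of the seven nonzero strings are \emph{balanced}, meaning $w_0\in[\frac{n}{2}-\frac{3n}{16\log n},\,\frac{n}{2}+\frac{3n}{16\log n}]$.

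Next, identify $\F_2^3\setminus\{0\}$ with the seven points of $PG(2,2)$, where a line is exactly a set $\{x,y,z\}$ of three distinct nonzero strings with $x+y=z$. Since the largest arc (set with no three collinear points) in $PG(2,2)$ has size $4$, any six of the seven points contain a line; choose a line all of whose members are balanced and call them $x,y,z$ with $z=x+y$. In the $3\times n$ matrix generated by $x,y,z$ the column at position $i$ equals $(x_i,\,y_i,\,x_i\oplus y_i)^{T}$, so only $(0,0,0)^{T},(0,1,1)^{T},(1,0,1)^{T},(1,1,0)^{T}$ occur, with respective multiplicities $N_{00},N_{01},N_{10},N_{11}$, the joint frequencies of $(x_i,y_i)$. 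These satisfy $N_{00}+N_{01}=w_0(x)$, $N_{00}+N_{10}=w_0(y)$, $N_{00}+N_{11}=w_0(z)$ and $N_{00}+N_{01}+N_{10}+N_{11}=n$, whence $N_{00}=\tfrac12\bigl(w_0(x)+w_0(y)+w_0(z)-n\bigr)$ and $N_{01},N_{10},N_{11}$ follow by subtraction. Substituting $w_0(x),w_0(y),w_0(z)\in[\frac{n}{2}-\frac{3n}{16\log n},\frac{n}{2}+\frac{3n}{16\log n}]$ gives $N_{00}\in[\frac{n}{4}-\frac{9n}{32\log n},\frac{n}{4}+\frac{9n}{32\log n}]$ and each of $N_{01},N_{10},N_{11}$ in $[\frac{n}{4}-\frac{15n}{32\log n},\frac{n}{4}+\frac{15n}{32\log n}]\subseteq[\frac{n}{4}-\frac{9n}{16\log n},\frac{n}{4}+\frac{9n}{16\log n}]$, which is precisely the third bullet; the second bullet is balancedness of $x,y,z$ and the first is $x+y=z$.

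The only substantive idea is the second observation above: the innocuous inequality $\lcs(u,v)\ge\min(w_1(u),w_1(v))$ upgrades, once the hypothesis is applied to all pairs, to the strong statement that all but at most one of the seven nonzero codewords have near-balanced weight. Everything afterwards is the Fano-plane packing fact (any six of the seven points contain a line) plus routine arithmetic; the one thing to watch is confirming that the slack $\tfrac{15n}{32\log n}$ produced by the final substitution indeed lies within the stated tolerance $\tfrac{9n}{16\log n}$, which it does.
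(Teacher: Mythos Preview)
Your proof is correct and in fact cleaner than the paper's. The paper's argument fixes a specific triple from the outset: it assumes WLOG that $a$ has the most $1$'s, deduces that the remaining six nonzero strings have at most $\frac{n}{2}+\frac{3n}{16\log n}$ ones, and then analyzes the specific choice $x,y,z=b,c,b+c$ via a $7\times n$ column-frequency table with parameters $t_0,\dots,t_5$. To pin down each $t_i$ it invokes several further LCS lower bounds, including the Hamming-agreement bound $\lcs(b,c)\ge |\{i:b_i=c_i\}|$, and solves the resulting system of inequalities.

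Your route is different in two ways. First, you replace the ad hoc choice of $b,c,b+c$ by the structural observation that any six of the seven points of $PG(2,2)$ contain a line, so a balanced triple with $x+y=z$ exists automatically once you know six strings are balanced. Second, you bypass the $t_i$ bookkeeping entirely: since $z=x+y$, the four column multiplicities are linear functions of $w_0(x),w_0(y),w_0(z)$ alone, and the third bullet drops out by direct substitution. This buys a shorter and more conceptual argument, at the cost of slightly looser constants on two of the four column counts (the paper gets $t_2,t_3\in[\tfrac14-\tfrac{3}{8\log n},\tfrac14+\tfrac{3}{8\log n}]$, whereas you get $\tfrac{15}{32\log n}$); both fit comfortably inside the stated $\tfrac{9}{16\log n}$ window, so nothing is lost.
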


    
   
   \begin{proof}
        Without loss of generality, we assume string $a$ has the most number of $1$s among $0^n,a,b,c,a+b,a+c,b+c,a+b+c$.
   
        When the $\lcs$ between any two of $0^n,a,b,c,a+b,a+c,b+c,a+b+c$ is no larger than $\frac{n}{2} + \frac{3n}{16\log n}$, if $a$ has more than $\frac{n}{2} + \frac{3n}{16\log n}$ number of $1$s, then every other string must have no more than $\frac{n}{2} + \frac{3n}{16\log n}$ number of $1$s, since otherwise, we can find a common subsequence of $1$s of length larger than $\frac{n}{2} + \frac{3n}{16\log n}$.

        So $b,c,a+b,a+c,b+c,a+b+c$ has no more than $\frac{n}{2} + \frac{3n}{16\log n}$ number of $1$s.
        
        We first generate a $7$ by $n$ matrix by $a,b,c,a+b,a+c,b+c,a+b+c$, and count the frequency of different columns as follows: 
        
        {
        \begin{table}[h]
            \centering
            \setlength{\extrarowheight}{2pt}
            \begin{tabular}{cc|c|c|c|c|c|c|}
                    & \multicolumn{1}{c}{} & \multicolumn{5}{c}{Column Counting Matrix $C$}\\
                    & \multicolumn{1}{c}{} & \multicolumn{1}{c}{$t_0$}  & \multicolumn{1}{c}{$t_1$} & \multicolumn{1}{c}{$t_2$} & \multicolumn{1}{c}{$t_3$} & \multicolumn{1}{c}{$t_4$} & \multicolumn{1}{c}{$t_5$} \\\cline{3-8}
                \multirow{3}*{}  
                    & $a$ & $0$ & $1$ & $0/1$ & $0/1$ & $1$ & $0$\\\cline{3-8}
                    & $b$ & $1$ & $0$ & $0$ & $1$ & $1$ & $0$ \\\cline{3-8}
                    & $c$ & $1$ & $0$ & $1$ & $0$ & $1$ & $0$ \\\cline{3-8}
                    & $a+b$ & $1$ & $1$ & $0/1$ & $1/0$ & $0$ & $0$ \\\cline{3-8}
                    & $a+c$ & $1$ & $1$ & $1/0$ & $0/1$ & $0$ & $0$ \\\cline{3-8}
                    & $b+c$ & $0$ & $0$ & $1$ & $1$ & $0$ & $0$ \\\cline{3-8}
                    & $a+b+c$ & $0$ & $1$ & $1/0$ & $1/0$ & $1$ & $0$ \\\cline{3-8}
            \end{tabular}
        \end{table}
        }
       
        We denote the frequency of column $(0,1,1,1,1,0,0)^T$ as $t_0$, the frequency of $(1,0,0,1,1,0,1)^T$ as $t_1$, 
        the total frequency of $(0,0,1,0,1,1,1)^T$ and $(1,0,1,1,0,1,0)^T$ as $t_2$, the total frequency of $(0,1,0,1,0,1,1)^T$ and $(1,1,0,0,1,1,0)^T$ combined as $t_3$, the frequency of $(1,1,1,0,0,0,1)^T$ as $t_4$, the frequency of $(0,0,0,0,0,0,0)^T$ as $t_5$, so $\sum_{i = 0}^5 t_i = 1$.

        We compute the number of 1$s$ of $b+c,b,c$, and get $t_2 + t_3 \leq \frac{1}{2}+ \frac{3}{16\log n}, t_0 + t_3 + t_4 \leq \frac{1}{2}+ \frac{3}{16\log n}, t_0 + t_2 + t_4 \leq \frac{1}{2}+ \frac{3}{16\log n}$, so $t_0 + \max\{t_2,t_3\} + t_4 \leq \frac{1}{2}+ \frac{3}{16\log n}$.

        Since we have an upper bound on the length of the longest common subsequence, we now find common subsequences of different strings to get inequalities.   
        
        Note that $b$ and $c$ have exactly $(t_0 + t_1 + t_4 + t_5)n$ equal characters in their corresponding positions, which corresponds to a length $(t_0 + t_1 + t_4 + t_5)n$ common subsequence, so by assumption we know that $t_0 + t_1 + t_4 + t_5 \leq \frac{1}{2} + \frac{3}{16\log n}$.
        Hence, $t_2 + t_3 \geq \frac{1}{2} - \frac{3}{16\log n}$.
        So $t_2 + t_3 \in [\frac{1}{2} - \frac{3}{16\log n},\frac{1}{2} + \frac{3}{16\log n}]$. 
        
        By counting the number of $1$s of $b$ and $c$, since $b$ has $(t_0 + t_3 + t_4)n$ $1$s and $c$ has $(t_0 + t_2 + t_4)n$ $1$s, we can find a common subsequence of $t_0 + t_4 + \min\{t_2,t_3\}$ $1$s, so $t_0 + t_4 + \min\{t_2,t_3\} \leq \frac{1}{2} + \frac{3}{16\log n}$.

        We compute the $\lcs$ between $b$ and all zero string, namely the number of $0$s in $b$, so we know that $t_1 + t_2 + t_5 \leq \frac{1}{2} + \frac{3}{16\log n}$, compute the $\lcs$ between $c$ and all zero string, we know that $t_1 + t_3 + t_5 \leq \frac{1}{2} + \frac{3}{16\log n}$, combined it together, we know $t_1 + \max\{t_2,t_3\} + t_5 \leq \frac{1}{2} + \frac{3}{16\log n}$, which equivalents to $t_0 + t_4 + \min\{t_2,t_3\} \geq \frac{1}{2} - \frac{3}{16\log n}$.

        So we have $t_0 + t_4 + \min\{t_2,t_3\} \in [\frac{1}{2} - \frac{3}{16\log n}, \frac{1}{2} + \frac{3}{16\log n}]$, and since $t_0 + \max\{t_2,t_3\} + t_4 \leq \frac{1}{2}+ \frac{3}{16\log n}$, we have $t_2 - t_3 \in [-\frac{3}{8\log n}, \frac{3}{8 \log n}]$.
      
        Since $t_2 + t_3 \in [\frac{1}{2} - \frac{3}{16\log n},\frac{1}{2} + \frac{3}{16\log n}]$, it holds that $t_2,t_3 \in [\frac{1}{4} - \frac{3}{8\log n}, \frac{1}{4} + \frac{3}{8\log n}]$.
        Together with inequalities above, we know $t_0 + t_4 \in [\frac{1}{4} - \frac{9}{16\log n}, \frac{1}{4} + \frac{9}{16\log n}], t_1 + t_5 \in [\frac{1}{4} - \frac{9}{16\log n}, \frac{1}{4} + \frac{9}{16\log n}]$.
        So consider $b,c,b+c$.
        The first bullet  is satisfied immediately.
        The second bullet immediately follows from the lemma condition.
        The argument above shows that they satisfy the third bullet.

   \end{proof}

   
   To further prove \cref{thm:lower bound main}, we show the following theorem.
   \begin{theorem}\label{thm:toolthm}
        For three length $n$ strings $x,y,z$, consider the $3$ by $n$ matrix  generated by them.
       If each of the columns $(0,0,0)^T,(0,1,1)^T,(1,0,1)^T,(1,1,0)^T$  appears $[\frac{n}{4} -\frac{9n}{16\log n}, \frac{n}{4} + \frac{9n}{16\log n}]$ times , then the maximum of  pairwise $\lcs$'s of $x,y,z$ is larger than $\frac{n}{2} + \frac{3n}{16\log n}$.
   \end{theorem}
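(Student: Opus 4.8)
The plan is to argue by contradiction: assume every pair among $x,y,z$ has $\lcs$ at most $\frac n2+\frac{3n}{16\log n}$, and derive a violation of the column-count hypothesis. It is cleanest to pass to the \emph{type string} $S\in\{a,b,c,d\}^n$ that records, for each coordinate, which of the four (even-weight) columns occurs there; by hypothesis each type occurs $\frac n4\pm\frac{9n}{16\log n}$ times, and $S$ also encodes $z=x\oplus y$. The three rows $x,y,z$ are exactly the three ``foldings'' of $S$ obtained by splitting $\{a,b,c,d\}$ into two pairs, one mapped to $0$ and one to $1$: $x$ reads $\{a,b\}\mapsto 0$, $\{c,d\}\mapsto 1$; $y$ reads $\{a,c\}\mapsto 0$, $\{b,d\}\mapsto 1$; $z$ reads $\{a,d\}\mapsto 0$, $\{b,c\}\mapsto 1$. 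Thus for the pair $(x,y)$ the coordinates of types $a,d$ are read equally by both (an ``agreed backbone'') while types $b,c$ are read oppositely (the ``disagreed'' types), and similarly $(x,z)$ agrees on $\{a,c\}$ and disagrees on $\{b,d\}$, and $(y,z)$ agrees on $\{a,b\}$ and disagrees on $\{c,d\}$.

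First I would set up a per-pair lower bound on $\lcs$. Match a backbone of agreed-type coordinates of $x$ to the same coordinates of $y$; each maximal run of disagreed-type coordinates of $S$ lying strictly between two consecutive backbone coordinates is a ``gap'', and inside a gap the pattern $g$ seen by $x$ is the bit-complement of the pattern $\overline g$ seen by $y$, with the within-gap optimizations independent. This gives $\lcs(x,y)\ \ge\ (\text{backbone length})\ +\ \sum_{\text{gaps }g}\lcs(g,\overline g)$, and $\lcs(g,\overline g)\ge\min(\#0(g),\#1(g))$, with $\lcs(g,\overline g)$ essentially the full length $|g|$ when $g$ alternates finely. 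A subtlety I would flag early: the naive choice of backbone (all $a,d$-coordinates) is lossy --- for $S=a^{n/4}b^{n/4}c^{n/4}d^{n/4}$ all three $\lcs$'s equal $\tfrac34n$, but the naive bound only certifies $\tfrac n2$ for two of the three pairs --- so the backbone must be chosen per pair to keep large gaps intact, and fine-scale alternation inside gaps must be credited.

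The heart of the argument is to show that the three ``gains'' $G_{bc},G_{bd},G_{cd}$ (the sums $\sum_{\text{gaps}}\lcs(g,\overline g)$ for the three pairs, for suitably chosen backbones) cannot all be as small as $o\!\big(\tfrac{n}{\log n}\big)$. Here I would exploit that each of $b,c,d$ occupies $\approx\tfrac n4$ coordinates and play the three pairs against one another: if, say, the $\{b,c\}$-gaps are all near-monochromatic or extremely lopsided (the only way $G_{bc}$ is tiny), then $b$ and $c$ are forced into long, unmixed clusters, which in turn makes one of the $\{b,d\}$- or $\{c,d\}$-patterns finely interleaved (hence that pair's gain large), the balanced counts preventing all three patterns from being simultaneously ``sorted''. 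To turn an $\Omega(n)$-scale structural surplus into an additive $\lcs$-gain of $\Omega(n/\log n)$ I would use a dyadic pigeonhole over the $\Theta(\log n)$ length scales: cut $[n]$ into windows of length $\ell$, find a scale $\ell$ and a fixed disagreed type-pair for which a constant fraction of windows are ``mixed and balanced'', and sum the per-window within-gap contributions at that scale, which beats the trivial $\approx\tfrac n2$ backbone bound by $\Omega(n/\log n)$ and so exceeds $\frac n2+\frac{3n}{16\log n}$ once the $\pm\frac{9n}{16\log n}$ slack in the type counts is absorbed.

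The main obstacle is precisely this combinatorial three-way case analysis together with keeping the per-pair bounds tight: the sorted-type-string example shows that a crude analysis wastes the entire gain for two of the pairs, so one must (i) choose the backbone adaptively, (ii) measure ``mixedness'' of the disagreed patterns at all scales rather than just globally, and (iii) argue that balanced type counts make it impossible for the $b$-vs-$c$, $b$-vs-$d$ and $c$-vs-$d$ arrangements to be simultaneously unmixed. This is also where the $\log n$ loss seems unavoidable with this method --- the scale-selection pigeonhole costs a factor $\Theta(\log n)$, and since the gain one must extract and the slack in the type counts are both $\Theta(n/\log n)$, the bound saturates at $\frac{1}{\log n}$ rather than at a constant.
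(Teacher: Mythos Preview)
Your high-level strategy --- encode columns as a type string over $\{a,b,c,d\}$, lower-bound each pairwise $\lcs$ by an ``agreed-backbone plus disagreed-gap'' decomposition, then pigeonhole over $\Theta(\log n)$ scales --- is reasonable in spirit but diverges from the paper's argument and leaves the decisive step unresolved.

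The paper proceeds differently. It deletes the all-zero column type, works with the resulting string $L\in\{a,b,c\}^{v}$ with $v\approx 3n/4$, and lower-bounds the \emph{sum} of the three pairwise (restricted) $\lcs$'s in one shot rather than pair by pair. The mechanism is an adaptive bracket parse of $L$: a greedy linear scan inserts $O(n)$ bracket pairs so that each, after stripping its inner brackets, reads $\alpha^k\beta^k$; these are then collapsed into an ``enhanced level'' hierarchy of depth $\ell=O(\log n)$. A short lemma shows that the additional value (the sum of the three $\lcs$'s minus the trivial contribution) dominates, for every enhanced bracket, either a quarter of that bracket's own inner length or the total of its children's values. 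Since the inner lengths over all $\ell$ levels sum to $v$, some level already contributes $\Omega(v/\ell)=\Omega(n/\log n)$, and one of the three pairs inherits at least a third of that.

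The gap in your plan is exactly your step~(iii). You assert that balanced type counts force at least one of the three disagreed-pair arrangements to be ``mixed'', but this is the whole difficulty and no argument is offered. In fact your per-pair bound can be simultaneously $\approx n/2$ for all three pairs even when every actual $\lcs$ is $\ge 3n/4$: with your labeling ($a$ the all-zero column), take $S=b^{\,n/4}\,a^{\,n/12}\,c^{\,n/4}\,a^{\,n/12}\,d^{\,n/4}\,a^{\,n/12}$. All four types are perfectly balanced, every disagreed-type run is monochromatic, so all three gap-gains vanish --- and there is no ``fine-scale alternation'' anywhere to credit. The remedy you name (choose the backbone adaptively) is correct but you give no rule, and fixed dyadic windows do not obviously replace the paper's string-driven adaptive parse. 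By bounding the sum of the three $\lcs$'s via the bracket hierarchy, the paper sidesteps the three-way case analysis entirely; that is the missing idea in your outline.
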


    To prove this theorem, we need following definitions which defines the ``level structure'' of strings that is used to find common subsequences sufficiently long to provide desired the lower bound.


   \begin{definition}
        Given a string $L$ whose alphabet size is no more than 3, We add pairs of brackets,   to $L$ and define the following level structure with regard to $L$. Assume the alphabet set is  $\{\alpha,\beta,\gamma\}$.
    \begin{itemize}
        \item   A pair of brackets is called level $1$, if the substring of $L$ inside the brackets can be expressed as $\alpha^k \beta^k$ or $\alpha^k\gamma^k$ for some $k \geq 0$.
        \item A pair of brackets is called level $n$, if the following holds.
        \begin{itemize}
            \item Every pair of brackets inside has a level smaller than $n$;
            \item At least one of the pairs inside is level $n - 1$;
            \item Removing all those bracket pairs with level $<n$ and together with their contained strings, the remaining string inside the bracket can be expressed as $\alpha^k \beta^k$ or $\alpha^k\gamma^k$ for some $k \geq 0$.
        \end{itemize} 
        \item $L$ is called level $n$, if we can add pairs of brackets of levels less than $n$ to $\langle L\rangle$, such that the pair of out-most brackets is level $n$.

        \item   A pair of brackets is called enhanced level $1$, if the substring string of $L$ inside the brackets can be expressed as $\alpha^k \{\beta,\gamma\}^k$ for some $k \geq 0$.        
        \item A pair of brackets is called enhanced level $n$ if the following holds.
        \begin{itemize}
            \item All pairs of brackets inside have enhanced levels smaller than $n$;
            \item At least two inside pairs are enhanced level $n - 1$;
            \item Removing all pairs of brackets (with level $\le n-1$) together with their contained strings, the remaining string can be expressed as $y\circ z$, $y\in \alpha^k, z\in \{\beta,\gamma\}^k$ for some $k \geq 0$;
            \item The removed brackets are not in between symbols of $z$.
        \end{itemize} 
         We call $2k$ the inner length of this enhanced level $n$.
        \item A string $L$ is called enhanced level $n$ if we can add pairs of brackets of enhanced levels less than $n$ to $\langle L\rangle$, such that the pair of out-most brackets is enhanced level $n$.
        
        \end{itemize}

   \end{definition}

   We denote the column vector $(0,1,1)^T,(1,0,1)^T,(1,1,0)^T,(0,0,0)^T$ by $a,b,c,d$, so the $3$ by $n$ matrix we mentioned can be seen as a length $n$ string with character $a,b,c,d$, we ignore character $d$, and denote the remaining new string as $L$ while the string before the ignorance of $d$ as $L'$.

   \begin{lemma}
      One can add at most $n$ pairs of brackets to $L$ and this implies that the level of $L$ is   $O(n)$.
   \end{lemma}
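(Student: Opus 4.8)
The statement really has two parts: (i) that $L$ admits a valid bracketing — one that makes it a level-$m$ string for some finite $m$ — using at most $n$ bracket pairs, and (ii) that any such bracketing forces $m = O(n)$. I would dispatch (ii) first, since it is the short and clean part. By the recursive definition of level, a bracket pair of level $\ell \ge 2$ must contain a bracket pair of level exactly $\ell-1$, which in turn contains one of level $\ell-2$, and so on down to a level-$1$ pair; hence a level-$m$ bracketing contains a chain of $m$ nested bracket pairs and therefore uses at least $m$ bracket pairs in total. Consequently, if the bracketing of $L$ uses at most $n$ pairs then $m \le n$, i.e. the level of $L$ is $O(n)$.

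For part (i), I would construct the bracketing greedily from the bottom up. Maintain a working sequence of ``tokens'', initialized to the $|L| \le n$ characters of $L$ (recall $L$ is a substring of the length-$n$ string $L'$); a token is either an original character or a previously formed bracket pair. While there are at least two tokens, locate a minimal contiguous run of tokens whose underlying characters, after deleting those already enclosed in nested bracket pairs, form a string of the shape allowed at level $1$ ($\alpha^k\beta^k$ or $\alpha^k\gamma^k$, including the degenerate residue cases), wrap that run in a new bracket pair, and replace it by one new token. Two invariants need checking: such a bracketable run always exists while at least two tokens remain (two adjacent tokens of distinct symbol type already exhibit a length-two level-$1$ block, and runs of a single symbol are absorbed either into such a block or into a higher-level residue), so the procedure never gets stuck; and each step decreases the token count by at least one, because a newly created bracket pair always encloses at least two tokens — a bracket around a single token would be redundant and is never formed. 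Starting from at most $n$ tokens, we therefore add at most $n-1 < n$ bracket pairs, and at termination a single token remains, the outermost bracket pair, whose level $m$ is well defined. Combining with part (ii) gives $m = O(n)$.

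\textbf{Main obstacle.} The substantive work is entirely in part (i): one must verify that the greedy contraction is always executable and that each contracted object genuinely satisfies the level definition with the claimed level — in particular that the ``at least one inside pair has level $\ell-1$'' clause can always be met, if necessary by padding with degenerate lower-level brackets — while honestly keeping the bracket count at most $n$. Part (ii), by contrast, is essentially immediate once the nested-chain observation is made. I would therefore write the greedy procedure out explicitly so that the ``never stuck'' invariant and the per-step token decrease are both transparent, and only then invoke the chain argument to conclude that the level is $O(n)$.
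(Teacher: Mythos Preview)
Your approach diverges from the paper's. The paper gives an explicit left-to-right scanning algorithm (presented as pseudocode): it processes $L$ one character at a time, maintaining a run of a ``current'' symbol of length \texttt{currentcount} followed by a run of a ``next'' symbol of length \texttt{nextcount}, and emits a bracket pair whenever either the two counts balance or a third distinct symbol arrives. The bound ``at most $n$ pairs'' then drops out from the trivial observation that at most one bracket pair is emitted per character read. Your proposal instead builds brackets bottom-up by repeatedly contracting a run of at least two tokens into one, and derives the same bound from the token-count decrease.

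Both schemes make the counting part easy; the difference is where the actual content sits. In the paper's scan, validity of each emitted bracket pair is encoded in the invariant that the balanced run $\alpha^k\beta^k$ is maintained explicitly, so one never has to argue existence of a bracketable chunk. In your scheme the entire difficulty is compressed into the ``never stuck'' invariant, and that invariant is more fragile than your one-line justification suggests: the residue of any bracket must be exactly of the form $\alpha^k\beta^k$ (equal-length runs of two symbols), so a lone leftover character or an unbalanced run cannot be wrapped on its own, and you cannot simply ``pad with degenerate lower-level brackets'' to fix the level clause without also re-checking the residue clause. Exhibiting a valid $\ge 2$-token merge at every step is essentially the whole lemma, and the paper's left-to-right parse is precisely the device that manufactures such a merge constructively. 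Your part (ii), the nested-chain argument that a level-$m$ bracketing uses at least $m$ pairs, is correct and cleaner than the paper's treatment, which simply asserts the implication.
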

   
   \begin{proof}
        By applying the following algorithm, we can add brackets to L that fits the definition of ``level structure'':
        
    \begin{algorithm}
    \begin{algorithmic}
        \State $current \gets$ the first character of $L$, $currentcount \gets 1, next \gets null$
        \State $nextcount \gets 0, leftbracket \gets 0, rightbracket \gets 0$
        \While{we haven't reach the last character of L}
            \State Read in next character, denoted as m.
            \State $rightbracket \gets rightbracket + 1$
            \If{$currentcount == nextcount$}
                \State $current \gets m, next \gets null, currentcount \gets 0, nextcount \gets 0$
                \State Add a left bracket to the position of $leftbracket$,Add a right bracket to $rightbracket$
                \State $leftbracket \gets rightbracket$ 
            \EndIf

            \If{$m == current$}
                \State $currentcount \gets currentcount + 1$
            \Else
                \If{$next == null$}
                    \State $next \gets m, nextcount \gets 1$
                \Else
                    \If{$m == next$}
                        \State $nextcount \gets nextcount + 1$
                    \Else
                        \State Add a right bracket to $rightbracket$
                        \State Add a left bracket to $leftbracket + currentcount - nextcount$
                        \State $next \gets m, currentcount \gets currentcount - nextcount, nextcount \gets 1$
                    \EndIf
                \EndIf
            \EndIf

        \EndWhile
   
    \end{algorithmic}
    \end{algorithm}
   Since we add at most 1 bracket for each loop, and the loop stops in $n$ iterations, we add at most $n$ brackets, so $L$ is level $O(n)$ with respect $\{a,b,c\}$, and contains at most $n$ brackets.
        
   \end{proof}

    \noindent{\bf Example 1}

    For string $s = aaaabbcacaaabbcc$,  by using above algorithm, we can add the brackets so as to have $$s = \langle a\langle a \langle aabb \rangle_1 c \rangle_2 \langle ac \rangle_1 a\langle aabb \rangle_1 cc \rangle_3.$$
    
    It shows that $s$ can be called a level 3 bracket with respect to $\{a,b,c\}$(the level of each bracket is marked in the corresponding right bracket).

   \begin{lemma}
      A level $n$ string with respect to $\{a, b, c\}$ and with $O(n)$ brackets is an enhanced level $\ell$ string, where $\ell \le O(\log n)$.
   \end{lemma}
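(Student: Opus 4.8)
The plan is to prove, by strong induction on the number of brackets, the sharper statement: \emph{every string that carries a level structure with $m$ brackets also carries an enhanced level structure of enhanced level at most $c\log(m+2)$ for an absolute constant $c$}; since the preceding lemma gives $m=O(n)$ for our $L$, this yields enhanced level $O(\log n)$. The reason a logarithm is the right answer is that the plain notion of level behaves like the depth of an arbitrary rooted tree — a single nested chain $\langle\alpha\langle\alpha\langle\cdots\rangle\beta\rangle\beta\rangle$ already has level linear in its length — whereas the enhanced notion forces every bracket of enhanced level $j$ to contain at least two sub-brackets of enhanced level exactly $j-1$, so it behaves like the depth of a binary tree. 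Thus a structure built from $m$ brackets ought to be re-organizable to enhanced level $O(\log m)$, but no better: for $L_0=\alpha\beta$, $L_{i+1}=\alpha\langle L_i\rangle\langle L_i\rangle\beta$ the bracket count and the length are both $\Theta(2^i)$ while the enhanced level is $\Theta(i)$. So the whole point is to \emph{re-bracket} $L$ from scratch (the level brackets are not sacred), balancing the resulting tree.

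The engine is a balanced re-bracketing step for a single bracket $B$ with skeleton $\alpha^k\beta^k$ (the $\alpha^k\gamma^k$ case is symmetric). Read the content of $B$ as a Dyck-type word with $\alpha$ the opening symbol and $\beta,\gamma$ the closing symbols; the level structure guarantees its height function stays $\ge 0$ and returns to $0$, and that each sub-bracket is again such a word. The Dyck matching of $B$'s own skeleton produces, for each $0\le a\le k$, the contiguous range $\rho_a$ whose skeleton is exactly $\alpha^a\beta^a$ — a \emph{peak-straddling} range, which has a balanced skeleton and is therefore eligible to be wrapped as an enhanced block; these are essentially the only nontrivial "merges" available, apart from collapsing a maximal run of consecutive sub-brackets with no skeleton symbol between them into one empty-skeleton block. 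Instead of nesting $\rho_0\subset\rho_1\subset\cdots\subset\rho_k$ as a chain, which would reproduce a linear enhanced level, I would split off a central region $\rho_a$ and the surrounding material, recurse on each with its inherited level structure, and glue with $O(1)$ extra enhanced levels, dispatching each sub-bracket of $B$ to the piece that contains it and keeping isolated sub-brackets as direct children of the new top bracket. If one can always arrange that each recursive call receives at most (roughly) half of $B$'s brackets, the recursion depth, and hence the enhanced level, is $O(\log(m+2))$; the base cases (a bracket with $O(1)$ sub-brackets, and a bracket whose character content is already $\alpha^k\{\beta,\gamma\}^k$) are immediate.

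The step I expect to be the main obstacle is precisely reconciling two competing demands in that recursion. To halve the bracket count one wants a \emph{positional} split of $B$'s sub-brackets, but the prefix it cuts off has skeleton $\alpha^a\beta^b$ with $a\ge b$, which is unbalanced and hence not a legal enhanced block; to keep the pieces legal one wants a \emph{Dyck/peak-straddling} split, but then the central $\rho_a$ may still contain all of $B$'s sub-brackets. Moreover, the enhanced definition forbids a bracket whose children have a unique maximum enhanced level, and also forbids any sub-bracket lying between two symbols of the closing block $z$, so a careless split produces illegal blocks. The technical core will therefore be a bookkeeping argument that, whenever a positional split leaves an excess of $a-b$ opening symbols or a lopsided pair of deepest pieces, re-apportions those extra $\alpha$'s and the dispatched sub-brackets between the central region and the surrounding annulus so that (i) both pieces acquire balanced skeletons, (ii) the two deepest pieces end with equal enhanced level, and (iii) the bracket count genuinely halves — all at a cost of $O(1)$ enhanced levels per recursion step. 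The $\ge 2$-children property inherited from the level decomposition, together with the fact that every sub-bracket of $B$ already sits at a well-defined Dyck height, is what I expect to make this re-apportionment always possible, and verifying the split-in-half accounting is what pins the final enhanced level at $O(\log n)$.
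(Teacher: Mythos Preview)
Your proposal is a plan, not a proof: you correctly isolate the central difficulty of your approach --- making a positional split that halves the bracket count while keeping both pieces legal enhanced blocks, with two deepest children of equal enhanced level and no sub-bracket landing inside the closing run $z$ --- but you never actually carry out the ``bookkeeping argument'' you say would resolve it. All of the real work is deferred to that missing step, and it is not obvious it can be done: an unbalanced prefix $\alpha^a\beta^b$ with $a>b$ is genuinely not a legal enhanced block, and ``re-apportioning the extra $\alpha$'s'' across a recursive call is exactly the kind of thing that tends to destroy the halving guarantee.

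More importantly, the whole divide-and-conquer rebuild is unnecessary. You already have the right intuition --- enhanced level behaves like the depth of a tree in which every internal node has at least two children, so $m$ nodes force depth $O(\log m)$ --- but you missed that the \emph{existing} level bracketing can simply be pruned in place rather than rebuilt from scratch. The paper's argument is just this: walk the level-bracket tree top-down, and whenever a bracket $p$ has a unique maximal sub-bracket $q$ (so $q$ contains every other bracket inside $p$), delete $q$, absorbing its skeleton and children into $p$. After all such chain contractions, every surviving non-leaf bracket has at least two maximal sub-brackets; the merged skeletons are still of the form $\alpha^k\{\beta,\gamma\}^k$ (this is precisely why the enhanced definition relaxes the closing run from $\beta^k$ or $\gamma^k$ to $\{\beta,\gamma\}^k$), so the result is a valid enhanced bracketing. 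Now the counting you already anticipated finishes the job: an enhanced-level-$\ell$ bracket must contain at least $2^{\ell}$ bracket pairs, and since pruning only removed brackets from the original $O(n)$, we get $\ell=O(\log n)$.

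So the missing idea is not a cleverer split but the observation that no split is needed at all --- contract singleton chains in the tree you already have, then count.
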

   
   \begin{proof}

        By using the following algorithm, we can shift the level $n$ string with $O(n)$ brackets to enhanced level $O(\log n)$.
        
        We consider these pairs of brackets that are not contained in any other pair of brackets. 
        For every such pair, denote its level by $h = O(n)$. 
        For $k = h, h - 1, ... 2$, whenever inside a level $k$ pair $p$ of brackets there is only one lower level pair $q$ which contains all the other brackets inside $p$, we delete the bracket pair $q$.

        By definition of enhanced level $\ell$ strings, one can see that the above operation gives us an enhanced level $\ell$ string. 
        Notice that  if after the operation we get a string that is an enhanced level $\ell$ string, then the original string also has to be an enhanced level $\ell $ string.
        So we only need to show that $\ell = O(\log n)$.
        
        By definition, an enhanced level $\ell$ pair of brackets contains at least $2^h$ pairs of brackets. 
        Notice that the above algorithm provides us with an enhanced pair of brackets  that contains $O(n)$ brackets since we only delete brackets. 
        So  $\ell$ has to be $ O(\log n)$.

   \end{proof}

    \noindent{\bf Example 2}
   
    For string $s = aaaabbcacaaabbcc$, by using the above algorithm to transform the level 3 bracket, we get the partition $$s = \langle a \langle aaabbc\rangle_1 \langle ac\rangle_1 a \langle aabb \rangle_1 cc \rangle_2.$$ 
    It shows that $s$ can be called an enhanced level 2 bracket with respect to $\{a,b,c\}$(the level of each bracket is marked in the corresponding right bracket).

    A restricted common subsequence $s \in \{0, 1\}^t$ between two equal lengths strings in $\{0, 1\}^n$ for some $ t\le n$, is defined to be s.t. both strings have a subsequence $s$ and for each $s_i$, if $s_i = 1$, then the corresponding $1$ symbols in the two strings must have the same position.
    Recall that $L'$ is string of $\{a, b, c, d\}$ while each one of $a, b, c, d$ is actually a 3 dimension vector in $\{0, 1\}$. 
    So $L'$ actually represents  3 strings, i.e.  we can regard $L'$ as a $3$ by $n$ matrix, each row is a string. 
    Also, recall that the string $L$ is generated by using $L'$ and deleting the $d$ symbols inside.
    Let the sum of lengths of the three pairwise restricted common subsequences between the three strings represented by $L'$,  be $t + n$, where we define $t$ as the additional value of the string $L'$ or $L$.
   
    In what follows, we use the ``level structure'' of string $L$ to prove the additional value of $L$ is $\Omega(n/ \log n)$. 

    It follows from the definition that the substrings and brackets contained in an enhanced level $\ell$ bracket, has the form $a^{i_0}\langle...\rangle a^{i_1}\langle...\rangle...a^{i_m}\{b,c\}^{\sum_{j = 0}^m i_j}$, for some $m \geq 2$. In the following lemma, we will show that the additional value is larger than $\frac{3n}{16\log n}$ by showing we can either compute the additional value by ignoring all the inner brackets, or by taking the summation of the additional values of inner brackets that are not contained by any other inner brackets. 
    Note that the enhanced level of  inner brackets are no more than $\ell - 1$. 
    So we can repeat the process $\ell$ times and have a desired lower bound.
    We phrase the idea in detail as the following.

    \begin{lemma}\label{lem:addtional value}
         For any $\ell$, the additional value of 
         an enhance level $\ell$ string,
         is at least a quarter of the inner length or the sum of additional values of lower level bracket pairs which are only contained by the level $\ell$ bracket pair.
    \end{lemma}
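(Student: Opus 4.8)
The plan is to prove both inequalities hidden in the ``or'' — so that in fact the additional value is at least the maximum of the two — by exhibiting, for each option, an explicit triple of restricted common subsequences of the three rows $X,Y,Z$ of the $3\times N$ matrix attached to the bracket, and then reading off the additional value as their total length minus $N$, where $N$ is the length of the bracket content. Note first that the content contains no $d$-column, so at every column exactly one of the three pairs agrees, and hence the trivial triple that matches all agreeing positions of each pair already has total length exactly $N$. First I would expand the enhanced level $\ell$ structure: deleting the inner bracket pairs (each of enhanced level $\le\ell-1$, at least two of level exactly $\ell-1$, so $m\ge 2$ of them) together with their contents leaves $\alpha^{i_0}\alpha^{i_1}\cdots\alpha^{i_m}z$ with $z\in\{\beta,\gamma\}^K$, $K=\sum_{j=0}^m i_j$, and no inner bracket inside $z$; the inner length is $2K$. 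Since $\alpha,\beta,\gamma$ are the distinct vectors $a=(0,1,1)^T,b=(1,0,1)^T,c=(1,1,0)^T$ in some order, on each native run $\alpha^{i_j}$ exactly one row is the constant $0$ and the other two are the constant $1$, while on $z$ that distinguished row is the constant $1^K$ and the other two rows read a complementary pair $w,\overline w\in\{0,1\}^K$.

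For the bound ``$\ge\sum_j f(B_j)$'', writing $f(B_j)$ for the additional value of the $j$-th maximal inner bracket $B_j$ and $\ell_j=|B_j|$, I would keep each $B_j$ and on its index interval install the triple of restricted common subsequences witnessing $f(B_j)$, while on the native columns I match every agreeing pair. At a native column exactly one pair agrees, and its matched symbol sits at the same index in both rows, so the $1$-alignment constraint costs nothing; moreover the inner brackets and native columns occupy disjoint intervals that appear in the same left-to-right order in all three rows, so the local pieces concatenate into valid global restricted common subsequences with all $1$-matches still at equal positions. Summing, the total length is $2K+\sum_j(\ell_j+f(B_j))$; since $N=2K+\sum_j\ell_j$, the additional value is $\sum_j f(B_j)$.

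For the bound ``$\ge\tfrac14\cdot 2K=K/2$'' I would again take the all-agreements triple (total $N$) as a baseline, do only trivial matchings inside every $B_j$, and extract the surplus $K/2$ from the native part alone. The mechanism is shift-matching zeros: for the pair consisting of the distinguished $0$-row with one of the other two native rows, one slides a length-$i$ prefix of that row's $K$ native zeros (which live on the runs $\alpha^{i_0},\dots,\alpha^{i_m}$) against the zeros appearing in $w$ (resp.\ $\overline w$) on the $z$-block, then appends the $1$-matches that can still be aligned on $z$; for the third native pair one uses the $K$ aligned ones on the native runs plus a common subsequence of $\min(\#0(w),\#1(w))$ zeros on $z$. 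A short optimization over $i$ shows the three native restricted common subsequences together beat the baseline $2K$ by at least $K/2$, the balanced case $w=(01)^{K/2}$ being essentially tight. Taking whichever of the two constructions is larger establishes the lemma; feeding it into the recursion — the enhanced level of $L$ is $O(\log n)$, and the inner lengths of the brackets sum to $|L|$, so some ``cut'' of brackets has total inner length $\Omega(|L|/\log n)$ — yields additional value $\Omega(n/\log n)$, as needed for Theorem~\ref{thm:toolthm}.

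The step I expect to be the real obstacle is the second construction. One must check that the surplus $K/2$ survives the fact that the $0$-row's $K$ native zeros are not contiguous but interleaved with the inner-bracket contents; that the sliding prefix match is order-compatible with the $1$-alignments on $z$ despite a parity mismatch (a naive split $0^{K/2}1^{K/2}$ fails to embed, so one must use $0^i1^{K/2-i+O(1)}$ for a carefully chosen $i$); and that matching agreements inside the $B_j$'s never blocks these choices. By contrast the first construction's only subtle point — stitching the per-bracket restricted common subsequences together — is routine, because every bracket occupies one common index interval in all three rows, so ``same position'' is preserved automatically under concatenation.
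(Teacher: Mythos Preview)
Your overall plan --- prove both alternatives and take the maximum --- is what the downstream recursion actually needs, and your argument for the bound $\sum_j f(B_j)$ via block-wise concatenation of restricted common subsequences is correct and cleaner than the paper's one-sentence version of the same idea.

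For the $K/2$ bound, you and the paper take the same route: reduce to the native string $\alpha^{K} z$ with $z\in\{\beta,\gamma\}^{K}$, compute the three pairwise restricted LCS there, and read off surplus $\max(b_2,K-b_2)\ge K/2$. (The paper's displayed expressions $\max\{b_1,c_1\}+\max\{b_2,k-b_2\}$ etc.\ are hard to parse --- no consistent reading of $a_1,b_1,c_1$ makes all three formulas and the identity $a_1=b_1+c_1$ simultaneously correct --- but the intended native computation is the one you describe.)

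The obstacle you flag is genuine, and the paper does not address it either. Your stated plan for the $K/2$ option is ``do only trivial matchings inside every $B_j$, and extract the surplus $K/2$ from the native part alone''. This fails monotonicity as written: the $0$-shift for the pair (distinguished row, second row) places the distinguished row's endpoints at $a$-run positions interspersed with the $B_j$'s, while the second row's endpoints lie in $z$, after every $B_j$; any same-position trivial match inside an intervening $B_j$ then sits \emph{after} the shift match in the distinguished row but \emph{before} it in the second row. So the two ingredients cannot be combined globally. What one actually gets from ``trivial on every $B_j$ plus best on $z$'' is only surplus $\min(b_2,K-b_2)$ (coming entirely from the $0$-matches of the non-distinguished pair on $z$), not $\max(b_2,K-b_2)$. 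The paper's proof computes on the native part alone and never explains how the inner brackets are handled, so the gap you anticipate in your own proposal is shared with the paper; neither argument, as written, establishes the $K/2$ alternative for $\ell\ge 2$.
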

    
    \begin{proof}
        By definition, the string inside an enhanced level $\ell$ bracket pair must be in the form $a^{i_0}\langle...\rangle a^{i_1}\langle...\rangle...a^{i_m}\{b,c\}^{\sum_{j = 0}^m i_j}$, for some $m \geq 2$.
        We denote the number of $a,b,c$ inside the brackets to be $a_1,b_1,c_1$, then $a_1 = b_1 + c_1$.
        Assuming the number of $b$ in $\{b,c\}^k$ to be $ b_2 $.
        Recall that a string here represents a $3$-row matrix where each row is a binary string.
        When we count the 0s between the first row and the second row, we have value $\max\{b_1,c_1\} + \max\{b_2,k - b_2\}$. 
        For the first row and the third row, we have value $\max\{b_1,c_1\} + \max\{b_2, k - b_2\}$.
        For the second and the third row, we have value $a_1 + k + \min\{b_2,k - b_2\}$. 
        One can sum them up and deduce that the additional value $t$ is at least $\max\{b_2, k - b_2\} \geq \frac{k}{2}$.
        Notice that the inner length is $2k$ by definition.
        
        On the other hand, consider a sequence of lower-level brackets that are contained by each other. 
        Notice that the additional value of the whole string is at least the summation of the additional values of  the strings covered by these lower-level brackets.
        So we get our proof.
         
    \end{proof}

    Since $L$ is a  string with enhanced level  $\ell = O(\log n)$ and the length of $L$ is   $v \in [\frac{3n}{4}-\frac{9n}{16 \log n}, \frac{3n}{4}+\frac{9n}{16 \log n}]$, we denote the sum of inner lengths of enhanced level $i$ brackets for $L$, as $a_i$.
    So $\sum_{i = 1}^{\ell} a_i = v$.
    \cref{lem:addtional value} shows that the additional value is no less than $\max_{i = 1,...,\ell} a_i \ge \frac{v}{\log n}$. 
    By the definition of additional value, we can find three restricted common subsequences with total length $\ge \frac{3n}{4} + v/\log n$.
    Recall that we deleted the columns $(0, 0, 0)^T$ in the matrix generated by $x,y,z$.
    The number of them is $n-v$.
    So we can construct our final common subsequences by adding back the corresponding zeros to the restricted common subsequences.
    Hence their total length is  $ \frac{3n}{4} + v/\log n +3(n-v)$.
    This implies that there exists a pair of $x,y,z$ such that the $\lcs$ of this pair is at least $\frac{n}{2}+\frac{3n}{16\log n} $.
    This shows \cref{thm:toolthm}.

    To show \cref{thm:lower bound main}, suppose it does not hold, i.e.  the largest pairwise $\lcs$ between $0^n,a,b,c,a+b,a+c,b+c,a+b+c \in \F_2^n$ to be no larger than $\frac{n}{2} + \frac{3n}{16\log n}$.
    Then by \cref{lem:toolthmprelem}, one can find the $x,y,z$ in \cref{thm:toolthm}.
    Hence by \cref{thm:toolthm},  there exists a pair of $x,y,z$ such that the $\lcs$ of this pair is at least $\frac{n}{2}+\frac{3n}{16\log n} $.
    This is a contradiction.

\section{Conclusion and Open Problems} \label{sec:open}
In this paper we give explicit constructions of linear insdel codes that can either achieve close to optimal rate or close to optimal faction of correctable errors. A natural open question is to improve the trade-off of the parameters in our constructions. For example, one ultimate goal would be to construct linear insdel codes that approach the half-singleton bound over some constant size alphabet. Another open question is to prove or refute our conjecture. Finally, it would also be interesting to determine the best possible fraction of errors a binary (or over any alphabet $\F_q$) linear insdel code can correct while still mainating a positive rate.

\bibliographystyle{alpha} 
\bibliography{ref}

\end{document}